\DeclareMathOperator{\Op}{Op}
\DeclareMathOperator{\Ran}{Ran}
\DeclareMathOperator{\Tr}{Tr}
\DeclareMathOperator{\tr}{tr}
\DeclareMathOperator{\sgn}{sgn}
\DeclareMathOperator{\supp}{supp}
\newcommand{\norm}[1]{\left\lVert#1\right\rVert}
\newcommand{\vertiii}[1]{\left\lvert\kern-0.25ex\left\lvert\kern-0.25ex\left\lvert #1 \right\rvert\kern-0.25ex\right\rvert\kern-0.25ex\right\rvert}
\newtheorem{proposition}{Proposition}
\newtheorem{lemma}[proposition]{Lemma}
\newtheorem{theorem}[proposition]{Theorem}
\numberwithin{proposition}{section}
\newcommand{\codim}{\text{codim}}
\newcommand{\eps}{\varepsilon}
\newcommand{\aver}[1]{\langle #1 \rangle}
\newcommand{\ad}{\text{ad}}
\newcommand{\fs}{\mathfrak{S}}
\title{Asymmetric transport for magnetic Dirac equations}
\author{Solomon Quinn and Guillaume Bal}
\begin{document}

\maketitle

\begin{abstract}
    This paper 
    concerns the asymmetric transport associated with
    a low-energy interface Dirac model of graphene-type materials subject to external magnetic and electric fields. We show that the relevant physical observable, an interface conductivity, is quantized and robust to 
    a large class of perturbations. 
    These include defects that decay along or away from the interface,
    and sufficiently small or localized changes in the 
    external fields.
    An explicit formula for the interface conductivity is given by a spectral flow.
\end{abstract}

\section{Introduction}\label{sec:intro}
Topological insulators are formed by joining
two insulators together at an interface. When the two insulators are topologically distinct, the resulting system exhibits ``edge modes'' that are localized in the vicinity of the interface and propagate along it.
The transport (e.g. electronic or photonic) associated with these modes is asymmetric, meaning there is a nonzero net flow of signal (for example an electric current). 
Remarkably this asymmetric transport is robust in the presence of defects. 
For examples and applications of such phenomena, see \cite{BH, delplace, sato, Volovik, Witten}.

The first experimental evidence of topologically protected transport was the discovery of the integer quantum Hall effect \cite{klitzing1980new}. This is the phenomenon that the transverse resistivity of a two-dimensional electron gas (at low temperature and subject to a strong perpendicular magnetic field) is quantized (instead of growing linearly in the strength of the magnetic field, as predicted by the classical theory).
More recently, other materials such as graphene \cite{katsnelson2007graphene,neto2009electronic} and twisted bilayer graphene \cite{andjelkovic2018dc,carr2018relaxation,rickhaus2018transport,san2013helical} were shown to admit robust edge modes. 


In this paper, we derive a quantized conductivity
at the interface of two (distinct) sheets of graphene (or other similar materials).
In the absence of external electric and magnetic fields, the edge state dynamics 
is governed
by a two-dimensional Schr\"odinger Hamiltonian 
whose coefficients obey the appropriate honeycomb symmetry away from the interface.
It was shown in \cite{1,drouot2019characterization,FLW-ES-2015} that at low energies, 
this Hamiltonian
is well approximated by the following Dirac operator, 
\begin{align*}
    H_D = D_x \sigma_1 + D_y \sigma_2 + m \sigma_3; \qquad D_\alpha := -i \partial_\alpha, \quad m \in \fs (m_-, m_+), \quad 0 \ne m_\pm \in \mathbb{R}.
\end{align*}
Here,
$\fs (a,b) := \cup_{\alpha < \beta} \fs(a,b;\alpha,\beta)$ with
$\fs (a,b;\alpha,\beta)$ the set of smooth functions $f: \mathbb{R} \rightarrow \mathbb{R}$ 
such that $f(\lambda) = a$ (resp. $f(\lambda) = b$) whenever
$\lambda \le \alpha$ (resp. $\lambda \ge \beta$), and the Pauli matrices are given by
\begin{align*}
    \sigma_1 = \begin{pmatrix}0 &1\\1& 0 \end{pmatrix}, \qquad \sigma_2 = \begin{pmatrix}0 &-i\\i& 0 \end{pmatrix}, \qquad \sigma_3 = \begin{pmatrix}1 &0\\0& -1 \end{pmatrix}.
\end{align*}
Observe that the interface material described by $H_D$ is
made up of two insulators (given by $H_\pm := D_x \sigma_1 + D_y \sigma_2 + m_\pm \sigma_3$) that are glued together 
along a one-dimensional interface, say $\{m(x) = 0\}$.
The \emph{mass term} $m$ models the transition from one insulator to the other.
Although $H_\pm$ has a spectral gap in the interval $(-|m_\pm|, |m_\pm|)$, 
the interface material is a conductor whenever $m_+$ and $m_-$ have opposite sign.
In this case $H_D$ no longer has a spectral gap, and the energies in the interval $(-m_0, m_0)$ with $m_0 := \min\{|m_-|, |m_+|\}$ correspond to propagating edge modes as described above.
The asymmetric transport associated with this model is quantified by the following interface conductivity,
\begin{align*}
    \sigma_I (H_D) := \Tr i [H_D,P] \varphi ' (H_D).
\end{align*}
Here, $P(y)=P \in \fs (0,1)$ and $\varphi \in \fs(0,1;-m_0,m_0)$. 
Thus $\varphi '$ is a density of states whose support is contained in the interval $(-m_0,m_0)$.
See \cite{2, 3, BH, Drouot, Elbau, elgart2005equality,QB, SB-2000} for physical derivations of $\sigma_I$ as a conductivity, and for rigorous results on its stability in various settings. Explicit formulas for $\sigma_I$ (that apply to $H_D$ among other models) are derived in \cite{3,4,bal2022mathematical,QB}.
In \cite{2} 
it was shown that $2\pi\sigma_I (H_D) = \text{SF} (H_D; \alpha)= \frac{1}{2}\sgn (m_- - m_+)$, 
where $\text{SF} (H_D; \alpha)$ is the \emph{spectral flow of $H_D$ through $\alpha$} (more on this below) and $\alpha \in (-m_0, m_0)$.
Observe that the 
quantity $\sgn (m_- - m_+)$ is independent of $\alpha, P$ and $\varphi$, and is robust with respect to changes in $m$.



The goal of this paper is to extend the above
results for $H_D$ to Dirac operators with electric and magnetic fields; we will explicitly calculate the interface conductivity by relating it to a spectral flow, and prove its stability in the presence of perturbations. The electromagnetic Dirac operators are given by
\begin{align}\label{eq:md}
    H = D_x \sigma_1 + (D_y - A_2(x)) \sigma_2 + m(x) \sigma_3 + V(x) \sigma_0,
\end{align}
where $A_2 (x) = x B (x)$ 
and
\begin{align}\label{eq:dw}
    B \in \fs (B_-, B_+), \qquad m \in \fs (m_-, m_+), \qquad V \in \fs (V_-, V_+)
\end{align}
for some constants $B_\pm, m_\pm, V_\pm \in \mathbb{R}$ with $B_\pm \ne 0$.
Here, 
$\sigma_0$ is the $2 \times 2$ identity matrix.
The vector-valued function $A = (0, A_2)$ is the magnetic potential (in the Landau gauge), with $(B+xB') \hat{z} = \nabla \times A$ the magnetic field. The function $V$ represents the electric potential. See \cite[Section 4.2]{thaller2013dirac} for a derivation of these models.

We see that the Hamiltonian \eqref{eq:md} implements three \emph{domain walls} given by \eqref{eq:dw}. 
The magnetic domain wall $B$ gives rise to the unbounded term $A_2 (x)$, while the functions $m$ and $V$ are necessarily bounded.
We have already discussed the domain wall in $m$ as a mechanism for transport. To motivate the domain wall in $B$, we provide an illustrative example.

Consider a two-dimensional electron gas confined to the right-half plane and subject to a constant (and strong) orthogonal magnetic field. Away from the edge $\{x = 0\}$ each electron will move in a (small) circular path, 
meaning the gas is insulating in its bulk.
However the presence of the edge causes nearby electrons to propagate, giving rise to a current along the boundary.
The same can be said of a (no longer confined) two-dimensional electron gas subject to an orthogonal magnetic field that changes signs across the $y-$axis.

The existing literature has analyzed separately the roles of a mass term \cite{2,3} and magnetic field \cite{cornean2022bulk} in generating asymmetric transport for Dirac models.
To our knowledge, this paper is the first to combine the two effects.
We also include
the domain wall in $V$, which gives rise to an electric field $E = -\nabla V$ perpendicular to the interface that vanishes whenever $|x|$ is sufficiently large.




The spectral properties of $H$ are acquired from the following \emph{bulk Hamiltonians},
\begin{align*}
    H_\pm = D_x \sigma_1 + (D_y - xB_\pm) \sigma_2 + m_\pm \sigma_3 + V_\pm \sigma_0,
\end{align*}
which model the two insulating materials that are glued together to form the conductor with Hamiltonain $H$.
Since $H$ and $H_\pm$ are translation-invariant in $y$, we can define their Fourier transforms by
\begin{align*}
    \hat{H} (\zeta) &= D_x \sigma_1 + (\zeta - A_2(x)) \sigma_2 + m(x) \sigma_3 + V(x) \sigma_0,\\ 
    \hat{H}_\pm (\zeta) &= D_x \sigma_1 + (\zeta - xB_\pm) \sigma_2 + m_\pm \sigma_3 + V_\pm \sigma_0 
\end{align*}
for $\zeta \in \mathbb{R}.$
It is known, see e.g. \cite{4,Bony,thaller2013dirac}, that 
each operator $\mathcal{O} \in \{H, H_+, H_-, \hat{H} (\zeta), \hat{H}_+ (\zeta), \hat{H}_- (\zeta)\}$ defined above is self-adjoint with domain of definition $\mathcal{D} (\mathcal{O}) = (i-\mathcal{O})^{-1} \mathcal{H}$. Here, $\mathcal{H} := L^2 (\mathbb{R}^d) \otimes \mathbb{C}^2$, where $d = 2$ for $H$ and $H_\pm$, and $d=1$ for $\hat{H}$ and $\hat{H}_\pm$. 
Note that for a similar edge model, it was shown that the domain of definition of the Hamiltonian depends on the strength of the magnetic field \cite{cornean2022bulk}. This suggests that $\mathcal{D} (\mathcal{O})$ likely depends on $B_\pm$, though we do not investigate this issue here.

Throughout this paper,
we denote the spectrum of an operator $\mathcal{O}$ by $\sigma (\mathcal{O})$ and the resolvent set of $\mathcal{O}$ by $\rho (\mathcal{O})$.
Suppose temporarily that $V_\pm = 0$, as these constants contribute merely a uniform shift to the spectrum of $H_\pm$. Then $\hat{H}^2_\pm (\zeta) = D^2_x + (\zeta - x B_\pm)^2 + m_\pm^2 - B_\pm \sigma_3$ is block diagonal, meaning that up to shifts by $m^2_\pm - B_\pm$ and $m^2_\pm + B_\pm$, the spectra of $\hat{H}^2_\pm (\zeta)$ and $L_\pm (\zeta) := D^2_x + (\zeta - x B_\pm)^2$ are the same. But $L_\pm (\zeta)$ is the Hamiltonian for the quantum harmonic oscillator (up to rescaling) and has spectrum consisting entirely of eigenvalues and given by $\sigma (L_\pm (\zeta)) = \{(2k+1)B:k\in\mathbb{N}\}$.
The elements of $\sigma (L_\pm (\zeta))$ are known as ``Landau levels.''
Since $\sigma (L_\pm (\zeta))$ (and hence $\sigma (\hat{H}_\pm (\zeta))$) is independent of $\zeta$, 
it follows that $\sigma (H_\pm)$ is made up of a countable collection of points (all corresponding to essential spectrum now) going to infinity
in absolute value; see Lemma \ref{lemma:spectrumhat}.
The domain wall in $B$ can cause spectral gaps between Landau levels to close, with eigenvalues of $\hat{H}^2 (\zeta)$ potentially going to infinity as $\zeta \rightarrow \pm\infty$; see Lemma \ref{lemma:muInfty}.

When spectral branches of $\hat{H} (\zeta)$ do not go to infinity, they converge to elements of $\sigma (H_\pm)$ as shown by Lemma \ref{lemma:mulim}.
Each branch converges both as $\zeta \rightarrow \infty$ and $\zeta \rightarrow -\infty$ 
in the case that $B_+$ and $B_-$ have the same sign. When $B_- < 0 < B_+$ (resp. $B_+ < 0 < B_-$), the branches converge only as $\zeta \rightarrow \infty$ (resp. $\zeta \rightarrow -\infty$).
Hence no matter the signs of $B_+$ and $B_-$, the branches of spectrum do not go to infinity as $|\zeta| \rightarrow \infty$.
This means $H$ lacks the \emph{ellipticity} that is assumed in \cite{3,4,bal2022mathematical,Drouot,QB}.
Indeed,
the term $(D_y - A_2 (x)) \sigma_2$ has (Weyl) symbol $(\zeta - A_2 (x)) \sigma_2$, which can remain small even when $|\zeta|$ and $|x|$ are large.


Observe that $H^2 = D^2_x + (D_y - A_2 (x))^2 + V(x)D_x \sigma_1 + V(x) (D_y -A_2(x)) \sigma_2 + R$ for $R$ bounded.
The leading-order terms $D^2_x + (D_y - A_2 (x))^2$ are precicely
the \emph{Iwatsuka Hamiltonian} analyzed in \cite{dombrowski2011quantization}, where there are many results analogous to the ones in this paper. Bulk invariants for magnetic Schr\"odinger operators are proposed and analyzed in \cite{ASS94,bellissard1994noncommutative}.
The distinguishing features of our setting are the additional domain walls $m$ and $V$, the lack of definiteness of $H$ (the spectrum of $H$ is not bounded above or below), and the fact that $H$ is a first-order matrix valued differential operator (instead of a second-order scalar one).
As we will see below, the lack of definiteness perhaps provides the biggest challenge; it will be easy to estimate the absolute value of spectral branches of $H$, but obtaining the sign of these branches will require more care.

In \cite{combes2005edge}, the interface conductivity is calculated for magnetic Schr\"odinger Hamiltonians 
with constant magnetic field and confining potentials. 
The interface conductivity is induced by the potentials much like an ``edge conductivity'' would be generated by ``hard wall'' Dirichlet boundary conditions.
A bulk-edge correspondence (involving the edge conductivity) for magnetic Dirac models is proven in \cite{cornean2022bulk}.

A main result in this paper is to derive an explicit expression for $2\pi \sigma_I (H)$ by means of a spectral flow; see Theorems \ref{prop:sf0} and \ref{prop:sf}  below. We show that $2\pi \sigma_I (H)$ is quantized and (for positive energies and constant $m$ and $V$) decreases in uniform increments as the strength of the magnetic field increases. 
This behavior of the interface conductivity resembles 
the integer quantum Hall effect. 
Theorem \ref{prop:sf0} 
is a bulk-interface correspondence in that it
relates the spectral flow of $H$ to a difference of bulk quantities. 
The terms in the difference are not expressed as bulk invariants, such as for instance Chern numbers as introduced in \cite{bellissard1994noncommutative,cornean2022bulk}. This issue, addressed for non-magnetic Dirac Hamiltonians in \cite{2,3}, is not considered further here.
For existing results on the bulk-interface correspondence involving a spectral flow, see also \cite{2,fukui2012bulk}.\\

Below is a brief summary of this paper.
We 
first define the spectral flow, 
and in doing so give an outline of Section \ref{sec:spec}.
We refer 
to \cite{phillips1996self} for a more generally applicable definition.
The set $\{\hat{H} (\zeta) : \zeta \in \mathbb{R} \}$ is a one-parameter family of self-adjoint operators,
with $\hat{H} (\zeta)$ holomorphic in $\zeta \in \mathbb{C}$
(see \cite[Chapter VII.1.1]{kato2013perturbation} for a precise definition of holomorphic operators).
We will show with Lemmas \ref{lemma:spectrumhat} and \ref{lemma:muanalytic} that the spectrum of $\hat{H} (\zeta)$ consists entirely of eigenvalues $\{\mu_j (\zeta)\}_{j \in \mathbb{Z}}$, where each $\mu_j : \mathbb{R} \rightarrow \mathbb{R}$ is analytic.
It turns out that for any $\alpha \in \rho (H_+) \cap \rho (H_-)$, there exists $\zeta_\alpha > 0$ such that no branches attain the value $\alpha$ when $|\zeta| > \zeta_\alpha$ (Lemmas \ref{lemma:mulim} and \ref{lemma:muInfty}).
Moreover, the number of branches to ever attain the value $\alpha$ is finite (Lemma \ref{lemma:finite}).

This means we can define the \emph{spectral flow of $H$ through $\alpha$}, denoted $\text{SF} (\alpha)$, to be the signed number of crossings of branches through $\alpha$.
That is, $\text{SF} (\alpha) := N_\uparrow - N_\downarrow$, where $N_\uparrow$ (resp. $N_\downarrow$) is the number of branches that are less than $\alpha$ when $\zeta < -\zeta_\alpha$ and greater than $\alpha$ when $\zeta > \zeta_\alpha$ (resp. greater than $\alpha$ when $\zeta < -\zeta_\alpha$ and less than $\alpha$ when $\zeta > \zeta_\alpha$).
With Theorem \ref{prop:sf0} we compute the spectral flow using the max-min principle \cite{RS4, Teschl} and perturbation theory \cite{kato2013perturbation}. 


In Section \ref{sec:stability}, we relate the interface conductivity to the spectral flow (Theorem \ref{prop:sf}) and prove its stability under a large class of perturbations (Theorems \ref{prop:stabilitysmall}-- \ref{prop:stabilitystrip}). The stability results are proved using pseudo-differential calculus as in \cite{3,4,bal2022mathematical,Drouot,QB}.
But since $H$ is not elliptic,
much of the pseudo-differential operator theory 
used in this existing literature 
does not apply.
Still, we are able to use Beals's criterion \cite[Proposition 8.3]{DS} and specific properties of $H$ (as a first-order differential operator with the spectrum of $H_\pm$ known) to obtain the necessary decay properties; see Lemmas \ref{lemma:resolvent} and \ref{lemma:decayPhiH}.
Similar stability results for (also non-elliptic) magnetic Schr\"odinger operators can be found in \cite{combes2005edge,dombrowski2011quantization}.

Appendix \ref{sec:PsiDO} contains the pseudo-differential calculus (and Helffer-Sj\"ostrand formula) that is needed for Section \ref{sec:stability}.

\section{Spectral analysis}\label{sec:spec}
In this section we calculate the spectral flow $\text{SF} (H; \alpha)$ for $\alpha \in \rho (H_+) \cap \rho (H_-)$.
This involves analyzing the limiting behavior of the branches of spectrum $\mu_j (\zeta)$ of $\hat{H} (\zeta)$ as $|\zeta| \rightarrow \infty$.
The multisets
$S_\pm := \{\lim_{\zeta \rightarrow \pm \infty} \mu_j (\zeta)\}_{j\in \mathbb{Z}}$
are determined using a standard max-min argument. Perturbation theory is then used to match elements in $S_+$ with those in $S_-$ to determine the quantities $\lim_{\zeta \rightarrow \pm \infty} \mu_j (\zeta)$ for every $j$.
We now state the main result, which can be interpreted as a bulk-interface correspondence.
\begin{theorem}\label{prop:sf0}
Fix $\alpha \in \rho (H_+) \cap \rho (H_-)$. Then $\text{SF} (H;\alpha)= I (H_-;\alpha) -I(H_+;\alpha)$, where
$$I (H_\pm; \alpha) = \sgn (B_\pm)\sgn (\alpha - V_\pm - m_\pm\sgn (B_\pm)) \Big(N(H_\pm;\alpha) + \frac{1}{2}\Big)$$ and \begin{align*}
    N(H_\pm;\alpha)=
    \begin{cases}
    0; & |\alpha-V_\pm|< \sqrt{2|B_\pm| + m_\pm^2},\\
    k; & \sqrt{2 k |B_\pm| + m_\pm^2} < |\alpha-V_\pm| < \sqrt{2 (k+1) |B_\pm| + m_\pm^2}, \quad k \in \mathbb{N}_+.
    \end{cases}
\end{align*}
\end{theorem}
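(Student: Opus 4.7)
The plan is to compute $\text{SF}(H;\alpha)$ by analyzing the asymptotic behavior of the analytic branches $\mu_j(\zeta)$ of $\hat{H}(\zeta)$ as $\zeta \to \pm\infty$, using the structural lemmas from Section \ref{sec:spec} (Lemmas \ref{lemma:mulim}, \ref{lemma:muInfty}, \ref{lemma:finite}), and then reading off the net signed crossings of $\alpha$.

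First, I record the bulk spectra explicitly. The identity $\hat{H}_\pm(\zeta)^2 = D_x^2 + (\zeta - xB_\pm)^2 + m_\pm^2 - B_\pm\sigma_3$ (up to a scalar shift involving $V_\pm$) is block diagonal, and combined with the harmonic-oscillator spectrum of $D_x^2 + (\zeta - xB_\pm)^2$ it yields $\sigma(H_\pm)$: the simple zero-mode $V_\pm + m_\pm\sgn(B_\pm)$, living only in the $\sigma_3 = \sgn(B_\pm)$ sector, together with the simple Landau levels $V_\pm \pm \sqrt{2k|B_\pm|+m_\pm^2}$ for $k \in \mathbb{N}_+$, all independent of $\zeta$. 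By Lemmas \ref{lemma:mulim} and \ref{lemma:muInfty}, each branch $\mu_j(\zeta)$ either converges to an element of $\sigma(H_+) \cup \sigma(H_-)$ or escapes to $\pm\infty$ in energy; the direction of convergence to $\sigma(H_\pm)$ is $\zeta \to \sgn(B_\pm)\cdot\infty$, dictated by the semiclassical constraint $xB(x) = \zeta$ which localizes eigenfunctions near $x = \zeta/B_\pm$ in the appropriate bulk region. A max-min argument with trial Hermite functions localized in the bulks then shows that each Landau level of $H_\pm$ is the limit of exactly one branch in that direction.

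Using Lemma \ref{lemma:finite}, I write $\text{SF}(H;\alpha)$ as a finite signed sum in which each branch $\mu_j$ contributes $+1$, $-1$, or $0$ according to whether the pair $(\mu_j(-\infty), \mu_j(+\infty))$ (with $\pm\infty$ permitted for escaping branches) straddles $\alpha$ upward, downward, or not. I match this to $I(H_-;\alpha) - I(H_+;\alpha)$ by comparing how both sides jump as $\alpha$ varies in $\rho(H_+)\cap\rho(H_-)$. When $\alpha$ crosses a non-zero-mode Landau level $\lambda \in \sigma(H_\pm)$ from below, the unique branch with limit $\lambda$ at $\zeta\to\sgn(B_\pm)\cdot\infty$ flips its status at that infinity, changing $\text{SF}(H;\alpha)$ by $-\sgn(B_+)$ for $\lambda \in \sigma(H_+)$ and by $+\sgn(B_-)$ for $\lambda \in \sigma(H_-)$; on the formula side, $N(H_\pm;\alpha)$ jumps by $1$ while $\sgn(\alpha - V_\pm - m_\pm\sgn(B_\pm))$ is constant, producing the same jump in $I(H_-;\alpha)-I(H_+;\alpha)$. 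In the zero-mode case $\lambda = V_\pm + m_\pm\sgn(B_\pm)$ the sign factor flips while $N + \tfrac12$ stays equal to $\tfrac12$, and the single-branch transit again produces the matching jump $\pm\sgn(B_\pm)$.

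To anchor the identity, I invoke the reference case of a trivial interface $B_+ = B_-$, $m_+ = m_-$, $V_+ = V_-$: there $H$ may be homotoped (keeping the fixed endpoints $H_+ = H_-$) to the constant-coefficient bulk operator, whose branches are exactly the $\zeta$-independent Landau levels and hence never attain $\alpha \in \rho(H_\pm)$, forcing $\text{SF}(H;\alpha) = 0 = I(H_-;\alpha) - I(H_+;\alpha)$. The general case then follows by continuously deforming the bulk parameters $(B_\pm, m_\pm, V_\pm)$ from this reference to the target and matching the parameter-jumps of both sides (by the same jump argument but with Landau levels of $H_\pm$ crossing a fixed $\alpha$ as parameters vary). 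I expect the half-integer step to be the main obstacle: the $\tfrac12$ contributions in $I(H_\pm;\alpha)$ reflect the single-spinor structure of the zero-mode of $\hat{H}_\pm(\zeta)$, and verifying that each such contribution corresponds to a single zero-mode branch of $\hat{H}(\zeta)$ requires careful spinor-level analysis, particularly in the cases $\sgn(B_+)\sgn(B_-) < 0$ (Cases 2 and 3 of Lemma \ref{lemma:muInfty}) where some branches escape to $\pm\infty$ and must be paired consistently with convergent ones in the signed count.
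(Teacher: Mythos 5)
Your jump-in-$\alpha$ bookkeeping is sound: since each limit value in the multiset $\{\mu_{j,+\infty}\}$ (resp.\ $\{\mu_{j,-\infty}\}$) is attained by exactly one branch, the spectral flow jumps by $-1$ (resp.\ $+1$) each time $\alpha$ crosses such a value, independently of what that branch does at the other infinity, and this matches the jumps of $I(H_-;\alpha)-I(H_+;\alpha)$, including the zero-mode sign flip. (Minor slip: the convergence rule should be $\sigma(H_+)$ attained as $\zeta\to\sgn(B_+)\infty$ but $\sigma(H_-)$ attained as $\zeta\to-\sgn(B_-)\infty$; compare Lemma \ref{lemma:mulim}, cases 3--4. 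Your stated jumps happen to be correct anyway.) The genuine gap is the anchor. Your reference configuration $B_+=B_-$, $m_+=m_-$, $V_+=V_-$ lies in the components $\{B_\pm>0\}$ or $\{B_\pm<0\}$ of the admissible parameter space, so the mixed-sign cases $B_+B_-<0$ cannot be reached by a continuous deformation of $(B_+,B_-)$ without passing through $B_\pm=0$, where the Landau-level structure collapses and the hypotheses of the paper fail. Moreover, a deformation of $B$ changes $A_2(x)=xB(x)$, an \emph{unbounded} perturbation: Lemma \ref{lemma:mulambda} controls only bounded deformations in $m$ and $V$, and nothing in your proposal (or the paper) gives continuity of the branch limits under deformations of $B$. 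Your closing paragraph defers exactly the point at issue: in the mixed-sign case the theorem is equivalent to knowing which branches escape to $+\infty$ versus $-\infty$ at the non-convergent end, and no mechanism in the proposal determines this.

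The paper's resolution, which your argument is missing, is to keep $B$ fixed and anchor instead at the auxiliary gapped operator $\hat H_0$ of Lemma \ref{lemma:gap}: a constant mass $m_0$ with $m_0^2>\norm{A_2'}_\infty$ forces a spectral gap $(-\Delta,\Delta)$ uniformly in $\zeta$, so for $\hat H_0$ the sign of each escaping branch at $\zeta\to-\infty$ is pinned to the sign of its finite limit at $\zeta\to+\infty$ (a branch cannot cross the gap), and SF vanishes for $\alpha$ in the gap. One then homotopes $(m,V)$ from $(m_0,0)$ to the actual data through the two-parameter family $\hat H(\zeta;\lambda_1,\lambda_2)$, using the uniform bounds $|\partial_{\lambda_i}\mu_j|\le\norm{m-m_0}_\infty+\norm{V}_\infty$ of Lemma \ref{lemma:mulambda} to propagate the $\zeta\to-\infty$ behavior and to track the finite limits, with Lemma \ref{lemma:simple} handling the ``trading of signs'' when two limit values collide during the homotopy. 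If you replace your trivial-interface anchor by this gapped anchor (one per sign configuration of $(B_+,B_-)$) and restrict the deformation to the bounded parameters $m,V$, your jump-comparison scheme does close the argument; as written, it does not.
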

Note that $N(H_\pm; \alpha)$ counts the ($\zeta-$independent) number of eigenvalues of $\hat{H}_\pm (\zeta) - V_\pm$ in the interval $(|m_\pm|,|\alpha-V_\pm|)$.
The above values of $\alpha$ for which $\text{SF} (\alpha)$ is not defined are precisely the elements of $\sigma (H_+) \cup \sigma (H_-)$.
Although the above expresses the (integer-valued) spectral flow as a difference of two bulk quantities,  it is unclear whether each bulk quantity may be interpreted as an invariant as in \cite{ASS94,bellissard1994noncommutative}. 

We refer 
to Figures \ref{fig:sf0} and \ref{fig:sf} for an illustration of the branches of spectrum of $H$, under various choices of the parameters $B_\pm, m_\pm$ and $V_\pm$. \footnote{These figures were generated using a standard finite difference approximation of the Hamiltonian with periodic boundary conditions. The periodic Hamiltonian also has spurious eigenvalues (corresponding to eigenfunctions that are localized near the boundary; see \cite{QB} for more details), which are not included in our plots.} For example, in the top right panel of Figure \ref{fig:sf} we see that $\text{SF} (H; \alpha) = 1$ for all $\alpha$ near $0$, while $\text{SF} (H; \alpha) = -1$ for $\alpha$ near $2.5$.\\\\


\begin{figure}[t]
\centering
  \includegraphics [scale=.20] {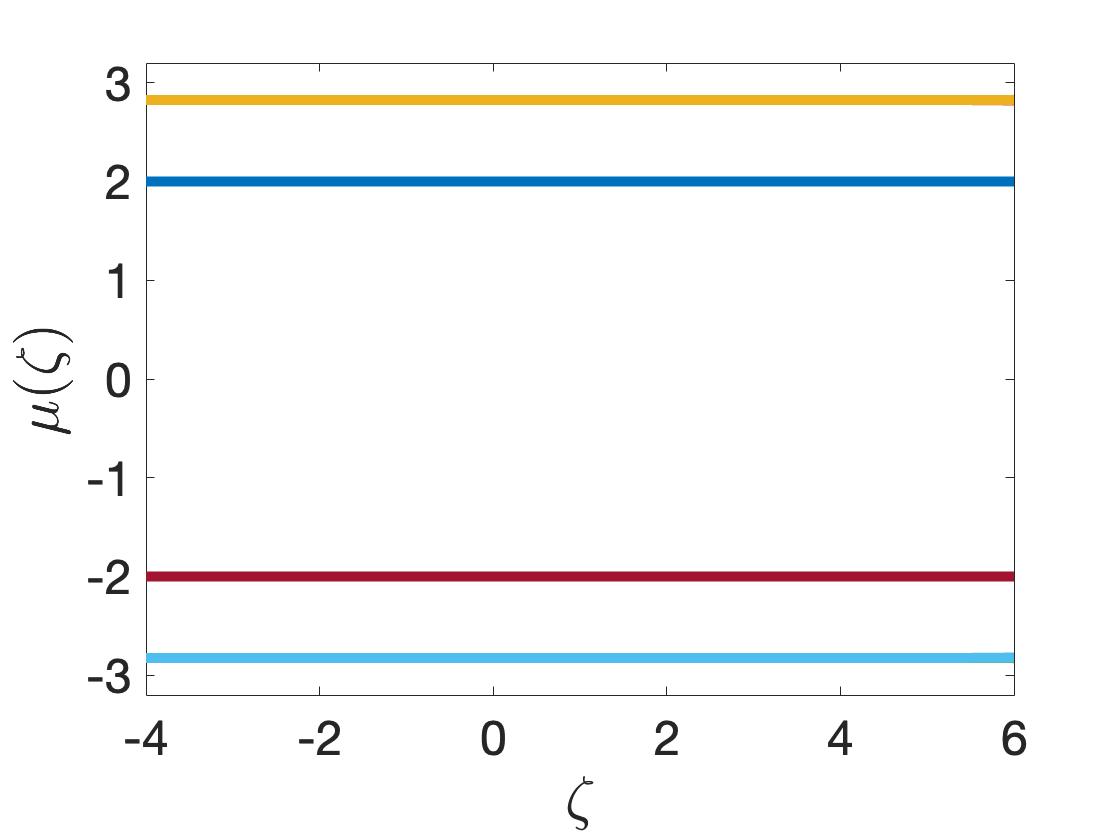}
  \includegraphics [scale=.20] {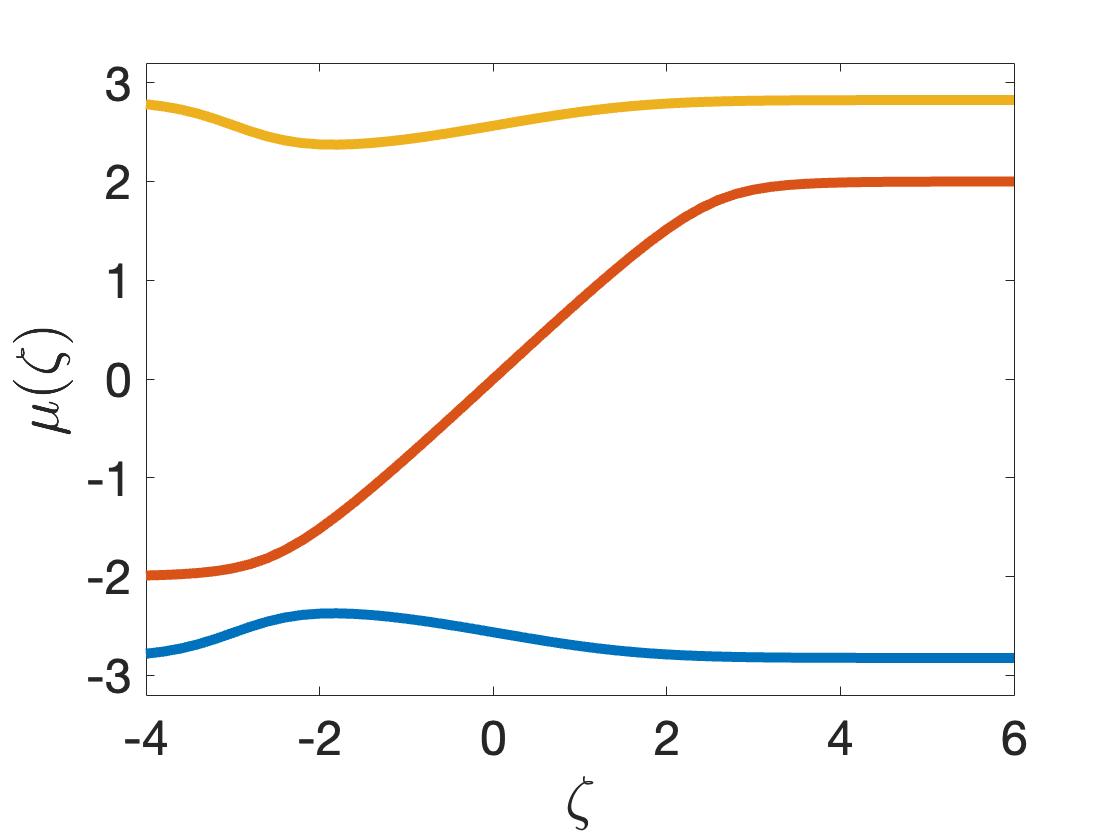}\\
  \includegraphics [scale=.20] {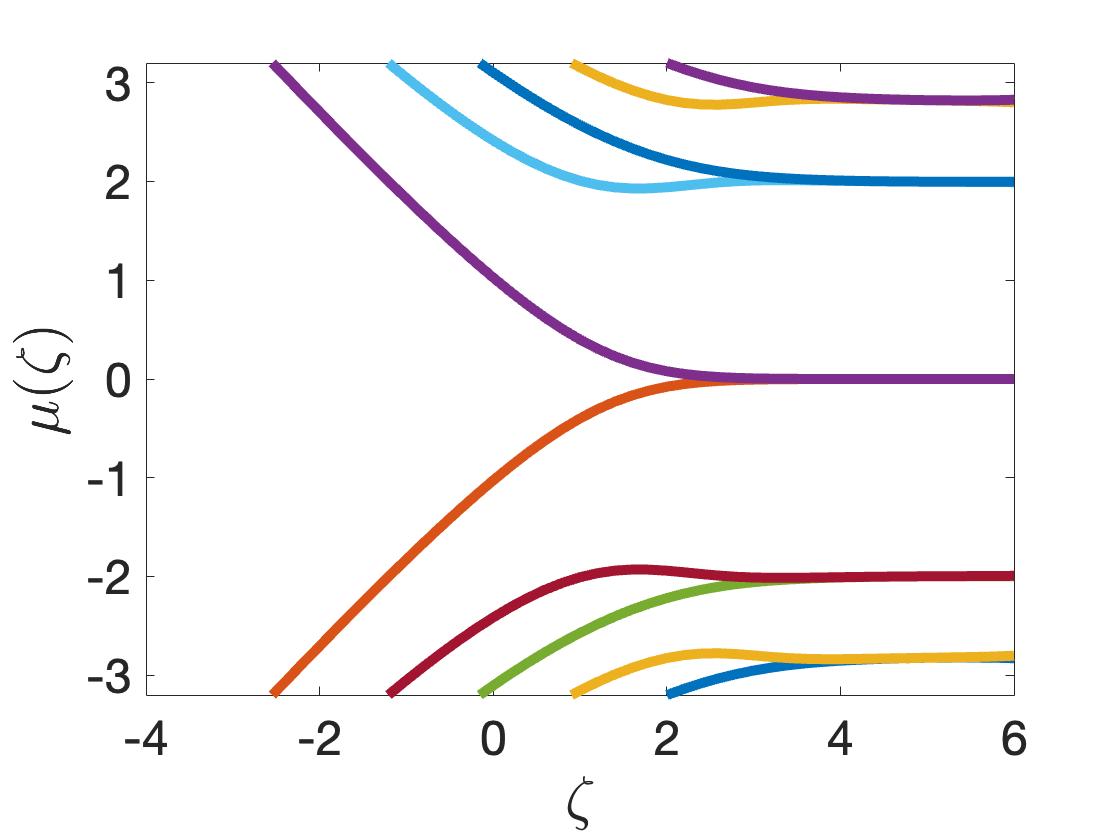}
  \includegraphics [scale=.20] {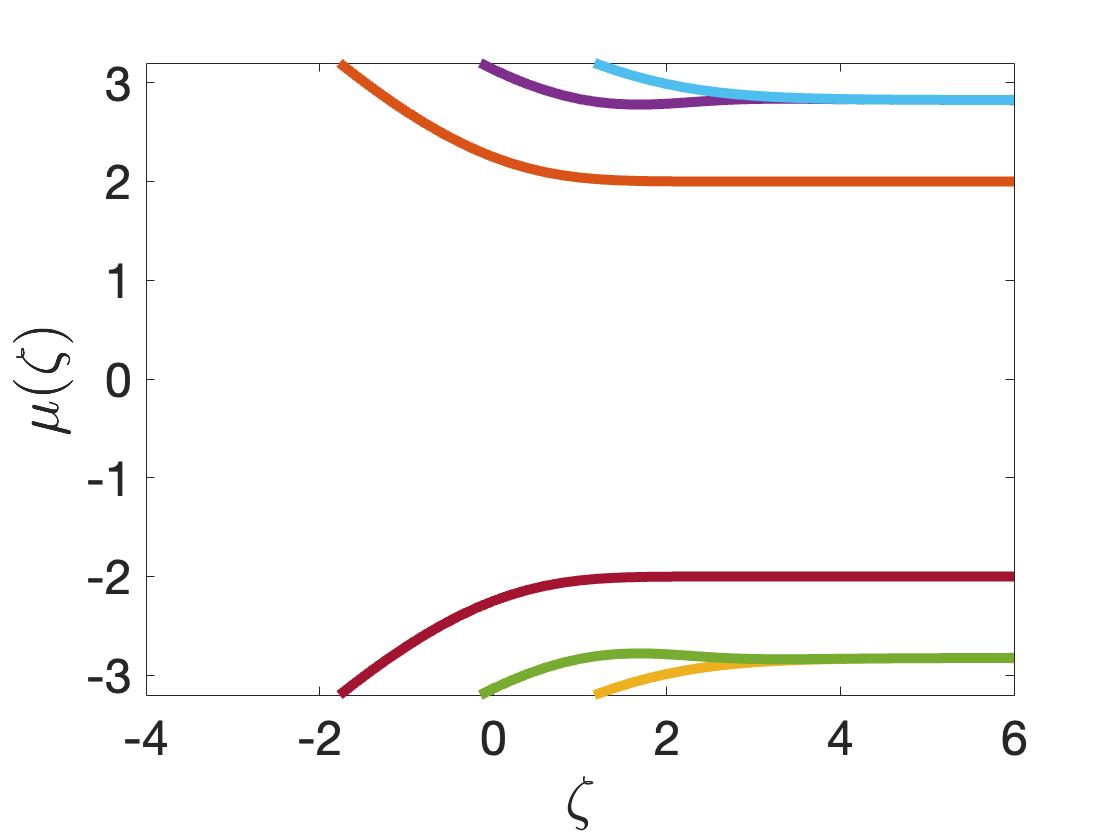}
\caption{Plots of the lowest-magnitude eigenvalues of $\hat{H} (\zeta)$ as a function of $\zeta$, for different combinations of domain walls in $B$ and $m$. For all plots, $V \equiv 0$. 
The choices of parameters are $B \equiv 2$ and $m \equiv 2$ (top left); $B \equiv 2$ and $m_+ = 2 = -m_-$ (top right); $B_+ = 2 = -B_-$ and $m \equiv 0$ (bottom left); $B_+ = 2 = -B_-$ and $m \equiv 2$ (bottom right).
As $\zeta \rightarrow +\infty$, these curves converge to 
elements of $\sigma (\hat{H}_+ (\zeta)) \cup \sigma (\hat{H}_+ (\zeta))$ as predicted by the theory; see Lemma \ref{lemma:muInfty}.
The top left panel illustrates the spectrum of a bulk Hamiltonian; see Lemma \ref{lemma:spectrumhat}. As demonstrated by the top right plot, a transition in the sign of $m$ 
generates a nonzero spectral flow for $\alpha$ near $0$. Comparing the two bottom plots, we see how a constant nonzero $m$ opens a gap at $0$.
Although it may look like 
the eigenvalues in the bottom plots are degenerate for large $\zeta$, this is not the case (Lemma \ref{lemma:simple}). They only have the same limit as $\zeta \rightarrow \infty$. 
}
\label{fig:sf0}
\end{figure}

\begin{figure}[t]
\centering
  \includegraphics [scale=.20] {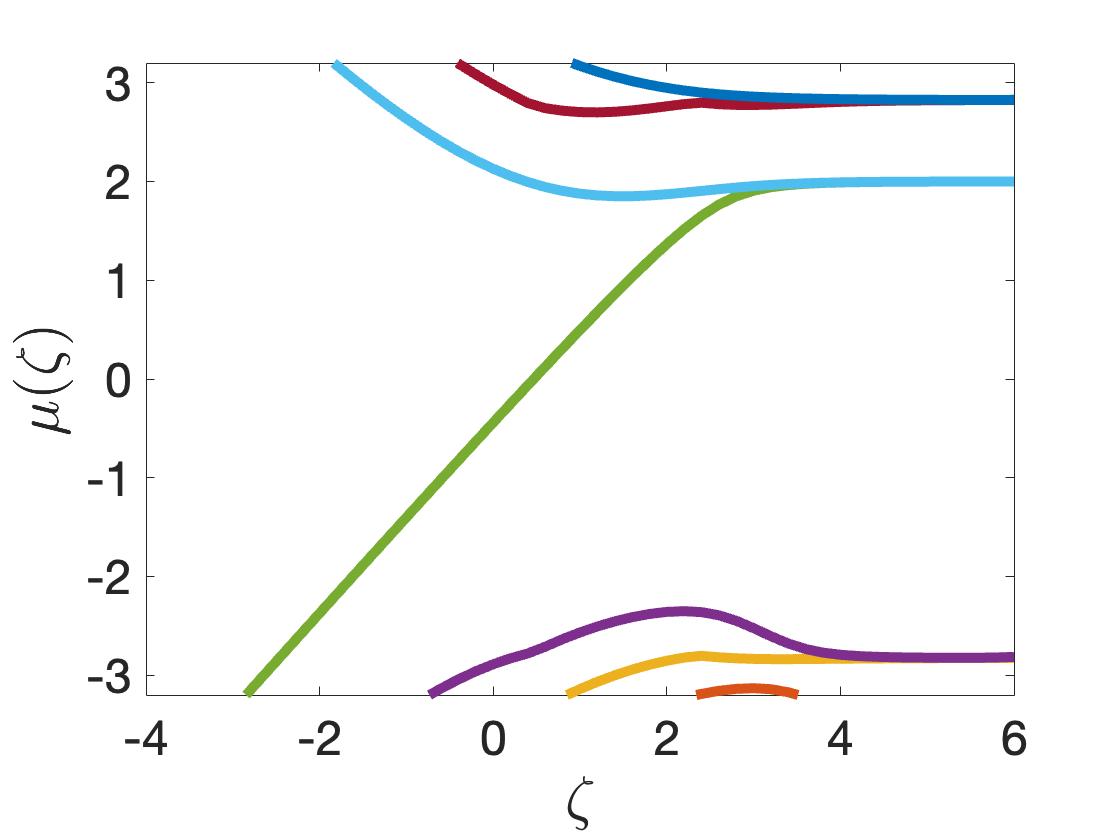}
  \includegraphics [scale=.20] {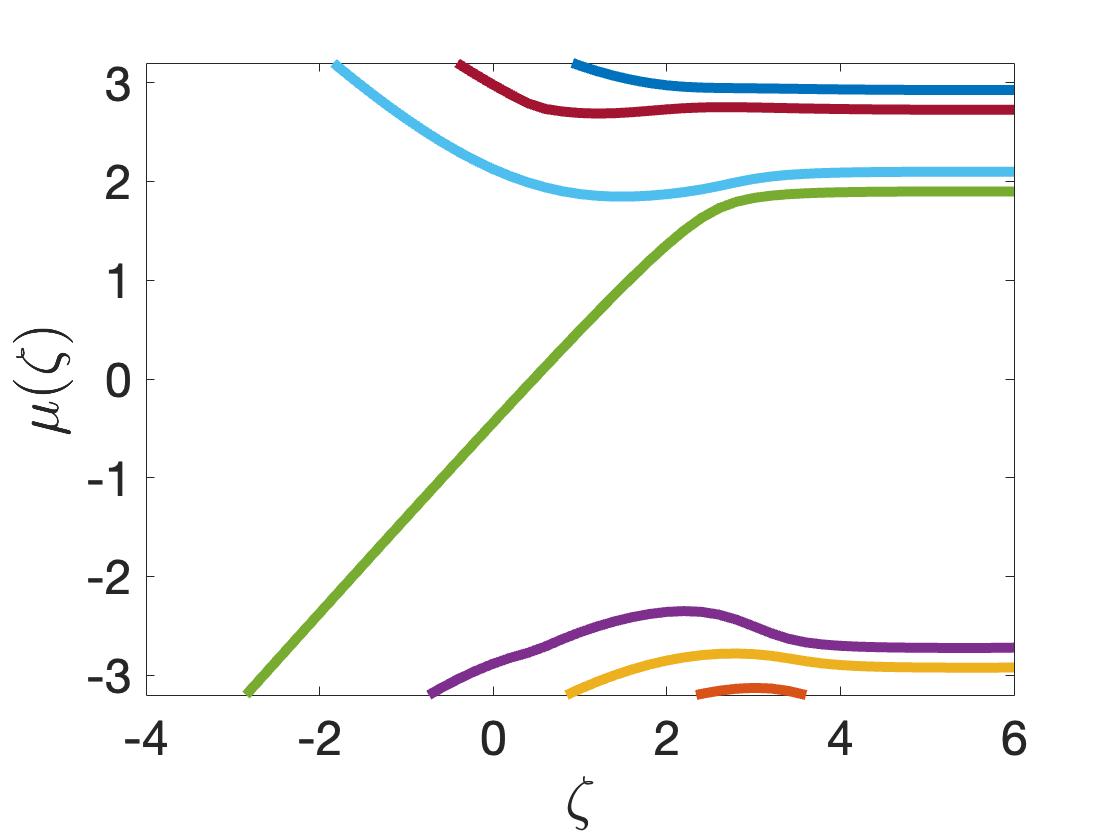}\\
  \includegraphics [scale=.20] {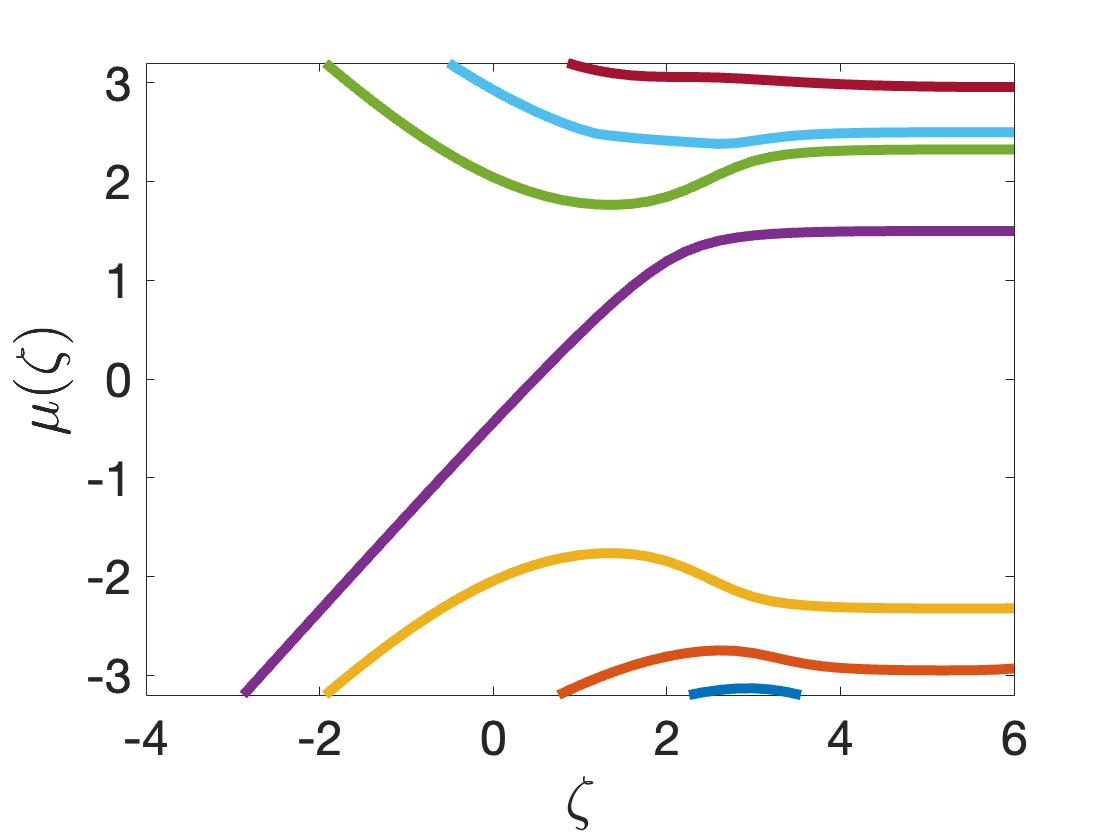}
  \includegraphics [scale=.20] {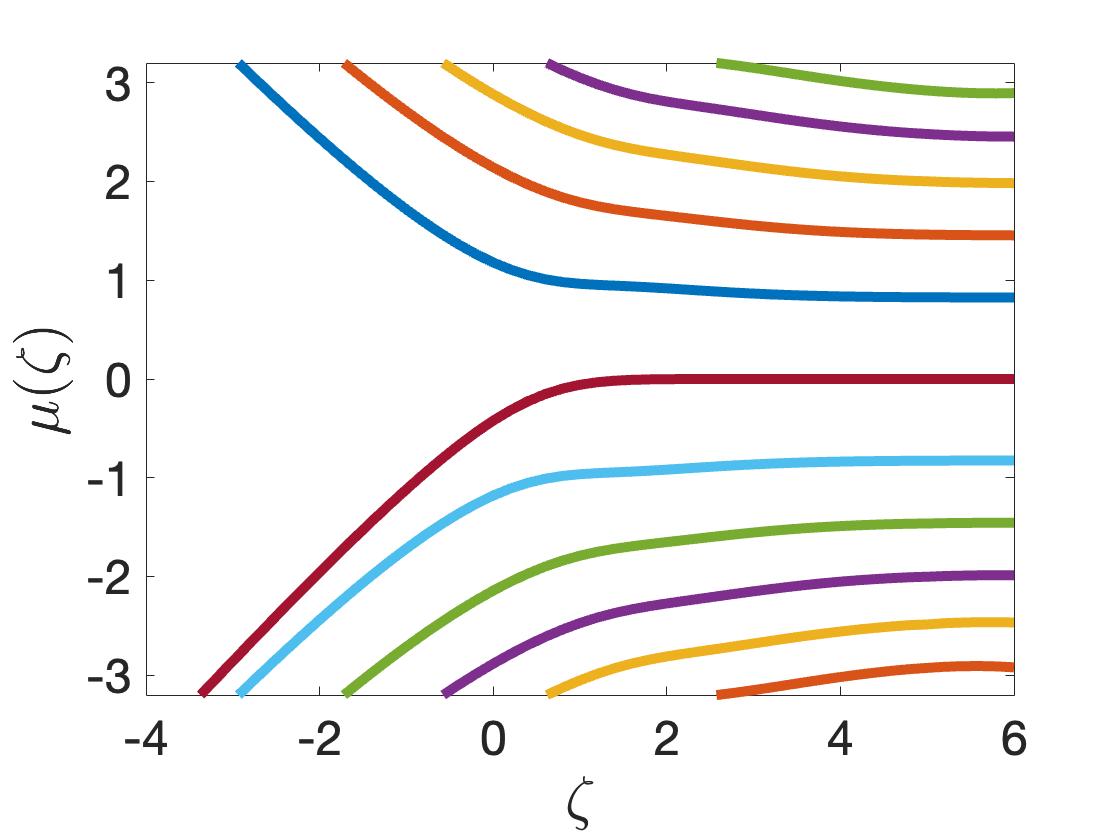}
\caption{Effect of the electric potential $V$ on the
lowest-magnitude eigenvalues of $\hat{H} (\zeta)$. 
All plots have $B_+ = 2 = -B_-$ and $m_+ = 2 = -m_-$. 
The choices of electric potential are
$V \equiv 0$ (top left); $V_+ = 0.1 = -V_-$ (top right); $V_+ = 0.5 = -V_-$ (bottom left); $V_+ = 2 = -V_-$ (bottom right). 
Notice that, as predicted by Lemma \ref{lemma:spectrumhat}, a small domain wall in $V$ 
opens a gap between pairs of branches for large $\zeta$ (top panels).}
\label{fig:sf}
\end{figure}

The rest of this section is devoted to proving Theorem \ref{prop:sf0} (and the statements we made before it). 
We begin by determining the spectrum of $\hat{H}_\pm (\zeta)$.
\begin{lemma}\label{lemma:spectrumhat}
For any $\zeta \in \mathbb{R}$, the spectrum of $\hat{H}_\pm (\zeta)$ consists entirely of eigenvalues and is given by
\begin{align}\label{eq:spectrumhat}
    \sigma (\hat{H}_\pm (\zeta)) = \{\eps \sqrt{2 k |B_\pm| + m_\pm^2}+V_\pm : \eps \in \{-1,1\}, k \in \mathbb{N}_+\} \bigcup \{m_\pm \sgn (B_\pm) + V_\pm \}.
\end{align}
\end{lemma}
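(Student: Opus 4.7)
The strategy is to square $\hat{H}_\pm(\zeta)$, which collapses to a block-diagonal shifted harmonic oscillator, read off the full spectrum from there, and then refine the square roots by identifying the sign of each eigenvalue.

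First I would note that $V_\pm \sigma_0$ commutes with every other term and produces a uniform shift, so $\sigma(\hat{H}_\pm(\zeta)) = V_\pm + \sigma(\hat{H}_\pm(\zeta) - V_\pm)$, and I may assume $V_\pm = 0$ throughout the computation and add $V_\pm$ back at the end. The Pauli identities $\sigma_j^2 = \sigma_0$ and $\sigma_i\sigma_j + \sigma_j\sigma_i = 0$ for $i\ne j$ together with $[D_x, \zeta - xB_\pm] = iB_\pm$ give the clean identity
\begin{equation*}
    \hat{H}_\pm(\zeta)^2 = \bigl(D_x^2 + (\zeta - xB_\pm)^2 + m_\pm^2\bigr)\sigma_0 - B_\pm \sigma_3.
\end{equation*}
Each diagonal block is, after translation $x \mapsto x + \zeta/B_\pm$, a harmonic oscillator $D_x^2 + B_\pm^2 x^2$ plus a constant, whose spectrum is $\{(2k+1)|B_\pm| : k \in \mathbb{N}\}$. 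Thus the spectrum of $\hat{H}_\pm(\zeta)^2$ is $\{2k|B_\pm| + m_\pm^2 : k \in \mathbb{N}\}$, the lowest level being simple (attained only in the block $L^{\sgn(B_\pm)}_\pm := D_x^2 + (\zeta - xB_\pm)^2 + m_\pm^2 - |B_\pm|$) and each higher level having multiplicity two. Compactness of the harmonic-oscillator resolvent transfers to $\hat{H}_\pm(\zeta)^2$ and hence to $\hat{H}_\pm(\zeta)$, so the latter has purely discrete spectrum contained in $\{\pm\sqrt{2k|B_\pm| + m_\pm^2} : k \in \mathbb{N}\}$.

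Next I refine the signs. For the $k = 0$ level, the ground eigenspace of $\hat{H}_\pm(\zeta)^2$ is one-dimensional and sits entirely in the component singled out by $\sgn(B_\pm)$. Taking the ansatz $\Psi = (\psi_0,0)^T$ (when $B_\pm > 0$) and computing $\hat{H}_\pm(\zeta)\Psi$ directly, the eigenvalue equation reduces to the first-order ODE $\partial_x \psi_0 = (\zeta - xB_\pm)\psi_0$, which is $L^2$-normalizable precisely when $B_\pm > 0$, and yields eigenvalue $m_\pm$. The parallel computation with $\Psi = (0,\phi_0)^T$ for $B_\pm < 0$ gives eigenvalue $-m_\pm$. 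Both cases combine to $m_\pm\sgn(B_\pm)$, as claimed.

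For each $k \ge 1$, the two-dimensional eigenspace of $\hat{H}_\pm(\zeta)^2$ is invariant under $\hat{H}_\pm(\zeta)$, and a natural basis is given by the two harmonic-oscillator eigenfunctions in the two spinor components. In this basis only the $m_\pm\sigma_3$ part contributes to the diagonal, giving $\diag(m_\pm, -m_\pm)$, while $D_x\sigma_1$ and $(\zeta - xB_\pm)\sigma_2$ produce the Hermitian off-diagonal entry. Hence the restricted operator is a traceless Hermitian $2\times 2$ matrix with square $(2k|B_\pm| + m_\pm^2)I$, whose eigenvalues are forced to be $\pm\sqrt{2k|B_\pm| + m_\pm^2}$. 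Restoring $V_\pm$ gives the formula \eqref{eq:spectrumhat}. The main obstacle is step four: ensuring that the $k=0$ level is indeed simple and identifying the sign $m_\pm\sgn(B_\pm)$ requires that one track which spinor component supports a normalizable Gaussian, which in turn depends on the sign of $B_\pm$ — the squaring step alone is blind to this distinction.
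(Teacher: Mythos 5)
Your proposal is correct and follows essentially the same route as the paper: square the operator to obtain the block-diagonal shifted harmonic oscillator, read off the Landau levels $2k|B_\pm|+m_\pm^2$ with the ground level simple, and then resolve the signs via the one-dimensional ground space (giving $m_\pm\sgn(B_\pm)$) and the two-dimensional $\hat{H}_\pm(\zeta)$-invariant eigenspaces for $k\ge 1$. The only difference is one of detail: where the paper simply asserts that one "verifies" the eigenvalues $\eps\sqrt{\nu_k}$, you make the verification explicit through the traceless-Hermitian restriction and the first-order ODE for the normalizable Gaussian, which is a welcome elaboration rather than a new method.
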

Above, the subscripts $\pm$ (and one operation $\mp$) are understood to correspond to the operator $\hat{H}_\pm$ in question.
Note that the spectrum of $\hat{H}_\pm (\zeta)$ is independent of $\zeta$; thus the spectrum of $H_\pm$ is also given by \eqref{eq:spectrumhat}, only now these values all belong to the \emph{essential spectrum}.
\begin{proof}
Suppose for concreteness that $B_+ > 0$.
Set $V_+ = 0$ without loss of generality, as this term only contributes a uniform shift of the spectrum.
Any eigenpair $\mu$ and $\psi$ of $\hat{H}_+ (\zeta)$ must satisfy $\hat{H}^2_+ (\zeta) \psi = \mu^2 \psi$, with $\hat{H}^2_+ (\zeta) = D_x^2 + (\zeta - B_+ x)^2 + m_+^2 - B_+ \sigma_3.$
The eigenelements of $\hat{H}^2_+ (\zeta)$ are well known. The eigenvalues are $\nu_k = 2k|B_+| + m_+^2$ for $k \in \mathbb{N}$. When $k \in \mathbb{N}_+$, $\nu_k$ has multiplicity $2$ and the eigenfunctions are $\psi_{k,\uparrow} = (\phi_{k}, 0)$ and $\psi_{k,\downarrow}=(0,\phi_{k-1})$, where $\phi_k (x) = |B_+|^{-1/4} \tilde{\phi}_k (\sqrt{|B_+|}x - \zeta)$ with $\tilde{\phi}_k$ the Hermite functions.
We see that $\nu_0$ has multiplicity one with eigenfunction $\psi_0 = (\phi_0,0)$.
We then verify that the eigenvalues of $\hat{H}_+ (\zeta)$ are $\mu_{k,\eps} = \eps \sqrt{\nu_k}$ for $\eps \in \{-1,1\}$, $k \in \mathbb{N}_+$ and $\mu_0 = m_+$, with corresponding eigenfunctions $\psi_{k,\eps} = c_{1,\eps} \psi_{k,\uparrow} + c_{2,\eps} \psi_{k,\downarrow}$ and $\psi_0$ for some $|c_{1,\eps}|^2 + |c_{2,\eps}|^2 = 1$.
If instead $B_+ < 0$, we would instead have $\psi_{k,\uparrow} = (\phi_{k-1}, 0)$ and $\psi_{k,\downarrow}=(0,\phi_{k})$. All elements of the spectrum would be the same as before with the exception of $\mu_0= -m_+$ now.
The eigenelements of $\hat{H}_- (\zeta)$ are calculated similarly.
This completes the result.
\end{proof}

\begin{lemma}\label{lemma:simple}
For any $\zeta \in \mathbb{R}$, the spectrum of $\hat{H}(\zeta)$ consists entirely of simple eigenvalues.
\end{lemma}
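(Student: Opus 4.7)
The plan is to reformulate the eigenvalue equation $\hat{H}(\zeta)\psi = \mu\psi$ as a first-order linear ODE in the $x$ variable and then exploit a Wronskian identity. Writing $\psi = (\psi_1,\psi_2)^T$ and using the explicit form of the Pauli matrices, the rows of $\hat{H}(\zeta)\psi = \mu\psi$ can be solved for the derivatives $\psi_1',\psi_2'$ to give the system $\psi'(x) = M(x)\psi(x)$, where
\[
M(x) := \begin{pmatrix} \zeta - A_2(x) & i(\mu + m(x) - V(x)) \\ i(\mu - m(x) - V(x)) & -(\zeta - A_2(x)) \end{pmatrix}.
\]
Since the entries of $M$ are smooth on all of $\mathbb{R}$, the space of classical solutions of this linear ODE on $\mathbb{R}$ is exactly two-dimensional. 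Moreover, $\hat{H}(\zeta)$ is an elliptic first-order system in the $x$ variable (its principal symbol $\xi\sigma_1$ is invertible for $\xi\neq 0$), so any $L^2$ eigenfunction is automatically smooth and satisfies the ODE classically. Hence the eigenspace of $\hat{H}(\zeta)$ at any $\mu$ has dimension at most two.

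Next I would introduce, for any pair of classical solutions $\psi,\phi$, the Wronskian $W(x) := \psi_1(x)\phi_2(x) - \psi_2(x)\phi_1(x)$. A direct computation, or Abel's identity applied to $M$, shows $W'(x) = (\tr M(x))\,W(x) = 0$ because the off-diagonal structure above forces $\tr M \equiv 0$. Thus $W$ is a (complex) constant on $\mathbb{R}$.

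Now suppose $\psi$ and $\phi$ are two linearly independent $L^2$ eigenfunctions of $\hat{H}(\zeta)$ with the same eigenvalue $\mu$. The Cauchy-Schwarz inequality in $\mathbb{C}^2$ gives the pointwise bound $|W(x)| \le |\psi(x)|\,|\phi(x)|$. If the constant $W$ were nonzero, this would force $|\psi(x)||\phi(x)| \ge |W| > 0$ for every $x \in \mathbb{R}$, so that $\int_{\mathbb{R}} |\psi(x)|\,|\phi(x)|\,dx = \infty$, contradicting $\int_{\mathbb{R}} |\psi||\phi|\,dx \le \|\psi\|_{L^2}\|\phi\|_{L^2} < \infty$. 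Therefore $W \equiv 0$. Since $\psi$ is a nontrivial solution of a linear ODE with smooth coefficients, uniqueness for the initial value problem forces $\psi(x) \neq 0$ for all $x$; combined with $W \equiv 0$ this lets me write $\phi(x) = \lambda(x)\psi(x)$ for a smooth scalar $\lambda$. Substituting into $\phi' = M\phi$ and using $\psi' = M\psi$ yields $\lambda'\psi \equiv 0$, so $\lambda$ is constant and $\phi,\psi$ are linearly dependent, a contradiction. Hence the eigenspace is one-dimensional.

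The main obstacle is mostly bookkeeping: the unbounded coefficient $A_2(x) = xB(x)$ causes no real difficulty because the reduced ODE is linear with smooth coefficients, so its solutions exist globally on $\mathbb{R}$, and the sign/complex-valued Wronskian argument works regardless of the polynomial growth of $M$. The only genuine analytical step is converting a distributional $L^2$ eigenfunction into a smooth classical ODE solution, which is standard elliptic regularity for a first-order system whose principal symbol $\xi\sigma_1$ is invertible away from $\xi = 0$.
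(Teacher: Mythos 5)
Your Wronskian argument for simplicity is correct and is essentially the paper's own argument: the paper likewise derives $\partial_x(\psi_1\phi_2-\psi_2\phi_1)=0$ from the two coupled first-order equations, concludes pointwise linear dependence, and finishes with ODE uniqueness. Your way of forcing $W\equiv 0$ from square-integrability (via $|W|\le|\psi|\,|\phi|$ and $\int|\psi|\,|\phi|<\infty$) is a minor, arguably cleaner variant of the paper's appeal to pointwise decay of $\psi,\phi,\psi',\phi'$ at infinity, and your passage from $W\equiv 0$ to proportionality is slightly more careful than the paper's normalization step.

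There is, however, a genuine gap: the lemma asserts that the spectrum of $\hat{H}(\zeta)$ \emph{consists entirely of} simple eigenvalues, i.e.\ that the spectrum is purely discrete, and you only prove that any eigenvalue that happens to exist is simple. Nothing in your argument excludes essential (e.g.\ absolutely continuous) spectrum. Invertibility of the principal symbol $\xi\sigma_1$ away from $\xi=0$ gives interior elliptic regularity of eigenfunctions, but it does not give discreteness: the free one-dimensional Dirac operator $D_x\sigma_1+m\sigma_3$ is elliptic in exactly this sense and has purely absolutely continuous spectrum $(-\infty,-|m|]\cup[|m|,\infty)$. The paper closes this half of the statement by noting that $|\det\sigma(x,\xi,\zeta)|^{1/2}\ge c_1\aver{\xi,\zeta-A_2(x)}-c_2$ and that $\aver{\xi,\zeta-A_2(x)}\to\infty$ as $|x|+|\xi|\to\infty$ for fixed $\zeta$, because $A_2(x)=xB(x)$ with $B_\pm\ne 0$; this yields compactness of $(i+\hat{H}(\zeta))^{-1}$ and hence a purely discrete spectrum. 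This is precisely where the standing hypothesis $B_\pm\ne 0$ enters, and your proof never uses it. To complete the argument you would need to add such a compact-resolvent (confining, harmonic-oscillator-type) step, for instance by observing that $\hat{H}^2(\zeta)\ge D_x^2+(\zeta-A_2(x))^2-C$ has compact resolvent.
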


\begin{proof}
Fix $\zeta \in \mathbb{R}$. 
The Weyl symbol (see Appendix \ref{sec:PsiDO} for the definition) of $\hat{H} (\zeta)$ grows linearly in $\aver{x, \xi}$, as demonstrated by \eqref{eq:ell} below.
This implies that
$(i + \hat{H} (\zeta))^{-1}$ is compact (see e.g. \cite{4,Bony}), and hence
$\sigma (\hat{H} (\zeta))$ consists entirely of eigenvalues.
We now prove that each eigenvalue has multiplicity one.
Suppose $\psi$ and $\phi$ are eigenfunctions of $\hat{H}$ corresponding to eigenvalue $\mu$.
This means
\begin{align*}
    -i \psi_2' - i (\zeta - A_2) \psi_2 + (V+m) \psi_1 &= \mu \psi_1\\
    -i\psi_1' + i (\zeta - A_2) \psi_1 +(V-m) \psi_2 &= \mu \psi_2,
\end{align*}
with the same equations holding also with $\psi$ replaced by $\phi$.
It follows that
\begin{align*}
    \psi_1 \phi_2' + \psi_1' \phi_2 = i(\mu - m - V) \psi_1 \phi_1 + i(\mu+m-V)\psi_2 \phi_2 = \psi_2 \phi_1' + \psi_2' \phi_1,
\end{align*}
so that
\begin{align*}
    \partial_x (\psi_1 \phi_2 - \psi_2 \phi_1) = \psi_1 \phi_2' + \psi_1 ' \phi_2 - (\psi_2 \phi_1' + \psi_2' \phi_1) = 0.
\end{align*}
Since $\psi,\phi,\psi',\phi'$ all go to zero as $|x| \rightarrow \infty$, we have shown that
$\psi_1 \phi_2 - \psi_2 \phi_1 \equiv 0$.
Thus for every $x \in \mathbb{R}$, the vectors $(\psi_1 (x), \psi_2(x))$ and $(\phi_1 (x), \phi_2 (x))$ are linearly dependent.
Normalize the eigenfunctions so that $\psi(x_0) = \phi(x_0)$ for some $x_0 \in \mathbb{R}$ (this can be done because $\psi$ and $\phi$ are continuous, hence there must exist a point at which both functions are zero or both functions are nonzero).
Letting $\nu := \psi - \phi$, we see that
$$
    \nu '(x) = F(x) \nu(x), \quad x \in \mathbb{R}; \qquad\qquad \qquad \nu (x_0) = 0,
$$
for some $F \in \mathcal{C}^\infty (\mathbb{R}; \mathbb{C}^{2\times 2})$.
By regularity of $F$, the standard uniqueness result for first-order ODE implies that $\nu \equiv 0$.
This completes the result.
\end{proof}


\begin{lemma}\label{lemma:muanalytic}
There exists a countable collection of analytic functions, $\{\mu_j\}_{j \in \mathbb{Z}} \subset \mathcal{C}^\omega (\mathbb{R})$, such that for each $\zeta \in \mathbb{R}$, $\sigma (\hat{H} (\zeta)) = \{\mu_j (\zeta): j \in \mathbb{Z}\}$.
\end{lemma}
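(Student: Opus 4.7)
The plan is to apply Kato's analytic perturbation theory for self-adjoint holomorphic families, in combination with the simplicity result of Lemma \ref{lemma:simple}. First I would observe that $\zeta$ enters $\hat{H}(\zeta)$ only through the bounded multiplication term $\zeta\sigma_2$, so the domain $\mathcal{D}(\hat{H}(\zeta))$ is independent of $\zeta$ and the family extends to all $\zeta\in\mathbb{C}$ as a self-adjoint holomorphic family of type (A) in the sense of Kato (see \cite[Chapter VII.2]{kato2013perturbation}). Moreover, as recorded in the proof of Lemma \ref{lemma:simple}, $(i+\hat{H}(\zeta))^{-1}$ is compact, so $\sigma(\hat{H}(\zeta))$ is a discrete set of eigenvalues of finite multiplicity accumulating only at $\pm\infty$.

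Second, I would invoke the Rellich--Kato theorem on analytic perturbation of self-adjoint operators (for example \cite[Theorem VII.3.9]{kato2013perturbation}): on any bounded interval $J\subset\mathbb{R}$, the eigenvalues of $\hat{H}(\zeta)$, counted with multiplicity, are given by finitely many real-analytic functions of $\zeta\in J$ (with associated real-analytic eigenprojections). Local analyticity on all of $\mathbb{R}$ follows by covering $\mathbb{R}$ with such intervals.

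Third, I would use Lemma \ref{lemma:simple} to upgrade this local labeling to a global one. Since every eigenvalue of $\hat{H}(\zeta)$ is simple for every $\zeta\in\mathbb{R}$, two local analytic branches can never coincide at any real point; hence no crossings occur and the local labelings glue into globally defined real-analytic functions on $\mathbb{R}$. To index these by $\mathbb{Z}$, I would fix $\zeta_0=0$, enumerate the eigenvalues of $\hat{H}(0)$ as a bi-infinite sequence $\cdots<\mu_{-1}(0)<\mu_0(0)<\mu_1(0)<\cdots$ (which exhausts $\sigma(\hat H(0))$ and tends to $\pm\infty$ by compactness of the resolvent), and let $\mu_j(\zeta)$ be the unique analytic continuation of the corresponding local branch through $\zeta_0=0$. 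Uniqueness of analytic continuation, together with the absence of crossings, gives that $\{\mu_j(\zeta)\}_{j\in\mathbb{Z}}=\sigma(\hat{H}(\zeta))$ for every $\zeta\in\mathbb{R}$.

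The only step requiring some care is the global $\mathbb{Z}$-indexing in the last paragraph: a priori Kato's theorem furnishes branches only locally, and globally one could imagine branches leaving any fixed bounded spectral window as $\zeta$ varies. The compactness of the resolvent, however, ensures that only finitely many branches can intersect any compact energy interval over any compact $\zeta$-interval, while simplicity rules out the one obstruction (branch collisions) that would prevent the local labelings from being patched into a coherent global one. This is where Lemma \ref{lemma:simple} is essential, and I expect it to be the main conceptual step in the argument.
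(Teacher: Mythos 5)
Your proposal is correct and follows essentially the same route as the paper: the paper's proof simply notes that $\hat{H}(\zeta)$ is a self-adjoint holomorphic family and cites Kato's analytic perturbation theorems together with Lemma \ref{lemma:simple}, exactly the combination you use. Your additional discussion of the global $\mathbb{Z}$-indexing and the no-crossing consequence of simplicity is a more detailed write-up of the same argument, not a different one.
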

Here, $\mathcal{C}^\omega (\mathbb{R})$ denotes the set of analytic functions on $\mathbb{R}$.
\begin{proof}
Note that $\hat{H} (\zeta)$ is holomorphic in $\zeta \in \mathbb{C}$
(see \cite[Chapter VII.1.1]{kato2013perturbation} for the precise definition)
and self-adjoint whenever $\zeta \in \mathbb{R}$ \cite{4,Bony,thaller2013dirac}.
The result then
follows from Lemma \ref{lemma:simple} and \cite[Theorems VII.1.7 and VII.1.8]{kato2013perturbation}. 
\end{proof}


We now show that bounded perturbations of $\hat{H}$ cannot change the $\mu_j$ by too much.
\begin{lemma}\label{lemma:mulambda}
Let $m_j \in \fs (m_{j,-}, m_{j,+})$ and $V_j \in \fs (V_{j,-},V_{j,+})$ for $j \in \{1,2\}$, where  the $m_{j,\pm}$ and $V_{j,\pm}$ are real numbers.
For $\zeta \in \mathbb{R}$ and $\lambda \in [0,1]$, set $$\hat{H}(\zeta,\lambda) = D_x \sigma_1 + (\zeta - A_2(x)) \sigma_2 + m_1(x) \sigma_3 + V_1(x) \sigma_0 + \lambda ((m_2(x) - m_1(x)) \sigma_3 + (V_2(x)-V_1(x)) \sigma_0).$$
Let $\{\mu_j (\zeta,\lambda)\}_{j \in \mathbb{Z}}$ denote the eigenvalues of $\hat{H} (\zeta,\lambda)$. 
Then the $\mu_j$ can be chosen analytic in $(\zeta,\lambda)$ and
\begin{align*}
    |\partial_{\lambda} \mu_j (\zeta, \lambda)| \le 
    \norm{m_2 - m_1}_\infty + \norm{V_2-V_1}_\infty.
\end{align*}
\end{lemma}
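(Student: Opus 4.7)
The plan is to first establish joint real-analyticity of the $\mu_j$ in $(\zeta,\lambda)$ by arguments parallel to Lemmas \ref{lemma:simple} and \ref{lemma:muanalytic}, then obtain the derivative bound via the Hellmann--Feynman formula applied to $\partial_\lambda \hat{H}$.

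Observe that $\hat{H}(\zeta,\lambda)$ depends linearly (hence jointly holomorphically) on both $\zeta \in \mathbb{C}$ and $\lambda \in \mathbb{C}$, and for real $(\zeta,\lambda)$ is self-adjoint on $\mathcal{H} = L^2(\mathbb{R})\otimes\mathbb{C}^2$. The proof of Lemma \ref{lemma:simple} applies verbatim to $\hat{H}(\zeta,\lambda)$, since it only relies on the first-order differential structure together with regularity of the coefficients; thus $\sigma(\hat{H}(\zeta,\lambda))$ consists of simple eigenvalues for every $(\zeta,\lambda) \in \mathbb{R}\times[0,1]$. Around any fixed point $(\zeta_0,\lambda_0)$ and branch $\mu_j(\zeta_0,\lambda_0)$, the Riesz projection
\begin{align*}
    P_j(\zeta,\lambda) = -\frac{1}{2\pi i}\oint_\gamma (\hat{H}(\zeta,\lambda) - z)^{-1} \, dz,
\end{align*}
with $\gamma$ a small circle enclosing only $\mu_j(\zeta_0,\lambda_0)$, is jointly holomorphic in a neighborhood of $(\zeta_0,\lambda_0)$ in $\mathbb{C}^2$ by the joint holomorphy of the resolvent. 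Since $P_j$ has rank one there, $\mu_j(\zeta,\lambda) = \mathrm{tr}\bigl(\hat{H}(\zeta,\lambda)\, P_j(\zeta,\lambda)\bigr)$ is jointly holomorphic as well, and restriction to real arguments yields the claimed analyticity. A consistent global labeling of branches follows by analytic continuation, using the absence of crossings guaranteed by Lemma \ref{lemma:simple}.

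For the derivative bound, let $\psi_j = \psi_j(\zeta,\lambda)$ be a normalized eigenfunction of $\hat{H}(\zeta,\lambda)$ with eigenvalue $\mu_j$, chosen locally analytic by the same Kato-type argument (cf.\ \cite[Theorems VII.1.7 and VII.1.8]{kato2013perturbation}). Differentiating $\hat{H}\psi_j = \mu_j \psi_j$ in $\lambda$, pairing with $\psi_j$, and invoking self-adjointness of $\hat{H}$ together with $\|\psi_j\| = 1$ yields the Hellmann--Feynman identity $\partial_\lambda \mu_j(\zeta,\lambda) = \langle \psi_j, (\partial_\lambda \hat{H})\, \psi_j \rangle$. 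Here $\partial_\lambda \hat{H}(\zeta,\lambda) = (m_2 - m_1)(x)\sigma_3 + (V_2 - V_1)(x)\sigma_0$ is a bounded multiplication operator whose pointwise $2\times 2$ value is diagonal with entries $(V_2 - V_1)(x) \pm (m_2 - m_1)(x)$, so its operator norm on $\mathcal{H}$ is at most $\|m_2 - m_1\|_\infty + \|V_2 - V_1\|_\infty$. Cauchy--Schwarz applied to the Hellmann--Feynman identity then gives the desired estimate. The main delicacy is the joint (as opposed to merely separate) analyticity, since Kato's theorems are stated for one-parameter holomorphic families; the Riesz projection argument above resolves this by directly invoking joint holomorphy of the resolvent, after which the derivative bound is a routine computation.
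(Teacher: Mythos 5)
Your proposal is correct and follows essentially the same route as the paper: both establish analyticity from simplicity of the eigenvalues (Lemma \ref{lemma:simple} applies since the interpolated coefficients are again switch functions) together with holomorphy of the family, and both obtain the bound from the Hellmann--Feynman identity combined with the pointwise operator-norm estimate $\norm{(m_2-m_1)\sigma_3+(V_2-V_1)\sigma_0}\le\norm{m_2-m_1}_\infty+\norm{V_2-V_1}_\infty$. The only cosmetic differences are that the paper derives the Hellmann--Feynman identity via a difference quotient in $\lambda$ rather than by direct differentiation, and cites Kato for joint holomorphy where you argue it explicitly through the Riesz projection.
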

\begin{proof}
Since $m$ and $V$ were arbitrary switch functions,
it follows from Lemma \ref{lemma:simple} and 
\cite[Theorems VII.1.7 and VII.1.8]{kato2013perturbation} as before that the eigenvalues and eigenprojections of $\hat{H} (\zeta, \lambda)$
are holomorphic in
$(\zeta,\lambda) \in \mathbb{C}^2$.
Since the eigenvalues are real whenever $\zeta$ and $\lambda$ are real, we have proved the first part of the lemma.
Letting $\mu (\zeta,\lambda)$ denote such an eigenvalue with $\Pi_{\zeta,\lambda} = \psi_{\zeta, \lambda}\psi^*_{\zeta, \lambda}$ the projection onto the corresponding (one-dimensional) eigenspace, we have
\begin{align*}
    \hat{H} (\zeta, \lambda) \psi_{\zeta, \lambda} = \mu (\zeta, \lambda) \psi_{\zeta, \lambda}.
\end{align*}
Now fix $\zeta \in \mathbb{R}$ and $\lambda \in [0,1]$, and
let $h > 0$. 
Evaluating the difference of the above between $\lambda$ and $\lambda + h$ yields
\begin{align*}
(\hat{H} (\zeta, \lambda &+ h) - \hat{H} (\zeta, \lambda)) \psi_{\zeta,\lambda+h} + \hat{H} (\zeta, \lambda) (\psi_{\zeta, \lambda + h} - \psi_{\zeta, \lambda})\\ &= 
    \hat{H} (\zeta, \lambda + h) \psi_{\zeta, \lambda +h} - \hat{H} (\zeta, \lambda) \psi_{\zeta, \lambda} 
    = \mu (\zeta, \lambda + h) \psi_{\zeta, \lambda + h} - \mu (\zeta, \lambda) \psi_{\zeta, \lambda}\\
    &= (\mu(\zeta, \lambda + h) - \mu (\zeta, \lambda)) \psi_{\zeta,\lambda + h} + \mu (\zeta, \lambda) (\psi_{\zeta, \lambda + h} - \psi_{\zeta, \lambda}).
\end{align*}
Multiplying both sides by $\bar{\psi}_{\zeta,\lambda}$ and taking inner products, we obtain
\begin{align*}
    (\psi_{\zeta,\lambda}, (\hat{H} (\zeta, \lambda + h) &- \hat{H} (\zeta, \lambda)) \psi_{\zeta,\lambda+h}) = (\mu(\zeta, \lambda + h) - \mu (\zeta, \lambda))(\psi_{\zeta,\lambda},\psi_{\zeta, \lambda + h}).
\end{align*}
Using that $\hat{H} (\zeta, \lambda + h) - \hat{H} (\zeta, \lambda) = h((m_2(x) - m_1(x)) \sigma_3 + (V_2 (x) - V_1 (x)) \sigma_0)$,
we divide both sides by $h$ to get
\begin{align}\label{eq:muprime}
\begin{split}
    (\psi_{\zeta,\lambda}, ((m_2(x) - m_1(x)) \sigma_3 + (V_2 (x) &- V_1 (x)) \sigma_0) \psi_{\zeta,\lambda+h})\\
    &= h^{-1}(\mu(\zeta, \lambda + h) - \mu (\zeta, \lambda))(\psi_{\zeta,\lambda},\psi_{\zeta, \lambda + h}).
    \end{split}
\end{align}
Since $\Pi_{\zeta,\lambda}$ is holomorphic in $\lambda$, 
the operator norm $\norm{\Pi_{\zeta, \lambda + h} - \Pi_{\zeta,\lambda}}$ goes to zero as $h \rightarrow 0$.
Hence $\norm{(\psi_{\zeta,\lambda+h},\psi_{\zeta,\lambda}) \psi_{\zeta,\lambda+h} - \psi_{\zeta,\lambda}} \rightarrow 0$, meaning that $|(\psi_{\zeta,\lambda+h},\psi_{\zeta,\lambda})| \rightarrow 1$.
Taking the absolute value of \eqref{eq:muprime} and sending $h \rightarrow 0$, we thus obtain
\begin{align*}
    |\partial_{\lambda} \mu (\zeta, \lambda)| \le 
    \norm{m_2 - m_1}_\infty + \norm{V_2-V_1}_\infty.
\end{align*}
This completes the proof.
\end{proof}

Next we determine the multisets $\{\lim_{\zeta \rightarrow \pm \infty} \mu_j (\zeta)\}_{j \in \mathbb{Z}}$, 
which depends on the signs of $B_+$ and $B_-$. 
Given two sets $A$ and $B$, we use the notation $A+B$ to denote the multiset formed by combining $A$ and $B$. That is, $x \in A+B$ if and only if $x \in A\cup B$, where the multiplicity of $x$ is $2$ if $x \in A \cap B$ and $1$ otherwise.
Although each $\mu_j$ has multiplicity $1$ (Lemma \ref{lemma:simple}), it is possible that two distinct $\mu_j$ converge to the same value (as $\zeta \rightarrow \infty$ if $B_- < 0 < B_+$; as $\zeta \rightarrow -\infty$ if $B_+ < 0 < B_-$).
\begin{lemma}\label{lemma:mulim} 
\begin{enumerate}
\item \label{it:m0p} Suppose $B_- < 0 < B_+$. Then for each $j\in \mathbb{Z}$, $\mu_{j,\infty} := \lim_{\zeta \rightarrow \infty} \mu_j (\zeta)$ exists and belongs to $\sigma (\hat{H}_+ (\zeta)) \cup \sigma (\hat{H}_- (\zeta))$.
For each $\nu \in \sigma (\hat{H}_+ (\zeta)) + \sigma (\hat{H}_- (\zeta))$, there exists exactly one index $j \in \mathbb{Z}$ such that $\mu_{j,\infty}= \nu$.
\item \label{it:p0m} Suppose $B_+ < 0 < B_-$. 
Then for each $j\in \mathbb{Z}$, $\mu_{j,-\infty} := \lim_{\zeta \rightarrow -\infty} \mu_j (\zeta)$ exists and belongs to $\sigma (\hat{H}_+ (\zeta)) \cup \sigma (\hat{H}_- (\zeta))$.
For each $\nu \in \sigma (\hat{H}_+ (\zeta)) + \sigma (\hat{H}_- (\zeta))$, there exists exactly one index $j \in \mathbb{Z}$ such that $\mu_{j,-\infty}= \nu$.
\item \label{it:0pm} Suppose 
$0 < B_+, B_-$. Then for each $j \in \mathbb{Z}$, $\mu_{j,\pm\infty} := \lim_{\zeta \rightarrow \pm\infty} \mu_j (\zeta)$ exists and belongs to $\sigma (\hat{H}_\pm (\zeta))$. 
For each $\nu \in \sigma (\hat{H}_\pm (\zeta))$, there exists exactly one index $j \in \mathbb{Z}$ such that $\mu_{j,\pm\infty}= \nu$.
\item \label{it:pm0} Suppose
$B_+, B_- < 0$. Then for each $j \in \mathbb{Z}$, $\mu_{j,\pm\infty} := \lim_{\zeta \rightarrow \pm\infty} \mu_j (\zeta)$ exists and belongs to $\sigma (\hat{H}_\mp (\zeta))$.
For each $\nu \in \sigma (\hat{H}_\mp (\zeta))$, there exists exactly one index $j \in \mathbb{Z}$ such that $\mu_{j,\pm\infty}= \nu$.
\end{enumerate}
\end{lemma}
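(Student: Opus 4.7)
The plan is to reduce the analysis to a semiclassical localization driven by the zeros of the off-diagonal symbol, $\zeta - A_2(x) = \zeta - xB(x)$. For $\zeta \to +\infty$, such a zero sits near $x = \zeta/B_+ \to +\infty$ precisely when $B_+ > 0$ (so that the zero falls inside the region $\{B \equiv B_+\}$), and near $x = \zeta/B_- \to -\infty$ precisely when $B_- < 0$; the four cases of the lemma enumerate exactly which of these loci are active as $\zeta \to \pm \infty$. I treat case (\ref{it:m0p}) for $\zeta \to +\infty$ in detail; the other cases follow analogously by dropping the inactive loci. For each $\nu \in \sigma(\hat{H}_+(\zeta))$, I take a normalized eigenfunction $\psi_\nu^+$ of $\hat{H}_+(\zeta)$; by the proof of Lemma \ref{lemma:spectrumhat}, its components are translated Hermite functions centered at $\zeta/B_+$ and thus Gaussian-localized there. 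Multiplying by a smooth cutoff $\chi_+$ equal to $1$ on a wide window around $\zeta/B_+$ and supported in $\{x \ge \beta\}$ (where $B \equiv B_+$, $m \equiv m_+$, $V \equiv V_+$), the normalized trial function $\widetilde{\psi}_\nu^+$ satisfies $\|(\hat{H}(\zeta) - \nu)\widetilde{\psi}_\nu^+\| = O(e^{-c\zeta})$, since $\hat{H}(\zeta) = \hat{H}_+(\zeta)$ on $\supp \chi_+$ and $[\chi_+,\hat{H}(\zeta)]$ only hits the exponentially small Gaussian tail. A mirror construction near $\zeta/B_-$ handles $\nu \in \sigma(\hat{H}_-(\zeta))$, with the two trial functions having disjoint support for large $\zeta$. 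By the min-max principle, each $\nu$ in the multiset $\sigma(\hat{H}_+) + \sigma(\hat{H}_-)$ is approached by at least as many branches $\mu_j$ as its multiplicity.

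\textbf{Resolvent bound away from the bulk spectra.} For $\nu \notin \sigma(\hat{H}_+) \cup \sigma(\hat{H}_-)$, I would show $\mathrm{dist}(\nu, \sigma(\hat{H}(\zeta))) \ge \delta > 0$ uniformly for $\zeta \gg 1$ via an IMS localization. Fix a smooth partition of unity $\chi_+^2 + \chi_0^2 + \chi_-^2 \equiv 1$ with $\chi_+$ supported in $\{x \ge \beta\}$, $\chi_-$ in $\{x \le \alpha\}$, and $\chi_0$ in a fixed bounded interface strip. The IMS identity gives
\begin{align*}
(\hat{H}(\zeta) - \nu)^2 = \sum_{\ast \in \{+,0,-\}} \chi_\ast (\hat{H}(\zeta) - \nu)^2 \chi_\ast + O(1).
\end{align*}
On $\supp \chi_\pm$ one has $\hat{H}(\zeta) = \hat{H}_\pm(\zeta)$, whence $\chi_\pm (\hat{H}(\zeta) - \nu)^2 \chi_\pm \ge \chi_\pm^2\, \mathrm{dist}(\nu, \sigma(\hat{H}_\pm))^2$, uniformly positive by hypothesis. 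On the compact set $\supp \chi_0$, $x$ stays bounded while $\zeta \to \infty$, so $|\zeta - A_2(x)| \ge c\zeta$; expanding $(\hat{H}(\zeta) - \nu)^2$ shows the Iwatsuka-type term $(\zeta - A_2)^2$ dominates every other contribution, giving $\chi_0 (\hat{H}(\zeta) - \nu)^2 \chi_0 \ge c'\zeta^2 \chi_0^2$. This yields the required resolvent bound.

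\textbf{Existence of limits and multiplicity match.} The resolvent bound forces every bounded accumulation point of $\{\mu_j(\zeta)\}$ as $\zeta \to +\infty$ to lie in the discrete set $\sigma(\hat{H}_+) \cup \sigma(\hat{H}_-)$; combined with continuity of $\mu_j$ (Lemma \ref{lemma:muanalytic}) and with Lemma \ref{lemma:muInfty} to peel off the branches going to $\pm \infty$, each remaining branch admits a genuine limit $\mu_{j,\infty}$. For the multiplicity count, fix $\nu$ with multiplicity $k \in \{1,2\}$ in the multiset and pick a small contour $\Gamma$ around $\nu$ enclosing no other point of $\sigma(\hat{H}_\pm)$. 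The rank of the Riesz projector $P_\Gamma(\zeta) = \frac{1}{2\pi i} \oint_\Gamma (z - \hat{H}(\zeta))^{-1}\,dz$ is locally constant in $\zeta$ and counts the number of $\mu_j(\zeta)$ inside $\Gamma$. The quasi-mode step gives a lower bound of $k$, and the matching upper bound follows by comparing $P_\Gamma(\zeta)$ with the corresponding Riesz projectors of $\hat{H}_\pm$ on their bulk regions through the resolvent identity, with the interface contribution controlled by the previous step.

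\textbf{Main obstacle.} The crux is the interface estimate, precisely because $\hat{H}$ is \emph{not} elliptic: its principal symbol degenerates along the classical curve $\zeta = xB(x)$, so generic pseudo-differential coercivity fails. The argument is rescued by the fact that this curve exits the interface strip once $|\zeta|$ exceeds a finite threshold, making $(\zeta - A_2(x))^2$ grow like $\zeta^2$ on $\supp \chi_0$. Propagating this coercivity through the Pauli cross-terms in $(\hat{H}(\zeta) - \nu)^2$ — so that the growing $(\zeta - A_2)\sigma_2$ block genuinely dominates the bounded $D_x \sigma_1$, $m\sigma_3$, and $V \sigma_0$ contributions — is where care is needed, and is likely the technical heart of the proof.
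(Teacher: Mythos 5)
Your proposal is essentially correct but follows a genuinely different route from the paper. The paper works with the squared operator: it applies the max--min principle to $\hat{H}^2(\zeta)$ (deferring the details to references with ``similar arguments'') to get a bijection $j \mapsto \nu_{\iota(j)}$ with $\mu_j^2(\zeta) \to \nu_{\iota(j)}^2$, and then recovers the \emph{sign} of each limit by a short trick: since $V$ was an arbitrary switch function, the same statement holds for $H+\alpha$, and requiring $(\mu_i(\zeta)+\alpha)^2 \to (\nu+\alpha)^2$ for all small $\alpha$ pins down whether the limit is $+\sqrt{\nu^2}$ or $-\sqrt{\nu^2}$. Your quasi-mode/IMS/Riesz-projector argument instead works directly with $\hat{H}(\zeta)-\nu$ for real $\nu$, so the sign information is never lost and no shift trick is needed; the price is that you must supply yourself the localization machinery that the paper outsources. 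Three places in your sketch need more care. First, in cases \ref{it:0pm} and \ref{it:pm0} the bound $\chi_\mp(\hat{H}-\nu)^2\chi_\mp \ge \chi_\mp^2\,\mathrm{dist}(\nu,\sigma(\hat{H}_\mp))^2$ on the \emph{inactive} side is not what you want -- the lemma asserts that points of $\sigma(\hat{H}_\mp)\setminus\sigma(\hat{H}_\pm)$ are also avoided -- so there you must instead use that $(\zeta - A_2(x))^2 \gtrsim \zeta^2$ on the whole inactive half-line (the well has migrated into the other half-plane), exactly as on the interface strip. Second, the upper bound $\mathrm{rank}\,P_\Gamma(\zeta) \le k$ is the technical heart you are waving at: the geometric resolvent comparison requires exponential decay of the bulk Riesz projectors away from $x = \zeta/B_\pm$, and this is where most of the work hidden in the paper's citation of the max--min argument lives. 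Third, your argument counts branches near each fixed $\nu$ but does not by itself exclude a branch escaping to $\pm\infty$ as $\zeta\to+\infty$ in case \ref{it:m0p} (the lemma claims \emph{every} branch converges there); closing this requires a global count, e.g.\ applying your projector comparison to a large window $[-\Lambda,\Lambda]$, or the max--min on $\hat{H}^2$ as in the paper. Also note the IMS error $\sum(\chi_*')^2$ is $O(1)$, so for $\nu$ very close to the bulk spectra you must widen the cutoff transition regions to make this error smaller than $\mathrm{dist}(\nu,\sigma(\hat{H}_\pm))^2$; since the compactness/covering step only uses finitely many such $\nu$, this is harmless but should be said.
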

\begin{proof}
Suppose $B_- < 0 < B_+$. 
Let $\{\nu_{k, \pm}: k \in \mathbb{N}\}$ denote the full set of eigenvalues of $\hat{H}_\pm$.
Applying the max-min principle \cite{RS4, Teschl} to $\hat{H}^2 (\zeta)$ (see \cite{cycon2009schrodinger,QB,yafaev2008spectral} for similar arguments), it follows that 
there is a bijection $\iota : \mathbb{Z} \rightarrow \mathbb{N} \times \{+, -\}$ such that for every $j \in \mathbb{Z}$, $\mu_j^2 (\zeta) \rightarrow \nu^2_{\iota (j)}$ as $\zeta \rightarrow \infty$.

Now fix 
$j \in \mathbb{Z}$ and
let $K_j := \{i \in \mathbb{Z} : \nu^2_{\iota (i)} = \nu^2_{\iota (j)}\}$.
(Lemma \ref{lemma:spectrumhat} implies that $K_j$ can have at most four elements.)
By Lemma \ref{lemma:mulambda}, the quantities
$\lim_{\zeta \rightarrow \infty} \mu_j^2 (\zeta)$ are continuous in shifts of $V$. Thus
there is a bijection $\iota_j : K_j \rightarrow \iota (K_j)$ such that
for all $i \in K_j$ and
$\alpha$ sufficiently small, $(\mu_i (\zeta) + \alpha)^2 \rightarrow (\nu_{\iota_j (i)} + \alpha)^2$ as $\zeta \rightarrow \infty$.
Indeed, the previous paragraph applies also to $H+\alpha$, as $V \in \fs (V_-, V_+)$ was arbitrary.
Hence $\lim_{\zeta \rightarrow \infty} \mu_i (\zeta) = \nu_{\iota_j (i)}$ for all $i \in K_j$, as desired.

The proofs 
for cases \ref{it:p0m}--\ref{it:pm0} are similar.
\end{proof}

\begin{lemma}\label{lemma:muInfty}
\begin{enumerate}
\item Suppose $B_- < 0 < B_+$. For every $M>0$, there exists $\zeta_0 \in \mathbb{R}$ such that $|\mu_j (\zeta)| > M$ for all $\zeta < \zeta_0$ and $j \in \mathbb{Z}$.
\item Suppose $B_+ < 0 < B_-$. For every $M>0$, there exists $\zeta_0 \in \mathbb{R}$ such that $|\mu_j (\zeta)| > M$ for all $\zeta > \zeta_0$ and $j \in \mathbb{Z}$.
\end{enumerate}
\end{lemma}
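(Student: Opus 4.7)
My plan is to reduce both cases to a uniform lower bound on $\norm{\hat{H}(\zeta)\psi}$ over unit vectors $\psi$, by isolating the magnetic Dirac part and treating the bounded potential as a perturbation. Write $\hat{H}(\zeta) = \hat{H}_1(\zeta) + W$, where $\hat{H}_1(\zeta) := D_x \sigma_1 + (\zeta - A_2(x))\sigma_2$ and $W := m(x)\sigma_3 + V(x)\sigma_0$, so that $\norm{W} \le C := \norm{m}_\infty + \norm{V}_\infty < \infty$. If $\mu_j(\zeta)$ is an eigenvalue of $\hat{H}(\zeta)$ with normalized eigenvector $\psi$, the triangle inequality gives $|\mu_j(\zeta)| = \norm{\hat{H}(\zeta)\psi} \ge \norm{\hat{H}_1(\zeta)\psi} - C$, so it suffices to force $\norm{\hat{H}_1(\zeta)\psi} \to \infty$ uniformly in $\psi$ as $\zeta$ runs to the appropriate infinity.

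The key observation is that the sign configuration of $B_\pm$ confines $A_2(x) = xB(x)$ to one half-line up to a bounded deviation. Since $B \in \fs(B_-, B_+)$, one has $A_2(x) = xB_\pm$ for $\pm x$ large. In case (1) both limits $xB_+$ ($x \to +\infty$) and $xB_-$ ($x \to -\infty$) equal $+\infty$, so $A_2(x) \to +\infty$ as $|x| \to \infty$ and $A_{\min} := \inf_x A_2(x)$ is finite; case (2) is symmetric and yields $A_{\max} := \sup_x A_2(x)$ finite. Moreover $A_2'(x) = B(x) + xB'(x)$ is bounded because $B'$ is compactly supported in the transition region. A short calculation with $\sigma_1 \sigma_2 = i\sigma_3 = -\sigma_2\sigma_1$ and $[D_x, A_2] = -iA_2'$ gives
$$\hat{H}_1^2(\zeta) = D_x^2 + (\zeta - A_2(x))^2 - A_2'(x)\sigma_3,$$
so for every normalized $\psi$ in the common domain,
$$\norm{\hat{H}_1(\zeta)\psi}^2 \ \ge\ \int_{\mathbb{R}} (\zeta - A_2(x))^2 |\psi(x)|^2\, dx \ -\ \norm{A_2'}_\infty.$$

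To close case (1), pick $\zeta < A_{\min}$; then $(\zeta - A_2(x))^2 \ge (A_{\min} - \zeta)^2$ pointwise in $x$, so the integral is at least $(A_{\min} - \zeta)^2$, and combining with the reduction above yields
$$|\mu_j(\zeta)| \ \ge\ \sqrt{(A_{\min} - \zeta)^2 - \norm{A_2'}_\infty} \ -\ C,$$
whose right-hand side grows without bound as $\zeta \to -\infty$, uniformly in $j$. Case (2) is handled identically with $A_{\max}$ in place of $A_{\min}$ and $\zeta \to +\infty$. I do not anticipate a serious obstacle beyond the need to justify that the quadratic-form bound transfers to every eigenvector of $\hat{H}(\zeta)$; since $W$ is bounded and self-adjoint one has $\mathcal{D}(\hat{H}(\zeta)) = \mathcal{D}(\hat{H}_1(\zeta))$, so the calculation applies verbatim.
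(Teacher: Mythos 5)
Your proof is correct, and the central mechanism is the same as the paper's: since $A_2$ is bounded from below when $B_-<0<B_+$ (and from above in the other case), the potential term $(\zeta-A_2(x))^2$ blows up uniformly in $x$ as $\zeta\to-\infty$ (resp.\ $+\infty$), forcing all eigenvalues of the squared operator to infinity. Where you differ is in the bookkeeping of the bounded terms: the paper squares the full $\hat{H}(\zeta)$ and must then control the cross terms $2V(x)D_x\sigma_1$ (absorbed into $D_x^2$ via Cauchy--Schwarz) and $2V(x)(\zeta-A_2(x))\sigma_2$ (which is not itself bounded, but is dominated by the quadratic term --- a point the paper leaves implicit), whereas you peel off $W=m\sigma_3+V\sigma_0$ at the operator level with the triangle inequality $|\mu_j(\zeta)|=\norm{\hat{H}(\zeta)\psi}\ge\norm{\hat{H}_1(\zeta)\psi}-\norm{W}$ and only square the clean magnetic Dirac part $\hat{H}_1$, whose square $D_x^2+(\zeta-A_2)^2-A_2'\sigma_3$ has no cross terms. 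Your route buys a shorter and arguably tighter estimate (an explicit lower bound $\sqrt{(A_{\min}-\zeta)^2-\norm{A_2'}_\infty}-C$) at essentially no cost; the only care needed is the domain identification $\mathcal{D}(\hat H(\zeta))=\mathcal{D}(\hat H_1(\zeta))$, which you correctly justify by boundedness of $W$. Both proofs are valid.
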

By continuity of the $\mu_j$, Lemma \ref{lemma:muInfty} implies that when $\pm B_- < 0 < \pm B_+$, each $\mu_j$ goes either to $+\infty$ or $-\infty$ as $\zeta \rightarrow \mp\infty$.
\begin{proof}
Suppose $B_-<0<B_+$.
We see that
\begin{align}\label{eq:square}
\begin{split}
    \hat{H}^2 (\zeta) = D^2_x + (\zeta - A_2 (x))^2 &+ m^2 (x) - A'_2(x) \sigma_3 - m'(x) \sigma_2 + V^2 (x)\\ &+ 2V(x) (D_x \sigma_1 + (\zeta - A_2(x)) \sigma_2 + m(x) \sigma_3) - i V'(x)\sigma_1,
    \end{split}
\end{align}
where $\sigma_0$ has been dropped from the notation.
Since $V$ is bounded, we know there exists a constant $C>0$ such that
\begin{align*}
    |(\psi, V D_x \sigma_1 \psi)| \le C \norm{\psi} \norm{\psi '}
\end{align*}
for all $\psi \in \mathcal{H}^1$.
Since $(\psi, D_x^2 \psi) = \norm{\psi '}^2$, it follows that the operator $D^2_x + V(x) D_x \sigma_1$ is bounded from below.
The function $A_2$ is also bounded from below, hence
\begin{align*}
    \lim_{\zeta \rightarrow -\infty}\; \inf_{x \in \mathbb{R}} (\zeta - A_2 (x))^2 = \infty.
\end{align*}
Since $m,V,m',V',A'_2$ are all bounded, we conclude that all eigenvalues of $\hat{H}^2 (\zeta)$ go to infinity uniformly as $\zeta \rightarrow -\infty$.
The proof for the case $B_+<0<B_-$ is similar.
This completes the result.
\end{proof}



We also need the following result, which ensures that the spectral flow is well defined.
\begin{lemma}\label{lemma:finite}
For every $\alpha \in \rho (H_+) \cap \rho (H_-)$, the set $\mathcal{T}_\alpha := \{j : \alpha \in \Ran (\mu_j)\}$ is finite.
\end{lemma}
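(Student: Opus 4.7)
The plan is to confine the set of $\zeta$ at which any branch $\mu_j$ can equal $\alpha$ to a bounded interval, and then use a uniform Lipschitz bound to reduce $|\mathcal{T}_\alpha|$ to the number of eigenvalues of a single operator $\hat{H}(\zeta_-)$ lying in a bounded spectral window. That number is finite because $\hat{H}(\zeta_-)$ has compact resolvent (as already used in Lemma~\ref{lemma:simple}).

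The Lipschitz bound is immediate from $\partial_\zeta \hat{H}(\zeta) = \sigma_2$, which has operator norm $1$: the Hellmann--Feynman computation already carried out in the proof of Lemma~\ref{lemma:mulambda} (with $\zeta$ in place of $\lambda$) will give $|\partial_\zeta \mu_j(\zeta)| \leq 1$ uniformly in $j \in \mathbb{Z}$, so every $\mu_j$ is globally $1$-Lipschitz.

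To confine to a bounded interval, I would first choose $\delta > 0$ with $(\alpha-\delta,\alpha+\delta) \cap (\sigma(H_+)\cup\sigma(H_-)) = \emptyset$, which is possible because the latter set is closed and misses $\alpha$. Lemma~\ref{lemma:simple} together with analyticity (Lemma~\ref{lemma:muanalytic}) forbids distinct branches from coinciding at any point: two analytic branches agreeing on a point would, by analytic continuation, coincide globally, contradicting their different indexing. Hence after reindexing I can arrange $\mu_j(\zeta) < \mu_{j+1}(\zeta)$ for every $\zeta \in \mathbb{R}$. In the cases of Lemma~\ref{lemma:mulim} where the one-sided limits $\mu_{j,\pm\infty}$ exist, this global ordering forces them to be a monotone enumeration of the appropriate multiset ($\sigma(H_+)+\sigma(H_-)$, $\sigma(H_\pm)$, or $\sigma(H_\mp)$), which straddles $\alpha$ at a unique index $j_*$. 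Convergence of the two straddling branches $\mu_{j_*}$ and $\mu_{j_*+1}$ together with the ordering then yields $\zeta_\pm$ such that $\mu_j(\zeta) \notin (\alpha-\delta/2,\alpha+\delta/2)$ for every $j$ once $\pm\zeta > \pm\zeta_\pm$. On the opposite side in the mixed-sign regimes, Lemma~\ref{lemma:muInfty} supplies the even stronger uniform escape $|\mu_j(\zeta)| > |\alpha|+1$ outright.

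Combining the two pieces: for any $j \in \mathcal{T}_\alpha$ the zero of $\mu_j - \alpha$ must lie in $[\zeta_-,\zeta_+]$, and $1$-Lipschitzness then gives $\mu_j(\zeta_-) \in [\alpha-L,\alpha+L]$ with $L := \zeta_+ - \zeta_-$. Compactness of the resolvent of $\hat{H}(\zeta_-)$ means this interval contains only finitely many eigenvalues of $\hat{H}(\zeta_-)$, so $|\mathcal{T}_\alpha| < \infty$. The main obstacle I anticipate is making the ordering/straddling argument fully rigorous: one must check that the non-crossing produced by simplicity really does deliver a bi-infinite sequence whose limits, when they exist, enumerate the correct multiset monotonically, so that the gap of size $\delta/2$ at the straddling index $j_*$ propagates to every index $j$ simultaneously. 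Once that bookkeeping is in place the remainder is routine.
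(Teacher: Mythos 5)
Your proposal is correct, but the core counting mechanism differs from the paper's. Both proofs share the first step: confining the $\zeta$ at which any branch can equal $\alpha$ to a compact interval $[\zeta_-,\zeta_+]$ via Lemmas \ref{lemma:mulim} and \ref{lemma:muInfty} (the paper simply asserts this; your ordering/straddling argument is a reasonable way to make the uniformity in $j$ explicit, and the mixed-sign side is indeed handled outright by Lemma \ref{lemma:muInfty}). From there the paper argues by contradiction: if $\mathcal{T}_\alpha$ were infinite, sequential compactness of $[-\bar\zeta,\bar\zeta]$ produces $\zeta_k\to\zeta_*$ with the counting function $N(\zeta_k,\alpha^2)\to\infty$, while $N(\zeta_*,\alpha^2+1)<\infty$, forcing some branch to be discontinuous at $\zeta_*$ --- contradicting Lemma \ref{lemma:muanalytic}. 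You instead propagate quantitatively: the Hellmann--Feynman bound $|\partial_\zeta\mu_j|=|(\psi_j,\sigma_2\psi_j)|\le 1$ (valid verbatim by the computation in Lemma \ref{lemma:mulambda}) transports every crossing back to the single fiber $\zeta_-$, where compactness of the resolvent caps the count. Your route is arguably more transparent and even yields an explicit bound $|\mathcal{T}_\alpha|\le \#\bigl(\sigma(\hat H(\zeta_-))\cap[\alpha-L,\alpha+L]\bigr)$, at the price of the extra Lipschitz input; the paper's route needs only continuity.

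One sub-justification is wrong as stated: two analytic functions agreeing at a point need not coincide globally (consider $\sin$ and $0$), so ``analytic continuation'' does not by itself forbid branches from touching. The correct reason is the one you also have in hand: a touching point would give $\hat H(\zeta)$ an eigenvalue of multiplicity at least two (the nearby spectral projection has rank $\ge 2$ by continuity), contradicting Lemma \ref{lemma:simple}. With that repair, the global strict ordering $\mu_j<\mu_{j+1}$, the injectivity of $j\mapsto\mu_j(\zeta_-)$, and hence the final count all go through.
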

Above, $\Ran (\mu_j) := \{\mu_j (\zeta) : \zeta \in \mathbb{R}\}$ is the \emph{range} (or image) of $\mu_j$.
\begin{proof}
Fix $\alpha \in \rho (H_+) \cap \rho (H_-)$.
The previous lemmas imply the existence of $\bar{\zeta} > 0$ such that $\mu_j (\zeta) \ne \alpha$ for all $j \in \mathbb{Z}$ and $|\zeta| > \bar{\zeta}$.
For $\zeta \in \mathbb{R}$ and $\beta >0$, let $N (\zeta,\beta) := |\{j : \mu^2_j (\zeta) < \beta \}|$.
Suppose by contradiction that $\mathcal{T}_\alpha$ is infinite. Then there exists a sequence $(\zeta_k) \subset [-\bar{\zeta},\bar{\zeta}]$ and a number $\zeta_* \in [-\bar{\zeta},\bar{\zeta}]$ such that $\zeta_k \rightarrow \zeta_*$ and $N(\zeta_k,\alpha^2) \rightarrow \infty$ as $k \rightarrow \infty$.
But Lemma \ref{lemma:simple} implies that $N(\zeta_*, \alpha^2 + 1) < \infty$, hence
there exist $i,j \in \mathbb{Z}$ such that $\mu_i (\zeta_*) \le - \sqrt{\alpha^2+1}$ and $\mu_j (\zeta_*) \ge \sqrt{\alpha^2 + 1}$.
Lemma \ref{lemma:simple} and
the fact that $N (\zeta_k, \alpha^2) \rightarrow \infty$ imply that for all $k$ sufficiently large, there exists $\ell \in \{i,j\}$ such that $\mu_\ell^2 (\zeta_k) < \alpha^2$.
Thus either $\mu_i$ or $\mu_j$ is not continuous at $\zeta_*$, which contradicts Lemma \ref{lemma:muanalytic}.
\end{proof}

It follows that for any $[E_1, E_2] \subset \rho (H_+) \cap \rho (H_-)$ and $\Phi \in \mathcal{C}^\infty_c (E_1, E_2)$, the kernel of $\Phi (H)$ is given by
\begin{align}\label{eq:kernelPhi}
    k_\Phi(x,x'; y-y') = \int_{\mathbb{R}} \sum_{j\in \mathcal{J}} \Phi(\mu_j(\zeta)) \psi_j(x, \zeta) \psi_j^*(x', \zeta) \frac{e^{i(y-y')\zeta}}{2\pi} d\zeta,
\end{align}
where $\{\psi_j (\cdot, \zeta)\}_{j \in \mathbb{Z}}$ denotes the normalized eigenfunctions of $\hat{H} (\zeta)$, and 
$\mathcal{J} \subset \mathbb{Z}$ is the finite set of indices corresponding to branches $\mu_j$ that ever enter the interval $[E_1, E_2]$.
Note that the above asymptotic analysis of $\mu_j$ implies that $\Phi (\mu_j (\zeta))$ vanishes whenever $|\zeta|$ is sufficiently large, hence the integral over $\mathbb{R}$ in \eqref{eq:kernelPhi} can be replaced by an integral over a bounded interval.

To get the spectral flow, it remains to determine the sign of $\mu_j (\zeta)$ as $\zeta \rightarrow -\infty$. We will do this first by imposing additional constraints on $\hat{H}$.
\begin{lemma}\label{lemma:gap}
Let $m_0 \in \mathbb{R}$ such that $m_0^2 > \norm{A'_2}_\infty$, and define
\begin{align*}
    \hat{H}_0 (\zeta) := D_x \sigma_1 + (\zeta - A_2(x)) \sigma_2 + m_0 \sigma_3.
\end{align*}
Then $\hat{H}_0 (\zeta)$ has a spectral gap in the interval $(-\Delta, \Delta)$, with $\Delta := \sqrt{m_0^2 - \norm{A'_2}_\infty}$.
\end{lemma}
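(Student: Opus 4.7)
The plan is to square the operator $\hat{H}_0(\zeta)$ and exploit the Pauli anticommutation relations to reduce it to a sum of manifestly nonnegative (or controllably bounded) operators. Then a spectral gap for $\hat{H}_0(\zeta)$ follows from a lower bound on $\hat{H}_0^2(\zeta)$.

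First I would compute $\hat{H}_0^2(\zeta)$. Since $\sigma_j^2 = I$ and $\sigma_j \sigma_k = -\sigma_k \sigma_j$ for $j \ne k$, the cross terms involving $\sigma_3$ cancel (because $m_0$ is a constant and commutes with $D_x$, while $\{\sigma_1,\sigma_3\} = \{\sigma_2,\sigma_3\} = 0$). The only surviving cross term is the $\sigma_1\sigma_2$ piece, and using $\sigma_1\sigma_2 = i\sigma_3$ together with the elementary commutator $[D_x, A_2(x)] = -iA_2'(x)$, I would obtain
\begin{align*}
\hat{H}_0^2(\zeta) = D_x^2 + (\zeta - A_2(x))^2 + m_0^2 - A_2'(x)\sigma_3.
\end{align*}

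Next I would bound this from below. The operator $D_x^2$ is nonnegative, and $(\zeta - A_2(x))^2$ is the multiplication operator by a nonnegative function, so it is also nonnegative. For the matrix-valued multiplication term, since $\sigma_3$ has eigenvalues $\pm 1$, the pointwise matrix $-A_2'(x)\sigma_3$ has eigenvalues $\mp A_2'(x)$, hence $-A_2'(x)\sigma_3 \ge -\|A_2'\|_\infty I$ as operators. Combining, I get
\begin{align*}
\hat{H}_0^2(\zeta) \ge \bigl(m_0^2 - \|A_2'\|_\infty\bigr) I = \Delta^2 I.
\end{align*}
Since $\hat{H}_0(\zeta)$ is self-adjoint (by the same considerations invoked for $\hat{H}(\zeta)$ earlier in the paper), the spectral mapping theorem for squaring gives $\sigma(\hat{H}_0^2(\zeta)) \subset [\Delta^2, \infty)$, and therefore $\sigma(\hat{H}_0(\zeta)) \subset (-\infty, -\Delta] \cup [\Delta, \infty)$, which is the claimed gap.

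The only place where any care is needed is in justifying the quadratic-form inequality on the natural domain, since $D_x^2$ is unbounded; but on the form domain $\mathcal{H}^1(\mathbb{R})\otimes \mathbb{C}^2$ the calculation above is rigorous term by term, and the inequality $\hat{H}_0^2(\zeta)\ge \Delta^2$ in the sense of quadratic forms suffices to place $(-\Delta,\Delta)$ in the resolvent set. So there is no real obstacle here — the proof is essentially a short computation once the operator is squared.
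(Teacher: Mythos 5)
Your proof is correct and follows exactly the paper's own route: the paper simply observes that $\hat{H}_0^2(\zeta) = D_x^2 + (\zeta - A_2(x))^2 + m_0^2 - A_2'(x)\sigma_3$ and states that the result follows, which is precisely the lower bound $\hat{H}_0^2(\zeta) \ge \Delta^2$ you spell out. Your version just fills in the details the paper leaves implicit.
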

The above implies that for $\hat{H}_0$, if $B_- < 0 < B_+$, every branch converging to a positive (resp. negative) value as $\zeta \rightarrow \infty$ goes to $+\infty$ (resp. $-\infty$) as $\zeta \rightarrow -\infty$.
\begin{proof}
Observe that $\hat{H}_0^2 (\zeta) := D^2_x + (\zeta - A_2(x))^2 + m_0^2 -A'_2 (x) \sigma_3$, and the result easily follows.
\end{proof}

We are now ready to complete the proof of Theorem \ref{prop:sf0}. The idea is to treat $\hat{H}$ as a perturbation of $\hat{H}_0$. 
For $\zeta \in \mathbb{R}$ and $(\lambda_1,\lambda_2) \in [0,1]^2$, define
\begin{align*}
    \hat{H} (\zeta; \lambda_1, \lambda_2) := D_x \sigma_1 + (\zeta - A_2(x)) \sigma_2 + m_0 \sigma_3 + \lambda_1 (m(x) - m_0) \sigma_3 + \lambda_2 V(x) \sigma_0,
\end{align*}
and let $\{\mu_j (\zeta;\lambda_1,\lambda_2)\}_{j \in \mathbb{Z}}$ denote the eigenvalues of $\hat{H} (\zeta;\lambda_1,\lambda_2)$.
We use the shorthand $\lambda:= (\lambda_1,\lambda_2)$ and $\mu_{j,\pm \infty}(\lambda) := \lim_{\zeta \rightarrow \pm \infty} \mu_j (\zeta;\lambda)$.
By Lemma \ref{lemma:mulambda},
the $\mu_j (\zeta;\lambda)$ are analytic in $(\zeta;\lambda)$
with 
$\partial_{\lambda_i} \mu_j (\zeta; \lambda)$ bounded uniformly in $(\lambda;\zeta)$ for $i \in \{1,2\}$.
This means
the limits as $\zeta \rightarrow \pm \infty$ of the $\mu_j (\zeta; \lambda)$ depend continuously on $\lambda$.
Recall that Lemmas \ref{lemma:mulim}, \ref{lemma:muInfty} and \ref{lemma:gap} give us a full description of the $\mu_j (\zeta; 0)$. 

When $B_-< 0 < B_+$, the 
$\mu_{j,\infty} (0)$ are known (and finite) and $\mu_{j,-\infty} (0) = \pm \infty$ if and only if $\pm \mu_{j,\infty} (0) > 0$.
The uniform bounds on $\partial_{\lambda_i} \mu_j$ imply that $\mu_{j,-\infty} (\lambda) = \mu_{j,-\infty} (0)$ for all $\lambda \in [0,1]^2$.
Combined with Lemma \ref{lemma:simple} and 
the fact that the multiset $\{\mu_{j,\infty} (1,1)\}_{j \in \mathbb{Z}}$ is known (Lemmas \ref{lemma:spectrumhat} and \ref{lemma:mulim}), this will allow us to obtain the limits as $\zeta \rightarrow \pm \infty$ of each $\mu_j (\zeta;1,1)$ via a smooth transition of $\lambda$ from $0$ to $(1,1)$. We find it easiest to first fix $\lambda_2 = 0$ while smoothly varying $\lambda_1$ from $0$ to $1$; then fix $\lambda_1 = 1$ while smoothly varying $\lambda_2$ from $0$ to $1$.
The case $B_+<0<B_-$ is handled similarly.

When $0<B_+,B_-$,
Lemma \ref{lemma:mulim} gives us the 
multisets $L_\pm := \{\mu_{j,\pm\infty} (\lambda)\}_{j\in \mathbb{Z}}$ 
for all $\lambda \in [0,1]^2$.
The only thing left is to pair each 
element of $L_-$ with an element of $L_+$ (i.e. for each $\mu_{j,-\infty} (\lambda) \in L_-$, determine $\mu_{j,+\infty} (\lambda)$).
When $\lambda = 0$, this pairing follows immediately from Lemma \ref{lemma:gap}.
That is (using a natural choice of indices), 
\begin{align*}
\dots,\quad \mu_{-1,\pm \infty} (0) = -\sqrt{2 |B_\pm| + m_0^2}, \quad \mu_{0,\pm \infty} (0) = m_0, \quad \mu_{1,\pm \infty} (0) = \sqrt{2 |B_\pm| + m_0^2},\quad \dots
\end{align*}
Since 
$\mu_{i,\pm \infty} (\lambda) \ne \mu_{j,\pm \infty} (\lambda)$
whenever $i \ne j$,
it is straightforward to obtain the limits of each $\mu_j$ for $\lambda = (1,1)$ (again by a smooth transition from $\lambda = 0$ to $\lambda = (1,1)$).
The case $B_+, B_- <0$ is handled similarly.

\begin{proof}[Proof of Theorem \ref{prop:sf0}]
Take $m_0$ as in Lemma \ref{lemma:gap}.
As in Lemma \ref{lemma:mulambda}
the eigenvalues and eigenprojections of $\hat{H} (\zeta; \lambda)$ are holomorphic in $(\zeta;\lambda)$.
It then follows from Lemma \ref{lemma:mulambda} that
\begin{align}\label{eq:mulambda}
    |\partial_{\lambda_1} \mu (\zeta, \lambda)| \le 
    \norm{m - m_0}_\infty, \qquad |\partial_{\lambda_2} \mu (\zeta, \lambda)| \le \norm{V}_\infty
\end{align}
for all eigenvalues $\mu (\zeta;\lambda)$ of $\hat{H} (\zeta;\lambda)$.
Define $$\hat{H}_{\pm} (\zeta;\lambda_1,\lambda_2) := D_x \sigma_1 + (\zeta - x B_\pm) \sigma_2 + m_0 \sigma_3 + \lambda_1 (m_\pm - m_0) \sigma_3 + \lambda_2 V_\pm \sigma_0.$$
Lemma \ref{lemma:spectrumhat} states that the spectrum of $\hat{H}_{\pm} (\zeta;0,0)$ consists entirely of eigenvalues and is given by
\begin{align*}
    \sigma (\hat{H}_{\pm} (\zeta;0,0)) &=
    \{\eps \sqrt{2k |B_\pm| + m_0^2} : \eps \in \{-1,1\}, k \in \mathbb{N}_+\} \bigcup \{m_0 \sgn (B_\pm)\}\\ &=: \{\tilde{\nu}_{k,\pm} : k \in \mathbb{Z}\}.
\end{align*}
Note that $|\tilde{\nu}_{k,\pm}| \ge |m_0| > 0$ for all $k$.

\begin{enumerate}
\item \label{it:infty} Suppose $B_- < 0 < B_+$. Recall that $\{\mu_j (\zeta; \lambda)\}_{j \in \mathbb{Z}}$ denotes the (holomorphic in $(\zeta; \lambda)$) eigenvalues of $\hat{H} (\zeta; \lambda)$. 
Then there is a bijection $\iota : \mathbb{Z} \rightarrow \mathbb{Z} \times \{+,-\}$ such that $\lim_{\zeta \rightarrow \infty} \mu_j (\zeta; 0) = \tilde{\nu}_{\iota (j)}$ for all $j \in \mathbb{Z}$.
Lemma \ref{lemma:muInfty} asserts that $|\mu_j (\zeta;0)| \rightarrow \infty$ as $\zeta \rightarrow -\infty$.
Hence Lemma \ref{lemma:gap} implies that $\mu_{j,-\infty} (0) = \pm \infty$ if and only if $\pm \tilde{\nu}_{\iota (j)} > 0$. That is, the levels $\tilde{\nu}_{\iota(j)} > 0$ correspond to branches $\mu_j (\cdot\; ;0)$ that go to $+\infty$ at $-\infty$, while the levels $\tilde{\nu}_{\iota (j)} < 0$ correspond to branches $\mu_j (\cdot \; ;0)$ that go to $-\infty$ at $-\infty$.
Thus $0$ (or any value in $(-|m_0|, |m_0|)$) separates the 
branches $\mu_j (\cdot \; ;0)$ that go to $+\infty$ at $-\infty$ from those that go to $-\infty$ at $-\infty$, as demonstrated by Figure \ref{fig:sf0} (bottom right panel). 

We now analyze the spectrum of $\hat{H} (\zeta;\lambda)$ as $\lambda$ is continuously deformed from $0$ to $(1,0)$ to $(1,1)$, thus separating the effects of $m$ and $V$.
Note that we can choose $m_0$ such that
the spectral flow of $H$ through $\alpha$ is well defined when $\lambda \in \{0,(1,0)\}$ (by definition the spectral flow is well defined when $\lambda = (1,1)$).
We 
follow the convention that $\mu_j (\zeta; \lambda) < \mu_{j+1} (\zeta;\lambda)$ and $\mu_{1,\infty} (0) = |m_0|$.
This means $\mu_{j,\infty} (0) > 0$ and $\mu_{j,-\infty} (0) = + \infty$ for all $j > 0$, while
$\mu_{j,\infty} (0) < 0$ and $\mu_{j,-\infty} (0) = - \infty$ for all $j \le 0$.
From \eqref{eq:mulambda}, it follows that for all $\lambda \in [0,1]^2$,
$\mu_{j,-\infty} (\lambda) = + \infty$ for all $j > 0$ and $\mu_{j,-\infty} (\lambda) = - \infty$ for all $j \le 0$.
Moreover, \eqref{eq:mulambda} implies that $\mu_{1,\infty} (1, 0) = \max \{m_+, m_-\}$ and $\mu_{0,\infty} (1, 0) = \min \{m_+, m_-\}$.
By Lemma \ref{lemma:mulim}, 
the values $\mu_{j,\infty} (1,0)$ can be read off directly from \eqref{eq:spectrumhat} with $V_\pm = 0$,
as they are exactly the elements of
$\sigma (\hat{H}_+ (0; 1,0))+ \sigma (\hat{H}_- (0;1,0)$ (in in increasing order with the limit of $\mu_1$ already determined above).
This confirms Theorem \ref{prop:sf0} in the case that $V \equiv 0$.

Now we set $\lambda_1 = 1$ and analyze the transition of $\lambda_2$ from $0$ to $1$.
Let $\{\nu_{0,i,\pm}\}_{i \in \mathbb{Z}}$ denote the eigenvalues of $\hat{H}_\pm (\zeta;1,0)$, where $\nu_{0,i,\pm} < \nu_{0,i+1,\pm}$ for all $i\in \mathbb{Z}$. 
The eigenvalues of $\hat{H}_\pm (\zeta; 1, \lambda_2)$ are then given by $\nu_{i,\pm}(\lambda_2) :=\nu_{0,i,\pm} + \lambda_2 V_\pm$ for $i \in \mathbb{Z}$.
To each $\nu_{i,\pm}(\lambda_2)$ is associated a unique index $j = j (\lambda_2,i,\pm)$
such that $\mu_{j(\lambda_2, i,\pm),\infty} (1,\lambda_2) = \nu_{i,\pm}(\lambda_2)$.
Let $\eps_{i, \pm}(\lambda_2) \in \{1,-1\}$ such that
$\eps_{i,\pm}(\lambda_2)=1$ if and only if $\mu_{j(\lambda_2,i,\pm), -\infty} (1,\lambda_2) = +\infty$. 
It follows that 
\begin{align}\label{eq:sfeps}
\begin{split}
\text{SF} &(H (1,\lambda_2); \alpha) =\\
&\sum_{\delta \in \{+,-\}} (|\{ i:\nu_{i,\delta}(\lambda_2) > \alpha, \; \eps_{i, \delta}(\lambda_2) = -1\}| -|\{i:\nu_{i,\delta}(\lambda_2) < \alpha, \; \eps_{i, \delta}(\lambda_2) = 1 \}|).
\end{split}
\end{align}
By \eqref{eq:mulambda}, we know that if $\nu_{i,+}(\lambda_2) \notin \{\nu_{i,-}(\lambda_2)\}_{i \in \mathbb{Z}}$ for all $\lambda_2$ in an open interval $(a,b)$, then $\eps_{i,+}(\lambda_2)$ is constant over $\lambda_2 \in (a,b)$.
For values $\lambda_2^*$ and indices $i,j$ such that
$\nu_{i,+}(\lambda_2^*) = \nu_{j,-}(\lambda_2^*)$, 
Lemma \ref{lemma:simple} implies that $\eps_{i,+}(\lambda_2)$ and $\eps_{j,-}(\lambda_2)$ trade signs across $\lambda_2^*$;
that is, $\eps_{i,+}(\lambda_2^* + \eta) = \eps_{j,-}(\lambda_2^* - \eta)$ and $\eps_{j,-}(\lambda_2^* + \eta) = \eps_{i,+}(\lambda_2^* - \eta)$ for all $\eta>0$ sufficiently small.
But 
this trade of signs has no effect on the spectral flow (i.e. if $\lim_{\zeta \rightarrow \infty} \tilde{\mu}_1 (\zeta) = 1$ and $\lim_{\zeta \rightarrow \infty} \tilde{\mu}_{0} (\zeta) = -1$ for some smooth branches of spectrum $\tilde{\mu}_1$ and $\tilde{\mu}_0$, then any spectral flow is independent of whether $\lim_{\zeta \rightarrow -\infty} (\tilde{\mu}_1 (\zeta), \tilde{\mu}_{0} (\zeta)) = (+\infty, -\infty)$ or $\lim_{\zeta \rightarrow -\infty} (\tilde{\mu}_1 (\zeta), \tilde{\mu}_{0} (\zeta)) = (-\infty, +\infty)$).
This means that when evaluating the right-hand side of \eqref{eq:sfeps}, we can replace $\eps_{i, \delta}(\lambda_2)$ by
$\eps_{i, \delta}(0)$ to obtain
\begin{align*}
    \text{SF} (H (1,1); \alpha) =\sum_{\delta \in \{+,-\}} (|\{ i\le 0 :\nu_{0,i,\delta} > \alpha-V_\delta\}| -|\{i>0:\nu_{0,i,\delta} < \alpha-V_\delta \}|),
\end{align*}
where we use the convention that $\eps_{i,\pm}(0) = 1$ if and only if $i > 0$.
To verify that the above
expression yields Theorem \ref{prop:sf0} is a straightforward but tedious exercise.
We will do so assuming that $m_- < 0 < m_+$, $|m_-| \le |m_+|$ and $\alpha - V_-, \alpha - V_+ > 0$, and leave the other cases (which are handled similarly) to the reader.

Under the above assumptions, it follows that
\begin{align}\label{eq:sfm}
    \text{SF} (H (1,1); \alpha) =M_0 -\sum_{\delta \in \{+,-\}} |\{i>0:\nu_{0,i,\delta} < \alpha-V_\delta \}|=: M_0-M_+ - M_-,
\end{align}
where 
\begin{align*}
    M_0=
    \begin{cases}
    1, & \alpha-V_- < |m_-|\\
    0, & \text{else},
    \end{cases}
\end{align*}
\begin{align*}
    M_+ =
    \begin{cases}
    0, & \alpha - V_+ < m_+\\
    k; & \sqrt{2 (k-1) B_+ + m_+^2} < \alpha-V_+ < \sqrt{2 k B_+ + m_+^2}, \quad k \in \mathbb{N}_+
    \end{cases}
\end{align*}
and 
\begin{align*}
    M_- =
    \begin{cases}
    0, & \alpha - V_- < \sqrt{2 |B_-| + m_-^2}\\
    k; & \sqrt{2 k |B_-| + m_-^2} < \alpha-V_- < \sqrt{2 (k+1) |B_-| + m_-^2}, \quad k \in \mathbb{N}_+.
    \end{cases}
\end{align*}
We will now show that \eqref{eq:sfm} agrees with Theorem \ref{prop:sf0}.
Simplifying the formula from Theorem \ref{prop:sf0}, we obtain that
\begin{align*}
    N(H_+;\alpha)=
    \begin{cases}
    0; & \alpha-V_+< \sqrt{2B_+ + m_+^2},\\
    k; & \sqrt{2 k B_+ + m_+^2} < \alpha-V_+ < \sqrt{2 (k+1) B_+ + m_+^2}, \quad k \in \mathbb{N}_+
    \end{cases}
\end{align*}
and
\begin{align*}
    N(H_-;\alpha)=
    \begin{cases}
    0; & \alpha-V_-< \sqrt{2|B_-| + m_-^2},\\
    k; & \sqrt{2 k |B_-| + m_-^2} < \alpha-V_- < \sqrt{2 (k+1) |B_-| + m_-^2}, \quad k \in \mathbb{N}_+.
    \end{cases}
\end{align*}
Note that
\begin{align*}
    I(H_+;\alpha) &= 
\sgn (\alpha-V_+-m_+) (N(H_+;\alpha) + \frac{1}{2}),\\
I(H_-;\alpha) &= -\sgn (\alpha-V_- +m_-)(N(H_-;\alpha) + \frac{1}{2}).
\end{align*}
Observe that $N(H_\pm;\alpha) = 0$ if $\sgn (\alpha - V_\pm \mp m_\pm) < 0$, hence
\begin{align*}
I(H_+;\alpha) = 
    \begin{cases}
    N(H_+;\alpha) + \frac{1}{2}, & \alpha - V_+ - m_+ > 0\\
    -\frac{1}{2}, & \alpha - V_+ - m_+ < 0
    \end{cases}
\end{align*}
and
\begin{align*}
    I(H_-;\alpha) = 
    \begin{cases}
    -N(H_-;\alpha) - \frac{1}{2}, & \alpha - V_- + m_- > 0\\
    \frac{1}{2}, & \alpha - V_- + m_- < 0.
    \end{cases}
\end{align*}
Observe that $N(H_-;\alpha) = M_-$ and
\begin{align*}
    N(H_+;\alpha) =
    \begin{cases}
    M_+, & \alpha - V_+ < m_+\\
    M_+ -1, & \text{else}.
    \end{cases}
\end{align*}
Thus
\begin{align*}
I(H_+;\alpha) = 
    \begin{cases}
    M_+ - \frac{1}{2}, & \alpha - V_+ - m_+ > 0\\
    -\frac{1}{2}, & \alpha - V_+ - m_+ < 0
    \end{cases}
\end{align*}
and
\begin{align*}
    I(H_-;\alpha) = 
    \begin{cases}
    -M_- - \frac{1}{2}, & \alpha - V_- + m_- > 0\\
    \frac{1}{2}, & \alpha - V_- + m_- < 0.
    \end{cases}
\end{align*}
Since $M_\pm = 0$ whenever $\alpha - V_\pm -\mp m_\pm < 0$,
we conclude that
\begin{align*}
    I(H_-;\alpha) - I(H_+;\alpha) = \begin{cases}
    -M_- - M_+, & \alpha - V_- + m_- > 0,\\
    -M_- - M_+ + 1, & \alpha - V_- + m_- < 0.
    \end{cases}
\end{align*}
This agrees with \eqref{eq:sfm}, as desired.


\item \label{it:finite} Suppose $0 < B_+, B_-$. The argument is similar to case \ref{it:infty},
only now each $\mu (\zeta; \lambda)$ converges also as $\zeta \rightarrow -\infty$. 
By Lemmas \ref{lemma:simple} and \ref{lemma:gap}, we know that
the branch $\mu (\zeta; 0)$ that converges to $m_0$ as $\zeta \rightarrow -\infty$ also converges to $m_0$ as $\zeta \rightarrow +\infty$.
Thus \eqref{eq:mulambda} implies that
the branch $\mu (\zeta; 1,0)$ that converges to $m_-$ as $\zeta \rightarrow -\infty$ must converge to $m_+$ as $\zeta \rightarrow +\infty$. Finally, this implies that the branch $\mu (\zeta; 1,1)$ that converges to $m_- + V_-$ as $\zeta \rightarrow -\infty$ must converge to $m_+ + V_+$ as $\zeta \rightarrow \infty$, and the result follows.
\end{enumerate}
The case $B_+ < 0 < B_-$ is handled similarly to case \ref{it:infty}; the case $B_+, B_- < 0$ is handled similarly to case \ref{it:finite}. This completes the result.
\end{proof}

\section{Physical observable}\label{sec:stability}
Let $P (y) = P \in \fs (0,1)$ and $\varphi \in \fs (0,1; E_1, E_2)$ for some $[E_1, E_2] \subset \rho (H_+) \cap \rho (H_-)$.
Define the interface conductivity associated to $H$ in \eqref{eq:md} by 
\begin{equation}\label{eq:sigmaI}
 \sigma_I := \Tr i [H,P] \varphi ' (H).
\end{equation}
The goal of this section is to relate $\sigma_I$ to the spectral flow from Section \ref{sec:spec}, and to prove its stability with respect to perturbations of $H$.
To do so, we will use well-known results on pseudo-differential operators ($\Psi$DOs) and the Helffer-Sj\"ostrand formula, which are summarized in Appendix \ref{sec:PsiDO} 
along with the notation that will be used below. 
Let $$\sigma (x,\xi,\zeta) := \xi \sigma_1 + (\zeta - A_2 (x)) \sigma_2 + m(x) \sigma_3 + V(x) \sigma_0$$ denote the \emph{Weyl symbol} of $H$ (so that $H = \Op (\sigma)$).
Similarly, define
\begin{align*}
    \sigma_\pm (x,\xi,\zeta) := \xi \sigma_1 + (\zeta - x B_\pm) \sigma_2 + m_\pm \sigma_3 + V_\pm \sigma_0
\end{align*}
so that $H_\pm = \Op (\sigma_\pm)$.
Note that as opposed to the setting considered in \cite{bal2022mathematical,QB}, the symbols $\sigma_\pm$ still depend on the spatial variable $x$.

As mentioned in Section \ref{sec:intro}, $\sigma$ is not elliptic because its eigenvalues can stay bounded even as $x$ and $\zeta$ get large. Still, using the fact that
\begin{align*}
    \sigma^2 (x,\xi,\zeta) = \xi^2 + (\zeta - A_2 (x))^2 + m^2(x) + V^2 (x) + 2 V(x) (\xi \sigma_1 + (\zeta - x B_\pm) \sigma_2 + m_\pm \sigma_3),
\end{align*} 
we obtain that
\begin{align}\label{eq:ell}
    |\det \sigma (x,\xi,\zeta)|^{1/2} \ge c_1 \aver{\xi, \zeta - A_2 (x)} - c_2
\end{align}
for some $0 < c_1 < 1$ and $c_2 >0$. In order to make use of this inequality, we need to verify
\begin{lemma} \label{lemma:of}
$\aver{\xi, \zeta - A_2 (x)}$ is an order function.
\end{lemma}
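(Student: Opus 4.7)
The plan is to verify the standard defining inequality for an order function on phase space: namely, that there exist constants $C>0$ and $N \ge 0$ such that
$$m(X) \le C\,\aver{X-Y}^N\, m(Y) \qquad \text{for all } X,Y \in \Rm^3,$$
where $m(x,\xi,\zeta) := \aver{\xi, \zeta - A_2(x)}$ and $X=(x,\xi,\zeta)$, $Y=(x',\xi',\zeta')$. I expect $N=1$ to suffice, and the whole argument reduces to a triangle-inequality estimate once the Lipschitz continuity of $A_2$ is established.

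First I would show that $A_2$ is globally Lipschitz on $\Rm$. Writing $A_2(x) = xB(x)$ with $B \in \fs(B_-,B_+)$, the derivative is $A_2'(x) = B(x) + xB'(x)$. Here $B$ is bounded by virtue of being a switch function, while $B'$ has compact support since $B$ is constant outside a bounded interval $[\alpha,\beta]$. Consequently $xB'(x)$ is bounded, so $A_2'$ is bounded and there exists $L>0$ with $|A_2(x) - A_2(x')| \le L|x-x'|$ for every $x, x' \in \Rm$.

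Next, for arbitrary $X$ and $Y$ the triangle inequality together with the Lipschitz bound gives
$$|\zeta - A_2(x)| \le |\zeta' - A_2(x')| + |\zeta - \zeta'| + L|x-x'|, \qquad |\xi| \le |\xi'| + |\xi - \xi'|.$$
Squaring and applying the elementary bound $(a+b+c)^2 \le 3(a^2+b^2+c^2)$, one finds a constant $K=K(L)$ with
$$m(X)^2 \le K\, m(Y)^2 + K |X-Y|^2.$$
Since $m(Y) \ge 1$ and $\aver{X-Y}^2 = 1 + |X-Y|^2$, the right-hand side is bounded by $2K\, m(Y)^2 \aver{X-Y}^2$. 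Taking square roots yields the order function inequality with $N=1$.

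The only mildly nontrivial input is the Lipschitz bound, which depends crucially on the product structure $A_2(x) = xB(x)$ with $B$ a switch function so that $xB'(x)$ is compactly supported rather than of linear growth; without this, $A_2$ would fail to be globally Lipschitz and the triangle-inequality estimate would collapse. After that step, the remainder is entirely routine and I anticipate no further obstacle.
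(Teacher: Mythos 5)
Your proposal is correct and follows essentially the same route as the paper: both arguments reduce the order-function inequality to controlling the increment $|A_2(x)-A_2(x')|$ by (a constant times) $|x-x'|$ plus a constant, exploiting that $B$ is a switch function. Your shortcut of noting that $A_2'=B+xB'$ is globally bounded (since $B'$ is compactly supported), hence $A_2$ is globally Lipschitz, is a slightly cleaner packaging of the same estimate the paper obtains via the algebraic splitting $x_2B(x_2)-x_1B(x_1)=(x_2-x_1)B(x_2)+x_1(B(x_2)-B(x_1))$.
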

We refer to Appendix \ref{sec:PsiDO} for the definitions of an order function and $\aver{\cdot}$.
\begin{proof}
It is known (see e.g. \cite{Zworski}) that $\mathbb{R}^2 \ni Y \mapsto\aver{Y}$ is an order function. Thus there exist positive constants $C_0$ and $N_0$ such that
\begin{align*}
    \aver{\xi_1, \zeta_1 - A_2 (x_1)} \le C_0 \aver{\xi_2 - \xi_1, \zeta_2 - A_2 (x_2) - (\zeta_1 - A_2 (x_1))}^{N_0} \aver{\xi_2, \zeta_2 - A_2 (x_2)}
\end{align*}
for all $(x_1,\xi_1,\zeta_1), (x_2,\xi_2,\zeta_2) \in \mathbb{R}^3$.
We write $$(\zeta_2 - A_2 (x_2) - (\zeta_1 - A_2 (x_1)))^2 \le 2 ((\zeta_2 - \zeta_1)^2 + (A_2 (x_2) - A_2 (x_1))^2,$$ 
and seek to bound the second term on the above right-hand side. We have
\begin{align*}
    (A_2 (x_2) - A_2 (x_1))^2 = (x_2 B (x_2) - x_1 B (x_1))^2 &= ((x_2 - x_1) B (x_2) + x_1 (B(x_2) - B(x_1)))^2\\
    &\le 2 \norm{B}_\infty^2 (x_2 - x_1)^2 + 2x_1^2 (B (x_2)- B(x_1))^2.
\end{align*}
Since $B(x_2) - B(x_1)$ vanishes whenever $x_1$ and $x_2$ are sufficiently large and of the same sign, there exist positive constants $C_1$ and $C_2$ such that
$x_1^2 (B (x_2)- B(x_1))^2 \le C_1 (x_2-x_1)^2 + C_2$.
We conclude that
\begin{align*}
    \aver{\xi_2 - \xi_1, \zeta_2 - A_2 (x_2) - (\zeta_1 - A_2 (x_1))} \le C \aver{x_2 - x_1, \xi_2 - \xi_1, \zeta_2 - \zeta_2}
\end{align*}
for some $C>0$, and the result is complete.
\end{proof}

We begin by showing that $\sigma_I$ is well defined.
To do this, we will need
two useful decay properties for symbols of related operators.
\begin{lemma}\label{lemma:resolvent}
If $z \in \mathbb{C}$ such that $\Im z \ne 0$,
then $(z-H)^{-1} = \Op (r_{z})$ for some $r_{z} \in S(\aver{\xi, \zeta - A_2 (x)}^{-1})$.
\end{lemma}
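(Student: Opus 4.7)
The plan is to build a parametrix for $z - H$ in the Weyl calculus attached to the order function $\mathbf{m}(x,\xi,\zeta) := \aver{\xi, \zeta - A_2(x)}$ (an order function by Lemma \ref{lemma:of}) and then identify this parametrix with the true resolvent by Beals's criterion. The main obstacle will be showing that the matrix symbol $(z-\sigma)^{-1}$ has all its seminorms in $S(\mathbf{m}^{-1})$, because $\sigma$ is not elliptic in the classical sense; the $x$-dependence of $\mathbf{m}$ through $A_2$ is precisely what makes the calculus work.

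First I would verify that $(z-\sigma)^{-1} \in S(\mathbf{m}^{-1})$. The matrix $\sigma$ is Hermitian with real eigenvalues, so $z-\sigma$ is pointwise invertible whenever $\Im z \ne 0$; a direct computation gives $\det(z-\sigma) = (z-V(x))^2 - m(x)^2 - \xi^2 - (\zeta - A_2(x))^2$, from which $|\det(z-\sigma)| \gtrsim \mathbf{m}^2 + (\Im z)^2$, while the adjugate sits in $S(\mathbf{m})$. Pointwise this gives $|(z-\sigma)^{-1}| \lesssim \mathbf{m}^{-1}$. The seminorm estimates follow by iterating $\partial_\gamma(z-\sigma)^{-1} = (z-\sigma)^{-1}(\partial_\gamma \sigma)(z-\sigma)^{-1}$ and using that $\partial_\gamma \sigma \in S(1)$ for all $\gamma \in \{x,\xi,\zeta\}$: here it matters that $A_2'(x)=B(x)+xB'(x)$ is bounded (because $B$ is bounded and $B^{(k)}$ is compactly supported for $k\ge 1$), and that $m', V'$ are bounded.

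Second, setting $Q_0 := \Op((z-\sigma)^{-1}) \in \Op(S(\mathbf{m}^{-1}))$, the Weyl product expansion gives $Q_0 (z - H) = I + \Op(e_1)$ with $e_1 \in S(\mathbf{m}^{-1})$, since every term beyond the leading $(z-\sigma)^{-1}(z-\sigma)= I$ contains at least one derivative of $\sigma$ (of order $1$ rather than $\mathbf{m}$) and thus gains a factor of $\mathbf{m}^{-1}$. Iterating in the standard fashion, for every $N \in \Nm$ there exists $Q_N \in \Op(S(\mathbf{m}^{-1}))$ with $Q_N(z-H) = I + R_N$ and $R_N \in \Op(S(\mathbf{m}^{-N}))$.

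Finally, $H$ is self-adjoint and $\Im z \ne 0$, so $(z-H)^{-1}$ exists on $\mH$ and $(z-H)^{-1} = Q_N - (z-H)^{-1} R_N$. Beals's criterion \cite[Proposition 8.3]{DS} places $(z-H)^{-1}$ in $\Op(S(1))$: iterating $\ad_L((z-H)^{-1}) = -(z-H)^{-1}[L,H](z-H)^{-1}$ for phase-space linear forms $L$, together with $[x,H]=-i\sigma_1$, $[y,H]=-i\sigma_2$, $[D_y,H]=0$, and $[D_x,H] = iA_2'\sigma_2 - im'\sigma_3 - iV'\sigma_0$ (all bounded on $\mH$), expresses every iterated commutator as a finite product of bounded operators. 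Composition in the Weyl calculus then yields $(z-H)^{-1} R_N \in \Op(S(1)\cdot S(\mathbf{m}^{-N})) \subset \Op(S(\mathbf{m}^{-1}))$ for any $N \ge 1$, giving $(z-H)^{-1} \in \Op(S(\mathbf{m}^{-1}))$ as required.
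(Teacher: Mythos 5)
Your argument is correct and takes essentially the same route as the paper: first place $(z-H)^{-1}$ in $\Op(S(1))$ via Beals's criterion using boundedness of the commutators $[L,H]$, then upgrade to $S(\aver{\xi,\zeta-A_2(x)}^{-1})$ by composing with the approximate inverse $\Op((z-\sigma)^{-1})$, whose symbol estimates rest on the lower bound \eqref{eq:ell} and on Lemma \ref{lemma:of}. (The paper performs only a single parametrix step, which already suffices; also, your identity $(z-H)^{-1}=Q_N-(z-H)^{-1}R_N$ is the one associated with a right parametrix $(z-H)Q_N=I+R_N$, whereas you constructed a left one --- the correct identity is $(z-H)^{-1}=Q_N-R_N(z-H)^{-1}$, and either orientation yields the conclusion.)
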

Note that $S(\aver{\xi, \zeta - A_2 (x)}^{-1})$ is well defined, by Lemma \ref{lemma:of}.
\begin{proof}
Since $H$ is self-adjoint, we know that $(z-H)^{-1}$ is well defined and bounded.
To obtain bounds for the symbol of $(z-H)^{-1}$ (and show that it is a $\Psi$DO in the first place), we use Beals's criterion presented in \cite[Proposition 8.3]{DS}. This result states that $(z-H)^{-1} = \Op (r_z)$ for some $r_z \in S(1)$ if and only if for any collection of linear forms $$\ell_1 (x,y,\xi,\zeta), \ell_2 (x,y,\xi,\zeta), \dots , \ell_N (x,y,\xi,\zeta)$$ on $\mathbb{R}^4$, the operator $\ad_{L_1} \circ \dots \circ \ad_{L_N} \circ (z-H)^{-1}$ is bounded in $L^2 (\mathbb{R}^2) \otimes \mathbb{C}^2$, where $L_j := \Op (\ell_j)$ and $\ad_A B := [A,B]$.
Since $\sigma_\xi = \sigma_1$ and $\sigma_\zeta =\sigma_2$ are constant and $\sigma_x$ is bounded,
it is clear that $[L_j,H]$ is bounded for any such $L_j$.
Thus the identity $[\mathcal{O}, (z-H)^{-1}] = (z-H)^{-1} [\mathcal{O}, H] (z-H)^{-1}$ easily implies that
$(z-H)^{-1} = \Op (r_{z})$ for some $r_{z} \in S(1)$.

By \eqref{eq:ell} and the composition calculus, 
we have
\begin{align*}
    (z-\sigma) \sharp (z-\sigma)^{-1} = 1 + b_{z},
\end{align*}
where
$b_{z} \in S (\aver{\xi, \zeta - A_2 (x)}^{-2})$.
Indeed, all derivatives of $\sigma$ are bounded and $(z-\sigma)^{-1} \in S(\aver{\xi,\zeta - A_2(x)}^{-1})$.
Letting $G_{z} := \Op ((z-\sigma)^{-1})$ and $B_{z} := \Op (b_{z})$, this means
\begin{align*}
    (z-H) G_{z} = 1 + B_{z}.
\end{align*}
Applying $(z-H)^{-1}$ to both sides (on the left), we get
\begin{align*}
    (z-H)^{-1} = G_{z} - (z-H)^{-1} B_{z}.
\end{align*}
The first term on the right-hand side has symbol in $S(\aver{\xi,\zeta - A_2(x)}^{-1})$ and the second term has symbol in $S(\aver{\xi,\zeta - A_2(x)}^{-2})\subset S(\aver{\xi,\zeta - A_2(x)}^{-1})$.
Therefore,
$r_{z} \in S(\aver{\xi, \zeta - A_2 (x)}^{-1})$ as desired.
\end{proof}

\begin{lemma} \label{lemma:decayPhiH}
For any 
$\Phi \in \mathcal{C}^\infty_c (E_1, E_2)$, we have $\Phi (H) \in \Op (S(\aver{x,\xi,\zeta}^{-\infty}))$.
\end{lemma}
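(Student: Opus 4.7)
My approach is to compute the Weyl symbol of $\Phi(H)$ explicitly from the kernel formula \eqref{eq:kernelPhi}, exploiting the special structure that only finitely many branches $\mu_j$ enter the window $(E_1, E_2)$ and that they do so over a bounded range of $\zeta$. This avoids the Helffer-Sj\"ostrand approach, which when combined with Lemma \ref{lemma:resolvent} and Beals's criterion would only yield the weaker $\Phi(H) \in \Op(S(\aver{\xi, \zeta - A_2(x)}^{-\infty}))$; one would then still need to upgrade to decay in $\aver{x, \xi, \zeta}$, which is nontrivial because $\aver{\xi, \zeta - A_2(x)}$ can be small even for unbounded $(x, \zeta)$.

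First, I would compute the Weyl symbol. Since $H$ commutes with $y$-translations, its Schwartz kernel depends on $y - y'$ only. Inserting \eqref{eq:kernelPhi} into the Weyl quantization formula yields that the symbol $a$ is independent of the $y$ position variable and reads
\begin{align*}
a(x, \xi, \zeta) = \sum_{j \in \mathcal{J}} \Phi(\mu_j(\zeta)) \int_{\mathbb{R}} e^{-i\xi s}\, \psi_j(x + s/2, \zeta)\, \psi_j(x - s/2, \zeta)^{*}\, ds.
\end{align*}
Because $\mathcal{J}$ is finite and by Lemmas \ref{lemma:mulim} and \ref{lemma:muInfty} there exists $\zeta_0 > 0$ such that $\Phi(\mu_j(\zeta)) = 0$ for all $j \in \mathcal{J}$ whenever $|\zeta| > \zeta_0$, the symbol $a$ and all of its $\zeta$-derivatives vanish identically outside $[-\zeta_0, \zeta_0]$, which handles $\zeta$-decay.

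The main technical work is then to establish uniform Schwartz control on the eigenfunctions: for $j \in \mathcal{J}$ and $|\zeta| \leq \zeta_0$, $\psi_j(\cdot, \zeta) \in \mathcal{S}(\mathbb{R})$ with all semi-norms bounded uniformly in $(j, \zeta)$. From \eqref{eq:square} one reads that $\hat{H}(\zeta)^2$ equals $D_x^2 + (\zeta - A_2(x))^2$ plus terms bounded uniformly in $\zeta$. Since $B_\pm \neq 0$, we have $|\zeta - A_2(x)| \geq \tfrac{1}{2} \min(|B_+|, |B_-|) |x|$ for $|x|$ large enough, uniformly in $|\zeta| \leq \zeta_0$, so $\hat{H}(\zeta)^2$ is a confining operator in $x$. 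Applying standard Agmon-type estimates to the equation $\hat{H}(\zeta)^2 \psi_j = \mu_j(\zeta)^2 \psi_j$ and using the uniform boundedness of $\mu_j(\zeta)$ on $[-\zeta_0, \zeta_0]$ yields rapid $x$-decay of $\psi_j$ and all its $x$-derivatives. Joint smoothness in $(x, \zeta)$ and uniform control of $\zeta$-derivatives then follow from the holomorphy of the eigenprojections (Lemma \ref{lemma:muanalytic}) via the Riesz formula $\Pi_j(\zeta) = -\frac{1}{2\pi i} \oint (\hat{H}(\zeta) - z)^{-1} dz$, where the contour can be taken at a positive distance from $\sigma(\hat{H}(\zeta))$ uniformly in $\zeta \in [-\zeta_0, \zeta_0]$ because the branches in $\mathcal{J}$ remain distinct (Lemma \ref{lemma:simple}) and hence stay bounded away from each other on the compact set.

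With these Schwartz bounds in hand, the integrand in the expression for $a$ is a smooth function with rapid decay in both $x$ and $s$ uniformly in $(j, \zeta)$; its Fourier transform in $s$ then gives rapid decay in $\xi$. Combined with the compact $\zeta$-support and the rapid $x$-decay one concludes $a \in S(\aver{x, \xi, \zeta}^{-\infty})$. The main obstacle, as expected, is the Schwartz control on $\psi_j$: one must carry Agmon-type estimates through both $x$- and $\zeta$-differentiations simultaneously while maintaining uniformity in $(j, \zeta)$ and in the order of the derivatives, which requires a careful induction using the Riesz representation of $\partial_\zeta^k \Pi_j(\zeta)$ together with the fact that resolvents of $\hat{H}(\zeta)$ map Schwartz functions to Schwartz functions with controlled semi-norms.
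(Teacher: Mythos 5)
Your strategy is sound but genuinely different from the paper's. The paper never touches the eigenfunctions: it first obtains $\Phi(H)\in\Op(S(\aver{\xi,\zeta-A_2(x)}^{-\infty}))$ by writing $\Phi(H)=(i-H)^{-p}\Phi_p(H)$ and invoking Lemma \ref{lemma:resolvent} together with the composition calculus; it then gets the missing spatial decay by observing that $\Phi(H_\pm)=0$ (spectral gap of the bulk operators), writing $\Phi(H)=\phi(x)(\Phi(H)-\Phi(H_+))+(1-\phi(x))(\Phi(H)-\Phi(H_-))$, and expanding each difference via the Helffer--Sj\"ostrand resolvent identity, where the factor $H-H_\pm$ is supported in a half-line in $x$; interpolating the classes $S(\aver{\xi,\zeta-A_2(x)}^{-\infty})$ and $S(\aver{x}^{-\infty})$ then yields $S(\aver{x,\xi,\zeta}^{-\infty})$. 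So your opening claim that the Helffer--Sj\"ostrand route ``only'' gives decay in $\aver{\xi,\zeta-A_2(x)}$ undersells it: the comparison with the bulk Hamiltonians is precisely the paper's mechanism for upgrading to decay in $x$. Your route instead computes the Weyl symbol exactly from \eqref{eq:kernelPhi}; it produces more explicit information (compact $\zeta$-support, a finite sum of rank-one fibers), but at the price of a substantial technical program -- uniform Schwartz semi-norm control of $\psi_j(\cdot,\zeta)$ and all of its $(x,\zeta)$-derivatives -- which you outline but do not carry out, whereas the paper's argument is essentially soft once Lemma \ref{lemma:resolvent} is available.

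Two caveats on your sketch. First, the remainder in \eqref{eq:square} is not ``bounded uniformly in $\zeta$'': the cross terms $2V(x)(\zeta-A_2(x))\sigma_2$ and $2V(x)D_x\sigma_1$ grow linearly in $\zeta-A_2(x)$ and $D_x$. They are, however, form-bounded relative to $D^2_x+(\zeta-A_2(x))^2$ with relative bound less than one, so the confinement and the Agmon estimates survive; the statement just needs this correction. Second, the uniform spectral gap around $\mu_j(\zeta)$ needed for your Riesz contour requires separating $\mu_j(\zeta)$ from all of $\sigma(\hat{H}(\zeta))$, not merely from the other branches in $\mathcal{J}$; on the compact interval $[-\zeta_0,\zeta_0]$ this follows from simplicity (Lemma \ref{lemma:simple}) and the local finiteness argument of Lemma \ref{lemma:finite}, but it should be stated.
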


\begin{proof}
For any $p>0$, we can write $\Phi (H) = (i-H)^{-p} \Phi_p (H)$ with $\Phi_p \in \mathcal{C}^\infty_c (E_1, E_2)$.
By Lemma \ref{lemma:resolvent} and the composition calculus, this means
$\Phi (H) \in \Op (S(\aver{\xi,\zeta-A_2(x)}^{-\infty}))$.
Since $H_\pm$ has a spectral gap in $[E_1, E_2]$, we know that $\Phi (H_\pm) = 0$.
Thus we can write $\Phi (H) = \phi(x) (\Phi (H) - \Phi (H_+)) + (1-\phi (x)) (\Phi (H) - \Phi (H_-))$, for some $\phi \in \fs (0,1)$.
The Helffer-Sj\"ostrand formula implies that
\begin{align*}
    \Phi (H) - \Phi (H_+) = \frac{1}{\pi}\int_{\mathbb{C}} \bar{\partial} \tilde{\Phi} (z) (z-H)^{-1} (H - H_+) (z-H_+)^{-1} d^2 z.
\end{align*}
Since $\sigma - \sigma_+$ vanishes whenever $x$ is sufficiently large, it follows that $\Phi (H) - \Phi (H_+) \in \Op (S (\aver{x_+}^{-\infty}))$, where $x_+ := \max \{x, 0\}$. Since $\phi$ vanishes whenever $-x$ is sufficiently large, we conclude that $\phi(x) (\Phi (H) - \Phi (H_+)) \in \Op (S (\aver{x}^{-\infty}))$. The same reasoning shows that $(1-\phi (x)) (\Phi (H) - \Phi (H_-)) \in \Op (S (\aver{x}^{-\infty}))$.
We have thus shown that
$\Phi (H) \in \Op (S(\aver{\xi,\zeta-A_2(x)}^{-\infty}) \cap S (\aver{x}^{-\infty}))$. By interpolation, the result is complete.
\end{proof}
\begin{lemma}
For any $\Phi \in \mathcal{C}^\infty_c (E_1, E_2)$, the operator $[H,P] \Phi (H)$ is trace-class.
\end{lemma}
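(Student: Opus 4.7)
The plan is to reduce the claim to a standard fact about pseudo-differential operators with Schwartz symbols, after extracting the elementary form of the commutator and invoking Lemma \ref{lemma:decayPhiH}.

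First I would compute $[H,P]$ explicitly. Since $P = P(y)$ depends only on $y$ and the only $y$-derivative appearing in $H$ is through $D_y$ inside $(D_y - A_2(x))\sigma_2$, the commutator collapses to
\begin{equation*}
    [H,P] \;=\; [(D_y - A_2(x))\sigma_2,\, P(y)] \;=\; -i P'(y)\,\sigma_2.
\end{equation*}
Because $P \in \fs(0,1)$, the derivative $P'$ lies in $\mathcal{C}^\infty_c(\mathbb{R})$, so the Weyl symbol $p(y) := -i P'(y)\sigma_2$ belongs to $S(\aver{y}^{-\infty})$.

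Next I would invoke Lemma \ref{lemma:decayPhiH} to write $\Phi(H) = \Op(a)$ with $a \in S(\aver{x,\xi,\zeta}^{-\infty})$. By the composition calculus recalled in Appendix \ref{sec:PsiDO}, the product $[H,P]\,\Phi(H)$ is a $\Psi$DO with Weyl symbol $p \sharp a$ lying in
\begin{equation*}
S(\aver{y}^{-\infty})\cdot S(\aver{x,\xi,\zeta}^{-\infty}) \;\subset\; S(\aver{x,y,\xi,\zeta}^{-\infty}),
\end{equation*}
i.e.\ Schwartz class on $\mathbb{R}^4$. Applying the standard result that the Weyl quantization of a Schwartz symbol is trace-class (which in turn can be derived, e.g., by writing the integral kernel explicitly and checking the Hilbert--Schmidt norms of $\aver{x,y}^N [H,P]\Phi(H)\aver{x,y}^N$ for $N$ large), we conclude that $[H,P]\Phi(H)$ is trace-class.

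No essential obstacle arises: the unboundedness of the magnetic potential $A_2(x)$, which prevented the direct use of the $\Psi$DO calculus earlier in the paper, has already been handled in Lemmas \ref{lemma:of} and \ref{lemma:decayPhiH}, so here we may treat $\Phi(H)$ as a ``black box'' with good decay. The one point worth flagging is the need for a genuine composition calculus valid for the order function $\aver{\xi,\zeta - A_2(x)}$; however, because $\Phi(H)$ in fact has a Schwartz symbol in $(x,\xi,\zeta)$, composition with the bounded, $y$-localized multiplication operator $-iP'(y)\sigma_2$ is unproblematic, and no finer information about the $A_2$-dependence is required.
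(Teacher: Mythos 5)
Your proof is correct and follows essentially the same route as the paper's: compute $[H,P] = -iP'(y)\sigma_2 \in \Op(S(\aver{y}^{-\infty}))$, combine with Lemma \ref{lemma:decayPhiH} via the composition calculus to get a symbol in $S(\aver{x,y,\xi,\zeta}^{-\infty})$, and conclude trace-class. The only difference is that you spell out the final step (Schwartz symbol implies trace-class) that the paper leaves implicit.
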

\begin{proof}
We have $[H,P] = -i P'(y) \sigma_2$ with $P' \in \mathcal{C}^\infty_c$, hence $[H,P] \in \Op (S (\aver{y}^{-\infty}))$. The result then follows from Lemma \ref{lemma:decayPhiH} and the composition calculus.
\end{proof}
Now that we have shown that $\sigma_I$ is well defined, we relate it to the spectral flow.
\begin{theorem}\label{prop:sf}
For any $\alpha \in [E_1, E_2]$, we have $2 \pi \sigma_I = \text{SF} (\alpha)$.
\end{theorem}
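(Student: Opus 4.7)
The plan is to Fourier transform in the translation-invariant $y$ variable, reducing the problem to a computation involving the fiber Hamiltonians $\hat{H}(\zeta)$, and then apply a Feynman--Hellmann-type identity to produce a total derivative in $\zeta$. Concretely, I would first compute
\begin{align*}
i[H,P] \;=\; i[(D_y - A_2(x))\sigma_2,\, P(y)] \;=\; \sigma_2 \, P'(y),
\end{align*}
using that $P$ depends only on $y$ and that $[D_y,P] = -i P'$. The preceding lemma already gives that $[H,P]\varphi'(H)$ is trace-class, so $\sigma_I$ is well-defined.

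Next, I would use the kernel representation \eqref{eq:kernelPhi} applied to $\Phi = \varphi'$, whose support lies in $[E_1,E_2]$, so that only the finite collection $\mathcal{J}$ of branches entering $[E_1,E_2]$ contributes. Computing the trace in the position representation,
\begin{align*}
\sigma_I \;=\; \int dx\, dy\; \tr\bigl[\sigma_2 P'(y)\, k_{\varphi'}(x,x;0)\bigr] \;=\; \Big(\!\!\int P'(y)\,dy\Big)\int\frac{d\zeta}{2\pi}\sum_{j\in\mathcal{J}}\varphi'(\mu_j(\zeta))\,\bigl\langle \psi_j(\cdot,\zeta),\,\sigma_2\,\psi_j(\cdot,\zeta)\bigr\rangle.
\end{align*}
The $y$-integral equals $P(+\infty)-P(-\infty)=1$. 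Since $\partial_\zeta \hat{H}(\zeta) = \sigma_2$, the Feynman--Hellmann formula applied to the simple eigenvalue $\mu_j(\zeta)$ (simplicity from Lemma~\ref{lemma:simple}, analyticity from Lemma~\ref{lemma:muanalytic}) gives $\mu_j'(\zeta) = \langle\psi_j,\sigma_2\psi_j\rangle$. Hence the integrand is a total derivative:
\begin{align*}
2\pi\sigma_I \;=\; \sum_{j\in\mathcal{J}}\int_\mathbb{R}\varphi'(\mu_j(\zeta))\,\mu_j'(\zeta)\,d\zeta \;=\; \sum_{j\in\mathcal{J}}\Bigl[\varphi(\mu_j(+\infty)) - \varphi(\mu_j(-\infty))\Bigr].
\end{align*}

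The final step is to identify the right-hand side with $\text{SF}(\alpha)$. By Lemmas \ref{lemma:mulim} and \ref{lemma:muInfty}, each limit $\mu_j(\pm\infty)$ is either $\pm\infty$ or lies in $\sigma(H_+)\cup\sigma(H_-)$, and in particular lies outside $[E_1,E_2]$ since $[E_1,E_2]\subset\rho(H_+)\cap\rho(H_-)$. Because $\varphi\equiv 0$ on $(-\infty,E_1]$ and $\varphi\equiv 1$ on $[E_2,\infty)$, each summand equals $+1$ (branch goes from below the gap at $\zeta=-\infty$ to above at $\zeta=+\infty$), $-1$ (opposite), or $0$ (both endpoints on the same side), which is precisely the contribution of $\mu_j$ to $N_\uparrow - N_\downarrow$. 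Summing over $j\in\mathcal{J}$ yields $\text{SF}(\alpha)$. Note that any branch not in $\mathcal{J}$ stays outside $[E_1,E_2]$, so $\varphi$ is constant along it and its omitted contribution vanishes; in particular, this justifies choosing $\alpha\in [E_1,E_2]$ arbitrarily without changing the count.

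The main obstacle I anticipate is the rigorous justification of the interchange of trace, $y$-integral, and $\zeta$-integral, together with the application of Fubini to the sum over $j\in\mathcal{J}$. This requires the decay afforded by Lemma~\ref{lemma:decayPhiH} (so that the kernel $k_{\varphi'}(x,x;0)$ is integrable in $x$) and the fact that $\mathcal{J}$ is finite and that $\varphi'(\mu_j(\zeta))$ is compactly supported in $\zeta$ (since $\mu_j(\zeta)$ leaves $[E_1,E_2]$ as $|\zeta|\to\infty$, cf.\ Lemmas~\ref{lemma:mulim} and~\ref{lemma:muInfty}). A minor subtlety to handle is that a branch may cross $[E_1,E_2]$ multiple times, but the telescoping in the fundamental theorem of calculus automatically cancels such returns, matching the definition of $\text{SF}(\alpha)$ as a signed net count of crossings.
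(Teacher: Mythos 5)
Your proposal is correct, and it reaches the paper's key identity
\begin{align*}
2\pi\sigma_I=\sum_{j\in\mathcal J}\bigl(\varphi(\mu_{j}(+\infty))-\varphi(\mu_{j}(-\infty))\bigr)
\end{align*}
by a genuinely different mechanism. The paper does not compute $i[H,P]=\sigma_2P'(y)$ explicitly and never invokes $\partial_\zeta\hat H(\zeta)=\sigma_2$; instead it first replaces $H$ by a compactly supported $\Psi(H)$ with $\Psi(\lambda)=\lambda$ near $\supp\varphi'$, writes out the kernel of $[\Psi(H),P]\varphi'(H)$, uses $\int_{\mathbb R}(P(y')-P(y'+z))\,dy'=-z$, and then applies Parseval so that multiplication by $z$ becomes $\partial_\zeta$ acting on $\hat k_\Psi$; the derivative of the eigenprojections is then shown to drop out by orthonormality, leaving $\sum_j\int\partial_\zeta\mu_j\,\varphi'(\mu_j)\,d\zeta$. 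You instead exploit the first-order structure of the Dirac operator directly: $i[H,P]$ is exactly the bounded multiplication operator $\sigma_2P'(y)$, the $y$-integral of $P'$ gives $1$, and the Feynman--Hellmann identity $\mu_j'(\zeta)=\langle\psi_j,\sigma_2\psi_j\rangle$ (legitimate here by simplicity and analyticity of the branches, Lemmas \ref{lemma:simple} and \ref{lemma:muanalytic}) produces the total derivative. Your route is shorter and avoids the $\Psi(H)$ substitution and the Parseval step, at the cost of being tied to the specific form of $H$ (it would not transfer verbatim to models where $[H,P]$ is not a multiplication operator, which is the generality the paper's argument, borrowed from \cite{3,QB}, is designed for); the vanishing of the $\partial_\zeta(\psi_j\psi_j^*)$ contribution in the paper is the kernel-level counterpart of the off-diagonal terms that Feynman--Hellmann discards. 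Your identification of the telescoped sum with $N_\uparrow-N_\downarrow$, including the observation that $\mu_j(\pm\infty)\notin[E_1,E_2]$ forces $\varphi(\mu_j(\pm\infty))\in\{0,1\}$ and that branches outside $\mathcal J$ contribute $0$, matches the paper's conclusion.
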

Combining Theorems \ref{prop:sf0} and \ref{prop:sf}, we obtain an explicit formula for $\sigma_I$. In particular, $\sigma_I$ is quantized and independent of compact perturbations in $m, B,$ and $V$. Moreover, $\sigma_I$ is stable with respect to sufficiently small changes in $m_\pm, B_\pm$ and $V_\pm$.
\begin{proof}
We follow the arguments presented in \cite[Section A]{3}.
By Lemma \ref{lemma:decayPhiH} and \cite[Lemma 3.4]{QB}, we obtain that
$\sigma_I = \Tr i [\Psi(H),P] \varphi ' (H)$ for any $\Psi \in \mathcal{C}^\infty_c (E_1, E_2)$ that satisfies $\Psi (\lambda) = \lambda$ for all $\lambda$ in some open interval containing $\supp (\varphi ')$.
The kernel of $[\Psi(H),P] \varphi'(H)$ is 
\begin{align*}
    t(x, x'; y, y') = 
    \int _{\mathbb{R}^{2}} (P(y'') - P(y)) k_\Psi(x,x''; y-y'') k_{\varphi '}(x'',x'; y''-y') dx'' dy'',
\end{align*}
where $k_\Phi$ for $\Phi \in \mathcal{C}^\infty_c (E_1, E_2)$ is given by \eqref{eq:kernelPhi}.
It follows that
\begin{align*}
    \sigma_I &= i \tr \int_{\mathbb{R}^2} t(x,x;y,y) dx dy \\&=
    i \tr \int_{\mathbb{R}^4}(P(y') - P(y)) k_\Psi(x,x'; y-y') k_{\varphi '}(x',x; y'-y) dx' dy' dx dy.
\end{align*}
Changing integration variables $(y, y') \rightarrow (z, y')$ with $z = y-y'$, and using that $\int_\mathbb{R} P (y') - P(y'+z) dy' = -z$ (which follows from $P \in \fs (0,1)$), we obtain
\begin{align*}
    \sigma_I = -i \tr \int_{\mathbb{R}^3} z k_\Psi(x,x'; z) k_{\varphi '}(x',x; -z) dx'dx dz.
\end{align*}
By Parseval and using that $k_{\varphi '}^* (x,x'; z) = k_{\varphi '} (x',x;-z)$, we have
\begin{align*}
    \sigma_I = \frac{1}{2\pi} \tr \int_{\mathbb{R}^3} \partial_\zeta \hat{k}_\Psi (x,x';\zeta) \hat{k}_{\varphi '}^* (x,x',\zeta) dx' dx d\zeta.
\end{align*}
Note that for any $\Phi \in \mathcal{C}^\infty_c (E_1, E_2)$, 
\begin{align*}
    \hat{k}_\Phi (x,x';\zeta) =  \sum_{j=1}^{J} \Phi(\mu_j(\zeta)) \psi_j(x, \zeta) \psi_j^*(x', \zeta),
\end{align*}
hence
\begin{align*}
    2\pi\sigma_I &= \tr \int_{\mathbb{R}^3} \sum_{j,k = 1}^J \partial_\zeta (\Psi(\mu_j(\zeta)) \psi_j(x, \zeta) \psi_j^*(x', \zeta))
    \varphi '(\mu_k(\zeta)) \psi_k(x', \zeta) \psi_k^*(x, \zeta) dx' dx d\zeta\\
    &= \tr \int_{\mathbb{R}^3} \sum_{j,k = 1}^J \partial_\zeta (\mu_j(\zeta) \psi_j(x, \zeta) \psi_j^*(x', \zeta))
    \varphi '(\mu_k(\zeta)) \psi_k(x', \zeta) \psi_k^*(x, \zeta) dx' dx d\zeta.
\end{align*}
One can verify that the contribution from $\partial_\zeta (\psi_j(x, \zeta) \psi_j^*(x', \zeta))$ vanishes, and thus
\begin{align*}
    2\pi \sigma_I &=\tr \int_{\mathbb{R}^3} \sum_{j,k = 1}^J \partial_\zeta \mu_j(\zeta) \psi_j(x, \zeta) \psi_j^*(x', \zeta)
    \varphi '(\mu_k(\zeta)) \psi_k(x', \zeta) \psi_k^*(x, \zeta) dx' dx d\zeta\\
    &=
    \int_{\mathbb{R}} \sum_{j = 1}^J \partial_\zeta \mu_j(\zeta)
    \varphi '(\mu_j(\zeta)) d\zeta = \sum_{j = 1}^J \int_{\mathbb{R}} \partial_\zeta
    \varphi(\mu_j(\zeta)) d\zeta,
\end{align*}
where we have used orthonormality of the eigenfunctions to justify the second equality.
Hence
\begin{align*}
    2\pi \sigma_I = \sum_{j=1}^J \Big( \lim_{\zeta \rightarrow +\infty} \varphi (\mu_j (\zeta)) - \lim_{\zeta \rightarrow -\infty} \varphi (\mu_j (\zeta)) \Big).
\end{align*}
Since $\varphi \in \fs (0,1;E_1, E_2)$, the above right-hand side is indeed well defined and the result is complete.
\end{proof}

Now we want to analyze the stability of $\sigma_I$ with respect to perturbations. Let 
\begin{equation}\label{eq:Hmu}
    H_\mu = H + \mu W \quad \mbox{ for } \quad\mu \in [0,1],
\end{equation}
where $W$ is a symmetric $\Psi$DO with symbol in $S (\aver{x,y,\xi,\zeta}^s)$ for some $s \in \mathbb{R}$. 
We begin with a criterion for stability of $\sigma_I$. It will require two preliminary results (Lemmas \ref{lemma:pinv} and \ref{lemma:xinfinity}), which can also be found in \cite[Section 3]{QB}.
Below, $U^\circ$ denotes the interior of $U$.
\begin{theorem}\label{thm:sc}
Let $\Psi \in \mathcal{C}^\infty_c (E_1,E_2)$ such that $\varphi ' \in \mathcal{C}^\infty_c (\{\Psi (\lambda) = \lambda\}^\circ)$.
If $H_\mu$ is self-adjoint and the operators $[H_\mu, P] \varphi ' (H_\mu), \; \varphi ' (H_\mu) - \varphi '(H_0)$ and $\Psi (H_\mu) - \Psi (H_0)$ are trace-class, then $\sigma_I (H_\mu) = \sigma_I (H_0)$. 
\end{theorem}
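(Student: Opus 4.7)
The plan is to follow the reduction used at the start of the proof of Theorem \ref{prop:sf} to rewrite $\sigma_I(H_\mu) = \Tr i[\Psi(H_\mu), P]\varphi'(H_\mu)$ and likewise for $\sigma_I(H_0)$, then to express the difference $\sigma_I(H_\mu) - \sigma_I(H_0)$ as the trace of a single operator that vanishes algebraically after one application of trace cyclicity. To verify the reduction, I would first establish that $[\Psi(H_\mu), P]\varphi'(H_\mu)$ is trace-class: the base case $\mu = 0$ follows from Lemmas \ref{lemma:resolvent} and \ref{lemma:decayPhiH} together with the compact support of $P'$, which make $[\Psi(H_0), P]$ the quantization of a Schwartz symbol; the perturbed case then follows from the hypothesis that $\Psi(H_\mu) - \Psi(H_0)$ is trace-class. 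The identification with $\sigma_I(H_\mu)$ uses that $\Psi(\lambda) = \lambda$ on $\supp \varphi'$, hence $(H_\mu - \Psi(H_\mu))\varphi'(H_\mu) = 0$, so the (trace-class) difference $[H_\mu, P]\varphi'(H_\mu) - [\Psi(H_\mu), P]\varphi'(H_\mu) = (H_\mu - \Psi(H_\mu))P\varphi'(H_\mu)$ has zero trace by cyclicity combined with $\varphi'(H_\mu)(H_\mu - \Psi(H_\mu)) = 0$.

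The core computation sets $A_\mu := \Psi(H_\mu)$, $B_\mu := \varphi'(H_\mu)$, $T_1 := A_\mu - A_0$, $T_2 := B_\mu - B_0$ (both trace-class by hypothesis), and $R_2 := A_\mu B_\mu - A_0 B_0 = T_1 B_\mu + A_0 T_2$, which is trace-class as a sum of trace-class operators multiplied by bounded ones. A direct expansion yields
\[
[A_\mu, P]B_\mu - [A_0, P]B_0 \,=\, T_1 P B_\mu + A_0 P T_2 - P R_2,
\]
and since each summand on the right contains a trace-class factor, cyclicity gives
\[
\Tr\bigl([A_\mu, P]B_\mu - [A_0, P]B_0\bigr) \,=\, \Tr P\bigl(B_\mu T_1 + T_2 A_0 - R_2\bigr).
\]
Using $A_\mu B_\mu = B_\mu A_\mu$ (as functions of the same operator), one checks $B_\mu T_1 + T_2 A_0 = B_\mu A_\mu - B_0 A_0 = R_2$, so the integrand vanishes identically and $\sigma_I(H_\mu) = \sigma_I(H_0)$.

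The main obstacle is the trace-class bookkeeping: individual bilinears such as $A_\mu P B_\mu$ or $P A_\mu B_\mu$ are not trace-class on their own, only specific differences are, so cyclicity must be applied termwise only after a decomposition that isolates a trace-class factor in each summand. The hypotheses that $\Psi(H_\mu) - \Psi(H_0)$ and $\varphi'(H_\mu) - \varphi'(H_0)$ are trace-class are precisely what license the decomposition used above; once it is in place, the remaining algebraic cancellation uses only $\Psi(\lambda)\varphi'(\lambda) = \lambda\varphi'(\lambda)$ and the commutativity of $A_\mu$ and $B_\mu$ as functions of $H_\mu$.
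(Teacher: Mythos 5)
Your proposal is correct, and the opening reduction (replacing $H_\mu$ by $\Psi(H_\mu)$ inside the commutator, justified by $(H_\mu-\Psi(H_\mu))\varphi'(H_\mu)=0$ and one application of cyclicity) is exactly the paper's first step, which it delegates to \cite[Lemma 3.4]{QB}. Where you diverge is in how the difference of traces is killed. The paper writes
$\sigma_I(H_\mu)-\sigma_I(H_0)=\Tr i[\Psi(H_\mu),P](\varphi'(H_\mu)-\varphi'(H_0))+\Tr i[\Psi(H_\mu)-\Psi(H_0),P]\varphi'(H_0)$,
then uses Lemma \ref{lemma:pinv} to replace $P$ by the translate $P_{y_0}$, cycles to put $P_{y_0}$ in front of trace-class operators $A_j$, and concludes because each $\Tr P_{y_0}A_j\to 0$ as $y_0\to\infty$ (Lemma \ref{lemma:xinfinity}) while the total is $y_0$-independent. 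You instead show that after cycling, the trace-class operator multiplying $P$ vanishes identically: with $T_1=A_\mu-A_0$, $T_2=B_\mu-B_0$ one has $B_\mu T_1+T_2A_0=B_\mu A_\mu-B_0A_0=A_\mu B_\mu-A_0B_0=R_2$ because $\Psi(H_\lambda)$ and $\varphi'(H_\lambda)$ commute. I checked the algebra and the cyclicity steps (each application is $\Tr(XY)=\Tr(YX)$ with one factor trace-class and the other bounded), and they are sound. Your route is more self-contained --- it bypasses Lemmas \ref{lemma:pinv} and \ref{lemma:xinfinity} entirely and exposes the exact algebraic cancellation that the paper's translation-to-infinity argument only uses implicitly --- while the paper's version has the advantage of recycling two lemmas it needs elsewhere and of not requiring you to track which side of $P$ each trace-class factor lands on. Either way the hypotheses are used identically: the trace-class assumptions on $\Psi(H_\mu)-\Psi(H_0)$ and $\varphi'(H_\mu)-\varphi'(H_0)$ are precisely what license the termwise cyclicity.
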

\begin{proof}
By assumption, $\sigma_I (H_\mu)$ is well defined.
It follows from cyclicity of the trace \cite{Kalton} 
that $\sigma_I (H_\lambda) = \Tr i [\Psi (H_\lambda), P] \varphi ' (H_\lambda)$ for $\lambda \in \{0,\mu\}$; see \cite[Lemma 3.4]{QB} for more details.
We can thus write the difference of conductivities as
\begin{align*}
    \sigma_I (H_\mu) - \sigma_I (H_0) = \Tr i [\Psi (H_\mu),P] (\varphi ' (H_\mu) - \varphi ' (H_0)) + \Tr i [\Psi (H_\mu) - \Psi (H_0), P] \varphi ' (H_0),
\end{align*}
where our hypotheses have guaranteed that each trace is well defined. Using Lemma \ref{lemma:pinv}, we can replace $P$ above by $P_{y_0}$, where $P_{y_0} (y) := P (y-y_0)$. Again applying cyclicity (and linearity) of the trace, we get
$
    \sigma_I (H_\mu) - \sigma_I (H_0) = \sum_{j=1}^4 \Tr P_{y_0} A_j,
$
where the $A_j$ are all trace-class. The result then follows from Lemma \ref{lemma:xinfinity}.
\end{proof}

\begin{lemma} \label{lemma:pinv}
Let $P_1, P_2 \in \fs (0,1)$ with $P_j = P_j (y)$.
Then 
\begin{align*}
    \Tr i [H,P_1] \varphi '(H) = \Tr i [H,P_2] \varphi '(H).
\end{align*}
\end{lemma}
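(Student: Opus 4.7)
The plan is to reduce the identity to a single vanishing statement involving a compactly supported function, and then to exploit the translation invariance of $H$ in $y$ via the kernel formula \eqref{eq:kernelPhi}. First I would set $Q := P_1 - P_2$. Since $P_1, P_2 \in \fs(0,1)$, each equals $0$ for $y$ sufficiently negative and $1$ for $y$ sufficiently positive, so their difference satisfies $Q \in \mathcal{C}^\infty_c(\mathbb{R})$; in particular,
\[
\int_{\mathbb{R}} Q'(y)\, dy = Q(+\infty) - Q(-\infty) = 0.
\]
By linearity of the trace, the lemma is equivalent to showing $\Tr\, i[H, Q]\varphi'(H) = 0$ for every such $Q$.

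Next I would compute the commutator directly. Because $Q$ depends only on $y$ while $A_2$, $m$, and $V$ depend only on $x$, only the $D_y\sigma_2$ piece of $H$ contributes, giving
\[
i[H, Q] = i[D_y \sigma_2, Q(y)] = Q'(y)\, \sigma_2.
\]
The operator $Q'(y)\sigma_2\, \varphi'(H)$ is trace class: the factor $Q'(y)\sigma_2$ has symbol in $S(\aver{y}^{-\infty})$ (being compactly supported in $y$), while Lemma \ref{lemma:decayPhiH} gives $\varphi'(H) \in \Op(S(\aver{x,\xi,\zeta}^{-\infty}))$. The composition calculus of Appendix \ref{sec:PsiDO} then produces a symbol with rapid decay in all four variables.

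Finally, I would evaluate the trace using the integral kernel. Since $H$ is translation invariant in $y$, the Schwartz kernel of $\varphi'(H)$ has the form $k_{\varphi'}(x, x'; y - y')$ given by \eqref{eq:kernelPhi}, and by Lemmas \ref{lemma:mulim}--\ref{lemma:finite} the $\zeta$-integration there is effectively over a bounded set. Hence the kernel of $Q'(y)\sigma_2\,\varphi'(H)$ is $Q'(y)\sigma_2\, k_{\varphi'}(x, x'; y - y')$ and Fubini yields
\[
\Tr\, Q'(y)\sigma_2\,\varphi'(H) = \int_{\mathbb{R}^2} Q'(y)\, \tr\!\big(\sigma_2\, k_{\varphi'}(x, x; 0)\big)\, dx\, dy = \Big(\int_{\mathbb{R}} Q'(y)\, dy\Big) \int_{\mathbb{R}} \tr\!\big(\sigma_2\, k_{\varphi'}(x, x; 0)\big)\, dx,
\]
and the first factor is zero by the reduction step. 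The main (and really only) obstacle is justifying the Fubini step: one must verify absolute integrability of the diagonal kernel in $x$, which follows from the rapid decay supplied by Lemma \ref{lemma:decayPhiH}, the $L^2$-normalization of the eigenfunctions $\psi_j(\cdot,\zeta)$ entering \eqref{eq:kernelPhi}, and the compactness of the effective $\zeta$-range.
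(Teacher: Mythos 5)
Your proof is correct, but it takes a genuinely different route from the paper's. The paper argues operator-theoretically: it first replaces $H$ by the bounded operator $\Psi(H)$ (with $\Psi$ as in Theorem \ref{thm:sc}, so that $[\Psi(H),P_j]\varphi'(H)=[H,P_j]\varphi'(H)$ on the support of $\varphi'$), notes that $(P_2-P_1)\varphi'(H)$ is trace-class because $P_2-P_1$ is compactly supported and $\varphi'(H)\in\Op(S(\aver{x,\xi,\zeta}^{-\infty}))$ by Lemma \ref{lemma:decayPhiH}, and then expands the commutator and kills the two resulting traces against each other using cyclicity of the trace together with $[\Psi(H),\varphi'(H)]=0$. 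You instead compute the trace explicitly from the Schwartz kernel, exploiting the translation invariance of $H$ in $y$ so that the trace of $Q'(y)\sigma_2\,\varphi'(H)$ factors as $\bigl(\int_{\mathbb{R}}Q'\bigr)$ times a $y$-independent quantity, which vanishes because $Q=P_1-P_2$ is compactly supported. Your route avoids the $\Psi(H)$ substitution and the cyclicity manipulations entirely (which is a genuine simplification, since cyclicity with the unbounded $H$ is exactly what the paper's detour through $\Psi(H)$ is designed to sidestep), at the cost of having to justify that the trace equals the integral of the diagonal of the kernel; that step is legitimate here because $Q'(y)\sigma_2\,\varphi'(H)$ has symbol in $S(\aver{x,y,\xi,\zeta}^{-\infty})$ by Lemma \ref{lemma:decayPhiH} and the composition calculus, and it is the same mechanism the paper itself uses (without further comment) in the proof of Theorem \ref{prop:sf}, where the identity $\int_{\mathbb{R}}\bigl(P(y')-P(y'+z)\bigr)\,dy'=-z$ plays the role of your $\int_{\mathbb{R}}Q'=0$. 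Both proofs rely on the same essential inputs (Lemma \ref{lemma:decayPhiH} and the explicit form $[H,P]=-iP'(y)\sigma_2$), so neither is more general; yours is closer in spirit to the spectral-flow computation, the paper's is shorter once the trace-class facts are in hand.
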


\begin{proof}
With $\Psi$ defined in Theorem \ref{thm:sc}, we have
\begin{align*}
    \Tr i [H,P_2] \varphi '(H) -\Tr i [H,P_1] \varphi '(H) = 
    \Tr i [\Psi (H), P_2 - P_1] \varphi '(H).
\end{align*}
Since $P_2 - P_1 \in \aver{x}^{-\infty}$,
Lemma \ref{lemma:decayPhiH} implies that $(P_2 - P_1) \varphi ' (H)$ is trace-class.
Therefore,
\begin{align*}
    \Tr i [\Psi (H), P_2 - P_1] \varphi '(H) &= 
    \Tr i \Psi (H) (P_2 - P_1) \varphi '(H) - \Tr i (P_2 - P_1) \Psi (H) \varphi '(H)\\
    &=
    \Tr i (P_2 - P_1) \varphi ' (H) \Psi(H) -
    \Tr i (P_2 - P_1) \Psi (H) \varphi '(H) =0,
\end{align*}
where we have used cyclicity of the trace to justify the second line, and the fact that $[\varphi ' (H), \Psi (H)] = 0$ for the last line.
\end{proof}

\begin{lemma}\label{lemma:xinfinity}
Let $P(y) = P \in \fs{(0,1)}$ and $y_0 \in \mathbb{R}$, and define $P_{y_0} (y) := P(y-y_0)$. Then for any trace-class operator $A$ on $L^2 (\mathbb{R}^2) \otimes \mathbb{C}^n$, we have $\Tr P_{y_0} A \rightarrow 0$ as $y_0 \rightarrow \infty$.
\end{lemma}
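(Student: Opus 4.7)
The plan is to combine two standard ingredients: the density of finite-rank operators in the trace ideal, and the strong (not norm) convergence of the multiplication operator $P_{y_0}$ to $0$ on $L^2(\Rm^2)\otimes\Cm^n$.

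First, observe that $P_{y_0}$ is a bounded multiplication operator with operator norm at most $\norm{P}_\infty$, since $P\in\fs(0,1)$. Consequently $|\Tr P_{y_0}B|\le \norm{P_{y_0}}_{op}\norm{B}_1\le \norm{P}_\infty\norm{B}_1$ for every trace-class $B$, which shows that the map $B\mapsto \Tr P_{y_0}B$ is uniformly (in $y_0$) continuous on the trace ideal. Given $\eps>0$, choose a finite-rank operator $A_\eps$ with $\norm{A-A_\eps}_1<\eps$, so that $|\Tr P_{y_0}A-\Tr P_{y_0}A_\eps|<\norm{P}_\infty\,\eps$ for all $y_0$. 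Thus it suffices to establish the limit for finite-rank $A_\eps$.

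A finite-rank operator is a finite linear combination of rank-one operators of the form $u\otimes v^*$ with $u,v\in L^2(\Rm^2)\otimes\Cm^n$, for which $\Tr P_{y_0}(u\otimes v^*)=\aver{v,P_{y_0}u}$. It therefore reduces to showing that $\norm{P_{y_0}u}\to 0$ as $y_0\to\infty$ for every $u\in L^2(\Rm^2)\otimes\Cm^n$. Since $P\in\fs(0,1)$, there exists $\alpha\in\Rm$ with $P(\lambda)=0$ for $\lambda\le\alpha$, so for every fixed $(x,y)\in\Rm^2$ we have $P(y-y_0)=0$ once $y_0\ge y-\alpha$; in particular $P_{y_0}(y)u(x,y)\to 0$ pointwise. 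The bound $|P_{y_0}(y)u(x,y)|^2\le \norm{P}_\infty^2|u(x,y)|^2\in L^1(\Rm^2)$ allows us to invoke dominated convergence and conclude $\norm{P_{y_0}u}^2\to 0$. Cauchy--Schwarz then gives $|\aver{v,P_{y_0}u}|\le \norm{v}\,\norm{P_{y_0}u}\to 0$, completing the argument.

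There is no real obstacle here; the only point requiring a touch of care is the separation between strong and norm convergence: $P_{y_0}$ does \emph{not} converge to $0$ in operator norm (its norm is constantly $\norm{P}_\infty$), which is precisely why one must first reduce to the finite-rank case before applying dominated convergence pointwise. Approximation in the trace norm is what allows this reduction.
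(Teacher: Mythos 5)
Your proof is correct and follows essentially the same strategy as the paper's: reduce to finitely many vectors using the trace norm (the paper does this via the spectral theorem for the self-adjoint parts of $A$, you via density of finite-rank operators in the trace ideal, which is the same idea), then use that $\norm{P_{y_0}u}\to 0$ for each fixed $u\in L^2$ by dominated convergence. All steps check out, including the correct observation that the convergence of $P_{y_0}$ to $0$ is only strong, not in operator norm.
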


\begin{proof}
Writing $A = \frac{1}{2}(A+A^*) + \frac{1}{2}(A-A^*)$, with $\frac{1}{2}(A+A^*)$ and $\frac{i}{2}(A-A^*)$ trace-class and self-adjoint and using the triangle inequality, we may assume that $A$ is self-adjoint.
Fix $\eps > 0$.
By the spectral theorem, 
there exists an orthonormal basis $\{\psi_j \}_{j=1}^\infty$ of $L^2 (\mathbb{R}^2) \otimes \mathbb{C}^n$ such that $A \psi_j = \lambda_j \psi_j$, with $\{\lambda_j\}_{j=1}^\infty \subset \mathbb{R}$ satisfying 
$\sum_{j=1}^\infty |\lambda_j| < \infty$.
Thus there exists $N \in \mathbb{N}$ such that
$
    \sum_{j=N}^\infty |(\psi_j, P_{y_0} A \psi_j)| 
    \le \sum_{j=N}^\infty |\lambda_j| < \eps/2
$
for all $y_0 \in \mathbb{R}$.
Since 
$\{\psi_j \}_{j=1}^\infty \subset L^2 (\mathbb{R}^2) \otimes \mathbb{C}^n$, 
we know that,
for $y_0$ sufficiently large,
$\norm{P_{y_0} \psi_j} < \eps/\left(2N\norm{A}\right)$ for all $j \in \{1,2, \dots, N-1\}$.
It follows that
$|\Tr P_{y_0} A| \le \sum_{j=1}^\infty |(\psi_j, P_{y_0} A \psi_j)| < \eps$ for all $y_0$ sufficiently large.
\end{proof}

We now use Theorem \ref{thm:sc} to prove stability of $\sigma_I$ under a large class of perturbations $W$.
For the rest of this section, let
$\Phi_0 \in \mathcal{C}^\infty_c (E_1, E_2)$ such that $\Psi \in \mathcal{C}^\infty_c (\{ \Phi_0 = 1 \}^\circ)$, with $\Psi$ as in Theorem \ref{thm:sc}.

\begin{theorem}\label{prop:stabilitysmall}
Let $W$ be a symmetric $\Psi$DO with symbol in $S (\aver{x,y,\xi,\zeta})$. Assume that $(i \pm H_0)^{-1} W$ is bounded and $\Phi_0 (H_0) W$ is trace-class. 
Then 
\begin{align}\label{eq:musmall}
\sigma_I (H_\mu) = \sigma_I (H_0) \qquad \text{for all} \quad \mu < \min \Big \{\norm{(i+ H_0)^{-1} W}^{-1},\norm{(i- H_0)^{-1} W}^{-1}\Big \}.
\end{align}
\end{theorem}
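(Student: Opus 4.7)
The plan is to apply Theorem \ref{thm:sc}, which requires verifying that $H_\mu$ is self-adjoint and that $[H_\mu, P]\varphi'(H_\mu)$, $\varphi'(H_\mu) - \varphi'(H_0)$, and $\Psi(H_\mu) - \Psi(H_0)$ are all trace-class. Self-adjointness of $H_\mu = H_0 + \mu W$ will follow from a Kato-Rellich-type argument: the hypothesis $\mu\|(i \pm H_0)^{-1} W\| < 1$ makes $I + \mu(i \pm H_0)^{-1} W$ invertible via Neumann series, and the factorization $i \pm H_\mu = (i \pm H_0)(I + \mu(i \pm H_0)^{-1} W)$ on $\mathcal{D}(H_0)$ yields bounded resolvents $(i \pm H_\mu)^{-1}$. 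Combined with the symmetry of $W$, this gives self-adjointness on $\mathcal{D}(H_0)$.

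The key observation for the remaining trace-class conditions is that $\Phi_0 \Psi = \Psi$ and $\Phi_0 \varphi' = \varphi'$ as scalar functions, since $\supp \Psi \subset \{\Phi_0 = 1\}^\circ$ by hypothesis and $\supp \varphi' \subset \{\Psi = \mathrm{id}\}^\circ \subset \supp \Psi$. By functional calculus, $f(H_0) = f(H_0)\Phi_0(H_0)$ for $f \in \{\Psi, \varphi'\}$, whence $f(H_0) W = f(H_0)[\Phi_0(H_0) W]$ is bounded times trace-class, hence trace-class; taking adjoints, $Wf(H_0)$ is trace-class as well.

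To show $f(H_\mu) - f(H_0) \in \mathcal{L}^1$, I would apply the Helffer-Sj\"ostrand formula,
\begin{equation*}
    f(H_\mu) - f(H_0) = \frac{\mu}{\pi}\int_{\mathbb{C}} \bar\partial\tilde f(z)\,(z - H_\mu)^{-1} W (z - H_0)^{-1}\,d^2z,
\end{equation*}
and insert $I = \Phi_0(H_0) + (I - \Phi_0(H_0))$ between $W$ and $(z - H_0)^{-1}$, using that $\Phi_0(H_0)$ commutes with $(z - H_0)^{-1}$. The contribution from $\Phi_0(H_0)$ equals $(z - H_\mu)^{-1} [W\Phi_0(H_0)] (z - H_0)^{-1}$, which is trace-class uniformly in $z$ since $W\Phi_0(H_0) = (\Phi_0(H_0)W)^*$ is trace-class and the resolvents are bounded by $|\mathrm{Im}\,z|^{-1}$; the integral converges in trace-norm because $\bar\partial\tilde f$ vanishes to all orders on the real axis. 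For the contribution with $(I - \Phi_0(H_0))$, I would apply the resolvent identity $\mu(z - H_\mu)^{-1} W (z - H_0)^{-1} = (z - H_\mu)^{-1} - (z - H_0)^{-1}$ together with $\int \bar\partial\tilde f(z)\,(z - H_0)^{-1}(I - \Phi_0(H_0))\,d^2z = \pi f(H_0)(I - \Phi_0(H_0)) = 0$ to rewrite this contribution as $f(H_\mu)(I - \Phi_0(H_0)) = f(H_\mu)[\Phi_0(H_\mu) - \Phi_0(H_0)]$, where I used $f(H_\mu)\Phi_0(H_\mu) = f(H_\mu)$. The main obstacle is then showing $\Phi_0(H_\mu) - \Phi_0(H_0) \in \mathcal{L}^1$: this I would handle by expanding $(z - H_\mu)^{-1} = \sum_{k \ge 0}\mu^k[(z - H_0)^{-1}W]^k(z - H_0)^{-1}$ (convergent in operator norm for $z$ bounded away from the real axis, by the hypothesis on $\mu$), and showing each term is trace-class by sandwiching a $W$ between two resolvents of $H_0$ applied to $\Phi_0(H_0)$ and invoking $\Phi_0(H_0) W \in \mathcal{L}^1$; the smallness of $\mu$ controls the tail of the series in trace-norm.

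Finally, $[H_\mu, P]\varphi'(H_\mu) \in \mathcal{L}^1$ follows from the decomposition $[H_\mu, P] = -iP'(y)\sigma_2 + \mu[W, P]$: the commutator $[W, P]$ is bounded by the $\Psi$DO composition calculus since $P = P(y)$ with $P' \in C_c^\infty$. Then $[H_\mu, P]\varphi'(H_0) \in \mathcal{L}^1$ by Lemma \ref{lemma:decayPhiH} (which gives $\varphi'(H_0) \in \Op(S(\aver{x,y,\xi,\zeta}^{-\infty}))$, making both $[H_0, P]\varphi'(H_0)$ and $[W, P]\varphi'(H_0)$ trace-class by composition), while $[H_\mu, P](\varphi'(H_\mu) - \varphi'(H_0))$ is bounded times the trace-class operator just established. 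Applying Theorem \ref{thm:sc} then yields $\sigma_I(H_\mu) = \sigma_I(H_0)$.
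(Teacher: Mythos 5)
Your self-adjointness argument and your treatment of the ``$\Phi_0(H_0)$ part'' of the Helffer--Sj\"ostrand integral match the paper (this is exactly the paper's term \eqref{eq:rhs2}, using that $W\Phi_0(H_0)=(\Phi_0(H_0)W)^*$ is trace-class). The genuine gap is in how you close the loop on the ``$(I-\Phi_0(H_0))$ part.'' You correctly reduce that contribution to $f(H_\mu)\,[\Phi_0(H_\mu)-\Phi_0(H_0)]$, but then you try to prove $\Phi_0(H_\mu)-\Phi_0(H_0)\in\mathcal{L}^1$ outright, and this is not a consequence of the hypotheses. The Neumann expansion produces terms of the form $[(z-H_0)^{-1}W]^k(z-H_0)^{-1}$, and nowhere in such a term does $W$ sit next to $\Phi_0(H_0)$; the only localization the resolvent provides is $S(\aver{\xi,\zeta-A_2(x)}^{-1})$ (Lemma \ref{lemma:resolvent}), with no decay in $x$ or $y$. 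For the paper's Example~1, $W=\aver{y}^{-1-\delta}$, the sandwiched operator $(z-H_0)^{-1}W(z-H_0)^{-1}$ has a symbol with no decay in $x$ and is not trace-class, so the proposed ``sandwiching'' step fails and there is no reason to expect $\Phi_0(H_\mu)-\Phi_0(H_0)\in\mathcal{L}^1$. What the paper does instead is avoid ever needing this: writing $X:=\Phi(H_\mu)-\Phi(H_0)$ and $\Theta_0:=\Phi_0(H_\mu)-\Phi_0(H_0)$, it establishes the identity $X(1-\Theta_0)=\Phi(H_0)\Theta_0+X\Phi_0(H_0)$ (equation \eqref{eq:diff}); both right-hand terms are trace-class by Helffer--Sj\"ostrand because each keeps a trace-class factor $\Phi(H_0)W$ or $W\Phi_0(H_0)$ adjacent to $W$, and then one only needs the \emph{operator-norm} bound $\norm{\Theta_0}\le C\mu$ to invert $1-\Theta_0$. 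Your decomposition in fact yields exactly $X\Phi_0(H_0)\in\mathcal{L}^1$; the missing ingredient is the other term $\Phi(H_0)\Theta_0$ together with the inversion of $1-\Theta_0$, not trace-class membership of $\Theta_0$ itself.

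A secondary problem: you assert that $[W,P]$ is bounded by the composition calculus. Since $W$ is only assumed to have symbol in $S(\aver{x,y,\xi,\zeta})$, the commutator $[W,P]$ has symbol in $S(\aver{x,y,\xi,\zeta}\aver{y}^{-\infty})$, which still grows linearly in $(x,\xi,\zeta)$ and need not be bounded. This is why the paper treats the operators $T_2=[W,P]A$ and $T_4=[W,P]B$ separately, absorbing the unbounded $W$ into an adjacent resolvent or into the trace-class factor $\Phi(H_0)W$ rather than bounding $[W,P]$ on its own. (Relatedly, Lemma \ref{lemma:decayPhiH} gives decay only in $\aver{x,\xi,\zeta}$, not in $y$; the $y$-decay in $[H_\mu,P]\varphi'(H_0)$ comes from $P'$ being compactly supported.)
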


\begin{proof}
We verify the assumptions of Theorem \ref{thm:sc}.
Since $(i\pm H_0)^{-1} W$ is bounded, 
$i \pm H_\mu : (i-H_0)^{-1} \mathcal{H} \rightarrow \mathcal{H}$ is bijective
whenever $\mu < \norm{(i\pm H_0)^{-1} W}^{-1}$,
with $(i\pm H_\mu)^{-1} = (1 \pm \mu (i \pm H_0)^{-1} W)^{-1} (i \pm H_0)^{-1}$.
Hence for all $\mu$ satisfying \eqref{eq:musmall}, $H_\mu$ is self-adjoint with the same domain of definition $\mathcal{D} (H_\mu) = \mathcal{D} (H_0) = (i-H_0)^{-1} \mathcal{H}$.

It remains to verify that $[H_\mu, P] \varphi ' (H_\mu)$ and $\Phi (H_\mu) - \Phi (H_0)$ are trace-class for $\Phi \in \{\varphi ' , \Psi\}$.
We start with the difference \begin{align*}\Phi (H_\mu) - \Phi (H_0) = (\Phi (H_\mu) &- \Phi (H_0))(\Phi_0 (H_\mu)- \Phi_0 (H_0))\\ &+ \Phi (H_0)(\Phi_0 (H_\mu) - \Phi_0 (H_0)) + (\Phi (H_\mu) - \Phi (H_0))\Phi_0 (H_0),\end{align*}
which can be rearranged to give
\begin{align}\label{eq:diff}
\begin{split}
    (\Phi (H_\mu) - \Phi (H_0))(1 &- (\Phi_0 (H_\mu)- \Phi_0 (H_0)))\\ &= \Phi (H_0)(\Phi_0 (H_\mu) - \Phi_0 (H_0)) + (\Phi (H_\mu) - \Phi (H_0))\Phi_0 (H_0).
    \end{split}
\end{align}
By the Helffer-Sj\"ostrand formula,
\begin{align}\label{eq:HS}
    \Phi_0 (H_\mu)- \Phi_0 (H_0) = \frac{1}{\pi} \int_Z \bar{\partial} \tilde{\Phi}_0 (z) (z-H_0)^{-1} \mu W (z-H_\mu)^{-1} d^2 z.
\end{align}
Hence the norm of $\Phi_0 (H_\mu)- \Phi_0 (H_0)$ can be bounded by $C \mu$, meaning that it is less than $1$ for $\mu$ small enough. It thus suffices to show that each term on the right-hand side of \eqref{eq:diff} is trace-class, as $1 - (\Phi_0 (H_\mu)- \Phi_0 (H_0))$ has bounded inverse.
Using \eqref{eq:HS}, the first term becomes
\begin{align}\label{eq:rhs1}
    \Phi (H_0)(\Phi_0 (H_\mu) - \Phi_0 (H_0)) =
    \frac{1}{\pi} \int_Z \bar{\partial} \tilde{\Phi}_0 (z) (z-H_0)^{-1} \mu\Phi (H_0) W (z-H_\mu)^{-1} d^2 z.
\end{align}
Since each resolvent has norm bounded by $C|\Im z|^{-1}$, with $|\bar{\partial} \tilde{\Phi}_0 (z)| \le C |\Im z|^2$ and $\Phi (H_0)W$ trace-class, we conclude that \eqref{eq:rhs1} is trace-class.
Applying the same argument to the second term 
\begin{align}\label{eq:rhs2}
    (\Phi (H_\mu) - \Phi (H_0))\Phi_0 (H_0) =-\frac{1}{\pi} \int_Z \bar{\partial} \tilde{\Phi} (z) (z-H_\mu)^{-1} \mu W \Phi_0 (H_0)(z-H_0)^{-1} d^2 z,
\end{align}
it follows that $\Phi (H_\mu) - \Phi (H_0)$ is trace-class.

For the first operator, we write
\begin{align*}
    [H_\mu, P] \Phi (H_\mu) = [H_\mu, P] \Phi (H_0) + [H_\mu, P] (\Phi (H_\mu)- \Phi (H_0)),
\end{align*}
with the first term on the above right-hand side trace-class by Lemma \ref{lemma:decayPhiH} and the composition calculus.
To show that the second term is also trace-class, we again use \eqref{eq:diff}.
Namely, it suffices to show that
\begin{align*}
    T_1 := [H_0, P] A, \qquad T_2 := [W,P]A, \qquad T_3:= [H_0, P] B, \qquad T_4 := [W,P] B
\end{align*}
are trace-class, with $A := \Phi (H_0)(\Phi_0 (H_\mu) - \Phi_0 (H_0))$ and $B := (\Phi (H_\mu) - \Phi (H_0))\Phi_0 (H_0)$.
That $T_1$ is trace-class follows immediately from \eqref{eq:rhs1} and boundedness of $[H_0, P] = -i P' (y) \sigma_2$. 
We know that $PWA$ is trace-class (since $(i \pm H_0)^{-1} W$ is bounded; this is an assumption on $W$), thus $T_2$ is trace-class if $WPA$ is.
We write
\begin{align*}
    WPA&=
    \frac{1}{\pi} \int_Z \bar{\partial} \tilde{\Phi}_0 (z) WP (z-H_0)^{-1} \mu\Phi (H_0) W (z-H_\mu)^{-1} d^2 z\\
    &=\frac{1}{\pi} \int_Z \bar{\partial} \tilde{\Phi}_0 (z) W(z-H_0)^{-1} P \mu\Phi (H_0) W (z-H_\mu)^{-1} d^2 z\\
    &\qquad +
    \frac{1}{\pi} \int_Z \bar{\partial} \tilde{\Phi}_0 (z) W[P, (z-H_0)^{-1}] \mu\Phi (H_0) W (z-H_\mu)^{-1} d^2 z.
\end{align*}
Since $[P, (z-H_0)^{-1}] = -(z-H_0)^{-1} [H_0, P] (z-H_0)^{-1}$, boundedness of $[H_0, P]$ and our assumption on $W$ imply that $WPA$ is trace-class.
Using \eqref{eq:rhs2} and the identity \begin{align*}(z-H_\mu)^{-1} &= (i-H_\mu)^{-1} (1 + (i-z)(z-H_\mu)^{-1})\\ &=(i-H_0)^{-1} (1-\mu W(i-H_0)^{-1})^{-1}(1 + (i-z)(z-H_\mu)^{-1}),\end{align*}
we similarly conclude that $T_3$ and $T_4$ are trace-class. This completes the result.
\end{proof}


We now 
prove a stability result that does not require the perturbation to be ``small''. To do this, 
we will need the stronger assumption that
$(i \pm H_0)^{-1} W$ is compact. 
\begin{theorem}\label{prop:stabilitycompact}
Let $W$ be a symmetric $\Psi$DO with symbol in $S (\aver{x,y,\xi,\zeta})$. 
Assume that $(i \pm H_0)^{-1} W$ is compact and $\Phi_0 (H_0) W$ is trace-class. 
Then $\sigma_I (H_1) = \sigma_I (H_0)$.
\end{theorem}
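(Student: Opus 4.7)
The plan is to combine Theorem \ref{prop:stabilitysmall} with a connectedness argument along the one-parameter family $\{H_\mu\}_{\mu \in [0,1]}$: compactness of $(i \pm H_0)^{-1} W$ (rather than mere boundedness) is exactly what will allow us to dispense with the smallness restriction and apply Theorem \ref{prop:stabilitysmall} at every base point along the path.

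First I would verify that $H_\mu$ is self-adjoint on $\mathcal{D}(H_0)$ for every $\mu \in [0,1]$. By the Kato--Rellich theorem it suffices to show $W$ has $H_0$-bound zero, which is a standard consequence of $H_0$-compactness: writing $W\psi = W(i-H_0)^{-1}(i-H_0)\psi$ and splitting $(i-H_0)\psi$ using the spectral projection $P_R$ of $H_0$ onto $\{|H_0| \le R\}$, the operators $1 - P_R$ converge to $0$ in the $*$-strong topology, so $\norm{W(i-H_0)^{-1}(1-P_R)} \to 0$ by compactness of $W(i-H_0)^{-1}$. This allows the relative bound to be made arbitrarily small.

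Next I would establish that $\mu \mapsto \sigma_I(H_\mu)$ is locally constant on $[0,1]$. Fix $\mu_0 \in [0,1]$, write $H_\mu = H_{\mu_0} + (\mu-\mu_0) W$, and apply Theorem \ref{prop:stabilitysmall} with base point $H_{\mu_0}$ and perturbation $(\mu-\mu_0) W$. This requires checking two properties. For boundedness of $(i \pm H_{\mu_0})^{-1} W$, the second resolvent identity gives
\begin{align*}
(i \pm H_{\mu_0})^{-1} W = \bigl(1 \mp \mu_0 (i \pm H_0)^{-1} W\bigr)^{-1} (i \pm H_0)^{-1} W,
\end{align*}
where the bracketed factor is invertible by self-adjointness of $H_{\mu_0}$ (equivalently, by the Fredholm alternative applied to the compact operator $(i \pm H_0)^{-1} W$), so $(i \pm H_{\mu_0})^{-1} W$ is in fact compact. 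For $\Phi_0(H_{\mu_0}) W$ trace-class, I would decompose
\begin{align*}
\Phi_0(H_{\mu_0}) W = \Phi_0(H_0) W + \bigl[\Phi_0(H_{\mu_0}) - \Phi_0(H_0)\bigr] W,
\end{align*}
where the first term is trace-class by hypothesis, and the second is handled by inserting the Helffer--Sj\"ostrand representation for the difference and then factoring the resolvents through an auxiliary cutoff $\Phi_{00}(H_0)$ with $\Phi_{00} \equiv 1$ on $\supp \Phi_0$, so that the trace-class factor $\Phi_0(H_0) W$ (or the strongly decaying operator $\Phi_{00}(H_0)$ from Lemma \ref{lemma:decayPhiH}) can be extracted. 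Since $[0,1]$ is connected, local constancy of $\sigma_I$ gives $\sigma_I(H_1) = \sigma_I(H_0)$.

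The main obstacle is the trace-class verification for $\Phi_0(H_{\mu_0}) W$. Compactness of $(i \pm H_0)^{-1} W$ alone makes the naive Helffer--Sj\"ostrand integrand a product of two merely compact operators, which need not be trace-class. Exposing a trace-class factor requires carefully decomposing the resolvents using an energy cutoff adapted to $\supp \bar{\partial}\tilde{\Phi}_0$, and exploiting both the hypothesis that $\Phi_0(H_0)W$ is trace-class and the symbolic decay of $\Phi(H_0)$ established in Lemma \ref{lemma:decayPhiH}. Once this trace-class property is secured, the result follows almost mechanically from the local application of Theorem \ref{prop:stabilitysmall} combined with connectedness of $[0,1]$.
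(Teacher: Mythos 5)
Your self-adjointness argument (relative bound zero via relative compactness and Kato--Rellich) is a fine alternative to the paper's Fredholm-alternative argument. The gap is in the core of your strategy: to apply Theorem \ref{prop:stabilitysmall} at an intermediate base point $H_{\mu_0}$ you must verify that $\Phi_0(H_{\mu_0})W$ is trace-class, and you correctly identify this as ``the main obstacle'' but do not actually overcome it. The only trace-class object the hypotheses provide is $\Phi_0(H_0)W$ (and its adjoint $W\Phi_0(H_0)$), anchored at the \emph{unperturbed} operator. Writing $\Phi_0(H_{\mu_0})W = \Phi_0(H_0)W + \bigl[\Phi_0(H_{\mu_0})-\Phi_0(H_0)\bigr]W$ and inserting Helffer--Sj\"ostrand, the integrand of the second term is $(z-H_0)^{-1}\mu_0 W (z-H_{\mu_0})^{-1}W$. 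The energy-cutoff decomposition you propose does not rescue this: if you split $(z-H_0)^{-1}=(z-H_0)^{-1}\chi(H_0)+(z-H_0)^{-1}(1-\chi(H_0))$, the first piece can be reduced to the trace-class factor $\Phi_0(H_0)W$ only if $\supp\chi\subset\{\Phi_0=1\}$, while uniform boundedness of $(z-H_0)^{-1}(1-\chi(H_0))$ on $\supp\tilde{\Phi}_0$ requires $\chi=1$ on a neighborhood of $\supp\Phi_0$; these two requirements are incompatible for a smooth compactly supported $\Phi_0$. Whatever piece of the resolvent is not localized to the good energy window leaves you with a factor of the form $W(z-H_{\mu_0})^{-1}W$, which under the stated hypotheses is merely compact, not trace-class. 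So the local-constancy step is not established, and the connectedness argument has nothing to propagate.

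The paper avoids this entirely by never leaving the base point $H_0$: it applies the criterion of Theorem \ref{thm:sc} directly to the pair $(H_0,H_1)$. The only place smallness of $\mu$ entered the proof of Theorem \ref{prop:stabilitysmall} was in inverting $1-\Theta_0$ with $\Theta_0:=\Phi_0(H_1)-\Phi_0(H_0)$; all trace-class factors in \eqref{eq:diff}, \eqref{eq:rhs1}, \eqref{eq:rhs2} are of the form $\Phi(H_0)W$ or $W\Phi_0(H_0)$ and thus covered by the hypothesis. When $(i\pm H_0)^{-1}W$ is compact, $\Theta_0$ is compact by \eqref{eq:HS}, so one writes $\Theta_0=\Theta_{00}+\Theta_{01}$ with $\Theta_{00}$ of finite rank and $\norm{\Theta_{01}}<1$; then $\Theta(1-\Theta_{01})=\Theta(1-\Theta_0)+\Theta\Theta_{00}$ is trace-class and $(1-\Theta_{01})^{-1}$ is bounded, so $\Theta:=\Phi(H_1)-\Phi(H_0)$ is trace-class without any smallness assumption. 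If you want to keep a homotopy flavor, you would need to restructure your argument along these lines rather than reusing Theorem \ref{prop:stabilitysmall} at moving base points.
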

\begin{proof}
We again verify the hypotheses of Theorem \ref{thm:sc}.
The fact that
$(i \pm H_0)^{-1} W$ is compact implies that $H_1 = H_0+W$ is self-adjoint with domain of definition $\mathcal{D} (H_1) = \mathcal{D} (H_0)$.
Indeed, the fact that $W$ is symmetric implies that the kernel of $i + H_1$ is trivial. The Fredholm alternative and our compactness assumption imply that the dimension of the kernel of $1 + (i+H_0)^{-1} W$ is equal to the codimension of its range.
But $1 + (i+H_0)^{-1} W = (i+H_0)^{-1} (i+H_1)$ with $(i+H_0)^{-1} : \mathcal{H} \rightarrow \mathcal{D} (H_0)$ a bijection, and thus $\codim \Ran (i+H_1) = \dim \ker (i+H_1) = 0$.
The same argument also shows that
$\codim \Ran (i-H_1) = \dim \ker (i-H_1) = 0$.
We conclude by \cite[Theorem VIII.3]{RS} that $H_1$ is self-adjoint.

We now prove the necessary trace-class properties.
Let $\Phi \in \{\varphi ' ,\Psi \}$.
By \eqref{eq:diff}, \eqref{eq:rhs1} and \eqref{eq:rhs2}, we have that $\Theta (1-\Theta_0)$ is trace-class, where $\Theta:= \Phi (H_1) - \Phi (H_0)$ and $\Theta_0:= \Phi_0 (H_1) - \Phi_0 (H_0)$. We know that $\Theta_0$ is compact by \eqref{eq:HS} and our assumption that $W$ is relatively compact with respect to $H_0$.
Thus there exist $\Theta_{00}$ and $\Theta_{01}$ such that $\Theta_{00}$ has finite rank, $\norm{\Theta_{01}} < 1$, and $\Theta_0 = \Theta_{00} + \Theta_{01}$. Since $\Theta$ is bounded, it follows that $\Theta (1-\Theta_{01}) = \Theta (1-\Theta_0) + \Theta \Theta_{00}$ is trace-class.
Applying $(1-\Theta_{01})^{-1}$ to both sides (on the right), we conclude that $\Theta$ is trace-class.
The proof that $[H_1, P] \Phi (H_1)$ is trace-class is 
the same as in Theorem \ref{prop:stabilitysmall} (the smallness condition on $\mu$ was not used for this).
\end{proof}

Using Lemmas \ref{lemma:resolvent} and \ref{lemma:decayPhiH}, we see that for any $s \in \mathbb{R}$ and $p>0$,
$W \in \Op (S (\aver{\xi, \zeta - A_2 (x)}\aver{x, \zeta}^s \aver{y}^{-1-p}))$ satisfies the assumptions of Theorem \ref{prop:stabilitysmall} and 
$W \in \Op (S (\aver{\xi, \zeta - A_2 (x)}^{1-p} \aver{x}^{-p} \aver{y}^{-1-p}))$ satisfies the assumptions of Theorem \ref{prop:stabilitycompact}.
Note that the assumption that $\Phi_0 (H_0) W$ is trace-class forces the symbol of $W$ to decay in $y$, as the symbol of $H_0$ is independent of this variable.
We now introduce a stability result that no longer requires this decay in $y$.
\begin{theorem}\label{prop:stabilitystrip}
Let $W = W(x,y)$ be a symmetric point-wise multiplication operator in $S(1)$ that is compactly supported in $x$.
Then $\sigma_I (H+W) = \sigma_I (H)$.
\end{theorem}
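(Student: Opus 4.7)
The path $H_\mu := H + \mu W$, $\mu \in [0,1]$, consists of self-adjoint operators with common domain $\mathcal{D}(H)$ since $W$ is bounded and symmetric. A key structural observation is that $W(x,y)$ and $P(y)$ commute as multiplication operators, so $[H_\mu, P] = [H, P] = -iP'(y)\sigma_2$ for every $\mu$.

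My first step is to verify that $\sigma_I(H_\mu)$ is well-defined by extending Lemmas \ref{lemma:resolvent} and \ref{lemma:decayPhiH} to $H_\mu$. Since $W \in \Op(S(1))$ has bounded iterated commutators against arbitrary linear forms, Beals's criterion goes through unchanged to give $(z-H_\mu)^{-1} \in \Op(S(\aver{\xi, \zeta - A_2(x)}^{-1}))$. Because $\sigma + \mu W - \sigma_\pm$ still vanishes for $|x|$ sufficiently large (outside $\supp_x W$), the Helffer-Sj\"ostrand argument used in Lemma \ref{lemma:decayPhiH} also extends, yielding $\Phi(H_\mu) \in \Op(S(\aver{x, \xi, \zeta}^{-\infty}))$ for any $\Phi \in \mathcal{C}^\infty_c(E_1, E_2)$. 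Combined with the compact $y$-support of $P'$, this makes $[H_\mu, P]\varphi'(H_\mu) = -iP'(y)\sigma_2\varphi'(H_\mu)$ trace class, so $\sigma_I(H_\mu)$ is well-defined.

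To prove $\sigma_I(H+W) = \sigma_I(H)$, I would approximate $W$ by $W_n := \chi_n(y)W$ where $\chi_n \in \mathcal{C}^\infty_c(\mathbb{R})$ satisfies $\chi_n = 1$ on $|y| \le n$. Each $W_n$ has compact support in both $x$ and $y$, so $\Phi_0(H)W_n$ is trace class and $(i \pm H)^{-1}W_n$ is compact (by the composition calculus applied to $\Phi_0(H) \in \Op(S(\aver{x,\xi,\zeta}^{-\infty}))$ and $W_n \in \Op(S(\aver{x,y}^{-\infty}))$ from Lemmas \ref{lemma:resolvent} and \ref{lemma:decayPhiH}). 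Theorem \ref{prop:stabilitycompact} therefore yields $\sigma_I(H + W_n) = \sigma_I(H)$ for every $n$.

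The remaining step, and the main obstacle, is to show
\begin{equation*}
\sigma_I(H + W) - \sigma_I(H + W_n) = \Tr P'(y)\sigma_2 \left(\varphi'(H+W) - \varphi'(H+W_n)\right) \longrightarrow 0 \quad \text{as } n \to \infty.
\end{equation*}
By Helffer-Sj\"ostrand the difference $\varphi'(H+W) - \varphi'(H+W_n)$ contains the factor $W - W_n = (1-\chi_n(y))W$ which is supported in $|y| \ge n$ and compactly in $x$; after cyclicity, the trace takes the form of an integral against $(1-\chi_n(y))W$ sandwiching $P'(y)$ between resolvents of $H+W$ and $H+W_n$. The hard part is extracting the off-diagonal decay needed between the disjoint supports $\{|y| \le R\}$ of $P'$ and $\{|y| \ge n\}$ of $(1-\chi_n)W$: the non-elliptic resolvents individually do not have kernel decay in $|y - y'|$, but after the $z$-integration the effective kernels of $\varphi'(H+W)$ and $\varphi'(H+W_n)$ are Schwartz in $y - y'$ (from the extended Lemma \ref{lemma:decayPhiH}), providing the decay that drives the trace to zero as $n \to \infty$.
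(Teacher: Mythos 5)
Your overall strategy coincides with the paper's: truncate $W$ in $y$ by a cutoff, apply Theorem \ref{prop:stabilitycompact} to the truncated perturbation, and then show that the difference of conductivities vanishes in the limit. The gap is in the last step, which you yourself flag as ``the main obstacle'' but do not actually close. The quantity to control is
$\Tr\, P'(y)\sigma_2\bigl(\varphi'(H+W)-\varphi'(H+W_n)\bigr)$, and by Helffer--Sj\"ostrand the difference is an integral of $(z-H_W)^{-1}(1-\chi_n)W(z-H_n)^{-1}$. Your claim that ``after the $z$-integration the effective kernels of $\varphi'(H+W)$ and $\varphi'(H+W_n)$ are Schwartz in $y-y'$'' does not help here: Lemma \ref{lemma:decayPhiH} (even extended to $H_\mu$) describes the kernels of the individual operators $\varphi'(H_W)$ and $\varphi'(H_n)$, not of the sandwiched resolvent expression, and in any case the trace is evaluated on the diagonal $y=y'$ inside $\supp P'$, so off-diagonal decay in $y-y'$ is not the relevant quantity. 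What you actually need is a locality statement: that a perturbation supported in $\{|y|\ge n\}$ has small effect on the diagonal kernel near $y=0$. For resolvents at $z$ near the real axis this requires a Combes--Thomas-type estimate, which is neither stated in the paper nor supplied by you, and would need care given the non-ellipticity of $H$ and the loss of $y$-translation-invariance after adding $W$.

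The paper avoids this entirely with a commutator trick that your argument is missing. Taking the slowly varying cutoff $\chi_\eps(y)=\chi(\eps y)$ (rather than just ``$\chi_n=1$ on $|y|\le n$''), one writes
$(z-H_W)^{-1}(1-\chi_\eps)=(1-\chi_\eps)(z-H_W)^{-1}+(z-H_W)^{-1}[H_W,\chi_\eps](z-H_W)^{-1}$.
Multiplying on the left by $[H,P]=-iP'(y)\sigma_2$, the first term dies because $[H,P](1-\chi_\eps)=0$ once $\supp P'\subset\{\chi_\eps=1\}$, and the second carries the factor $[H_W,\chi_\eps]=[H,\chi_\eps]=-i\eps(\chi')_\eps\sigma_2$, which is $O(\eps)$ in norm. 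Combining this $\Op(\eps S(1))$ bound with the uniform-in-$\eps$ decay $\varphi'(H_\eps)\in\Op(S(\aver{x,\xi,\zeta}^{-\infty}))$ and interpolating gives trace-norm smallness, hence $\sigma_I(H_W)-\sigma_I(H_\eps)\to 0$. If you replace your sharp truncation by this dilated cutoff and insert the commutator identity, your proof closes along the paper's lines; as written, the convergence of the trace is asserted rather than proved.
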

\begin{proof}
Note that $H+W$ is self-adjoint with domain of definition $\mathcal{D} (H+W) = (i-(H+W))^{-1} \mathcal{H}$, where $\mathcal{H} = L^2 (\mathbb{R}^2) \otimes \mathbb{C}^2$ \cite{thaller2013dirac}.
Hence $\varphi ' (H+W)$ is well defined by the functional calculus.

Fix $\eps > 0$,
let $\chi \in \mathcal{C}^\infty_c (\mathbb{R})$ and define $\chi_\eps (y) := \chi (\eps y)$.
Then $\chi_\eps W$ satisfies the assumptions of Theorem \ref{prop:stabilitycompact} and thus $\sigma_I (H + \chi_\eps W) = \sigma_I (H)$. Since $W$ and $\chi_\eps W$ commute with $P$, we can write
\begin{align*}
    \sigma_I (H_W) - \sigma_I (H_\eps) = \Tr i [H,P] (\varphi ' (H_W) - \varphi ' (H_\eps)),
\end{align*}
where $H_W := H+W$ and $H_\eps := H+\chi_\eps W$.
The Helffer-Sj\"ostrand formula implies that
\begin{align*}
    \varphi ' (H_W) - \varphi ' (H_\eps) = \frac{1}{\pi} \int_{\mathbb{C}} \bar{\partial} \tilde{\varphi '} (z) (z-H_W)^{-1} (1-\chi_\eps)W (z-H_\eps)^{-1} d^2 z.
\end{align*}
For $\eps$ sufficiently small so that $[H,P] (1-\chi_\eps) = 0$, we thus have
\begin{align*}
    [H,P] (\varphi ' (H_W) &- \varphi ' (H_\eps)) =\\
    &\frac{1}{\pi} [H,P]\int_{\mathbb{C}} \bar{\partial} \tilde{\varphi '} (z) (z-H_W)^{-1} [H_W,\chi_\eps] (z-H_W)^{-1} W (z-H_\eps)^{-1} d^2 z.
\end{align*}
Since $[H_W, \chi_\eps] = [H,\chi_\eps] = -i \eps (\chi ')_\eps \sigma_2$ with $(\chi ')_\eps (y) = \chi ' (\eps y)$, it is clear that $$[H,P] (\varphi ' (H_W) - \varphi ' (H_\eps)) \in \Op (\eps S (1)).$$
Since $\chi_\eps W \in \mathcal{C}^\infty_c (\mathbb{R}^2)$, the proof of Lemma \ref{lemma:decayPhiH} 
implies that $\varphi ' (H_\eps) \in \Op (S (\aver{x,\xi,\zeta})^{-\infty})$ uniformly in $\eps$.
By interpolation, we conclude that $\sigma_I (H_W) - \sigma_I (H_\eps) \rightarrow 0$ as $\eps \rightarrow 0$, as desired.
\end{proof}

We have provided three stability results, each of which has a different assumption on the perturbation $W$. We now compare these assumptions using three illustrative examples.

\textbf{Example 1.}
Suppose $W = W(y) = \aver{y}^{-1-\delta}$ for some $\delta > 0$. Then Lemma \ref{lemma:decayPhiH} and the composition calculus imply that for any $p>0$, $\Phi_0 (H_0)W \in \Op (S (\aver{y}^{-1-\delta}\aver{x,\xi, \zeta}^{-p}))$.
It follows that the symbol of $\Phi_0 (H_0) W$ is integrable, meaning that $\Phi_0 (H_0) W$ is trace-class. Since $W$ is bounded, we see that $W$ satisfies the assumptions of Theorem \ref{prop:stabilitysmall}. This means $\sigma_I (H+\mu W) = \sigma_I (H)$ for all $\mu > 0$ sufficiently small. But note that $(i + H_0)^{-1} W$ is not compact, as its symbol does not decay in $\aver{x,\zeta}$. Thus Theorem \ref{prop:stabilitycompact} does not apply.
It is also clear that Theorem \ref{prop:stabilitystrip} does not apply since $W$ is independent of $x$.

\textbf{Example 2.}
Suppose $W = W(x,y) = a \aver{x,y}^{-2-\delta}$ for some positive constants $a$ and $\delta$.
As above it follows that $\Phi_0 (H_0) W$ is trace-class.
Moreover, since $W$ decays in both $x$ and $y$, the composition calculus implies that $(i \pm H_0)^{-1} W \in\Op (S (\aver{x,y,\xi,\zeta}^{-2}))$, meaning that $(i \pm H_0)^{-1} W$ is compact. Thus $W$ satisfies the assumptions of Theorem \ref{prop:stabilitycompact}.
Although $W$ also satisfies the assumptions of Theorem \ref{prop:stabilitysmall}, the latter cannot be used to prove $\sigma_I (H+W) = \sigma_I (H)$ when $a$ is sufficiently large.
Again it is clear that Theorem \ref{prop:stabilitystrip} does not apply, as $W$ is not compactly supported in $x$.

\textbf{Example 3.}
Suppose $W(x) = W \in \mathcal{C}^\infty_c (\mathbb{R})$, so that Theorem \ref{prop:stabilitystrip} applies. Then the symbol of $\Phi_0 (H_0) W$ does not decay in $y$, meaning that $\Phi_0 (H_0) W$ is not trace-class. Thus $W$ does not satisfy the assumptions of Theorem \ref{prop:stabilitysmall} or \ref{prop:stabilitycompact}.

\section*{Acknowledgment.} This research was partially supported by the U.S. National Science Foundation under Grants DMS-1908736 and EFMA-1641100.

\appendix
\section{Pseudo-differential calculus and the Helffer-Sj\"ostrand formula}\label{sec:PsiDO}
We briefly define the notation used in Section \ref{sec:stability} regarding pseudo-differential operators ($\Psi$DOs). For a more detailed exposition, see \cite[section 2]{QB} and references therein.

Given a 
symbol $a(x,\xi) =a\in \mathcal{S}' (\mathbb{R}^d \times \mathbb{R}^d) \otimes \mathbb{M}_n$, we define
the Weyl quantization of $a$ as the operator 
\begin{align}\label{eq:weylquant}
    \Op (a) \psi (x) :=
    \frac{1}{(2\pi)^d} \int_{\mathbb{R}^{2d}}
    e^{i(x-y)\cdot \xi}
    a(\frac{x+y}{2}, \xi) \psi (y) dy d\xi,
    \qquad
    \psi \in \mathcal{S} (\mathbb{R}^d) \otimes \mathbb{C}^n.
\end{align}
Here, $\mathcal{S}$ denotes the Schwartz space and $\mathbb{M}_n$ the space of Hermitian $n \times n$ matrices.

A function $u: \mathbb{R}^{2d} \rightarrow [0,\infty)$ is called an \emph{order function} if there exist constants $C_0 > 0$, $N_0 > 0$ such that
$u(X) \le C_0 \aver{X-Y}^{N_0} u(Y)$ for all $X,Y \in \mathbb{R}^{2d}$. Here we use the notation $\aver{X} := \sqrt{1 + |X|^2}$.
Note that if $u_1$ and $u_2$ are order functions, then so is $u_1 u_2$.

We say that $a \in S(u)$ for $u$ an order function if
for every $\alpha \in \mathbb{N}^{2d}$, there exists $C_\alpha > 0$ such that $|\partial^\alpha a (X)| \le C_\alpha u(X)$
for all $X \in \mathbb{R}^{2d}$.
We write $S(u^{-\infty})$ to denote the intersection over $s \in \mathbb{N}$ of $S(u^{-s})$.

By \cite[Chapter 7]{DS}, we know that if $a \in S(u_1)$ and $b \in S(u_2)$, then $\Op (c) := \Op (a) \Op (b)$ is a $\Psi$DO, with 
\begin{align*}
    c (x,\xi) =
    (a \sharp b) (x,\xi) :=
    \Big( e^{\frac{i}{2} (\partial_x \cdot \partial_\zeta - \partial_y \cdot \partial_\xi)}
    a(x,\xi) b(y,\zeta) \Big)
    \Big \vert_{y=x,\zeta=\xi}
\end{align*}
and $c \in S(u_1 u_2)$.
We write $A \in \Op (S (u))$ to mean that $A = \Op (a)$ for some $a\in S(u)$.

Given $\phi \in \mathcal{C}^\infty_0 (\mathbb{R})$, there exists an almost analytic extension $\tilde{\phi} \in \mathcal{C}^\infty_0 (\mathbb{C})$ that satisfies
\begin{align} \label{aae}
    |\bar{\partial} \tilde{\phi}| \le C_N |\Im z|^N, &\quad 
    N \in \{0,1,2,\dots\}; \qquad
    \tilde{\phi} (\lambda) = \phi(\lambda), \quad \lambda \in \mathbb{R}.
\end{align}
We now recall \cite[Theorem 8.1]{DS}.
If $H$ is a self-adjoint operator on a Hilbert space, then
\begin{align} \label{HSformula}
    \phi (H) =
    -\frac{1}{\pi} \int \bar{\partial} \tilde{\phi} (z) (z-H)^{-1} d^2 z,
\end{align}
where $\bar{\partial} := \frac{1}{2} \partial_{\Re z} + \frac{i}{2} \partial_{\Im z}$ and
$d^2 z$ is the Lebesgue measure on $\mathbb{C}$.
(\ref{HSformula}) is known as the Helffer-Sj\"ostrand formula.

\bibliographystyle{siam}
\bibliography{refs} 

\end{document}